\numberwithin{equation}{section}
\crefname{section}{§}{§§}
\Crefname{section}{§}{§§}
\newtheorem{lemma}{Lemma}
\newtheorem{theorem}{Theorem}
\newtheorem{defn}{Definition}
\newtheorem{conj}{Conjecture}
\newtheorem{coro}{Corollary}
 \def\p{\partial}
\def\0{{(0)}}
\def\1{{(1)}}
\def\2{{(2)}}
\def\<{\langle }
\def\>{\rangle }
\newcommand{\bea}{\begin{eqnarray}}
\newcommand{\eea}{\end{eqnarray}}
  \let\over=\@@over \let\overwithdelims=\@@overwithdelims
  \let\atop=\@@atop \let\atopwithdelims=\@@atopwithdelims
  \let\above=\@@above \let\abovewithdelims=\@@abovewithdelims
\renewcommand\section{\@startsection {section}{1}{\z@}%
                                   {-3.5ex \@plus -1ex \@minus -.2ex}
                                   {2.3ex \@plus.2ex}%
                                   {\normalfont\large\bfseries}}
\renewcommand\subsection{\@startsection{subsection}{2}{\z@}%
                                     {-3.25ex\@plus -1ex \@minus -.2ex}%
                                     {1.5ex \@plus .2ex}%
                                     {\normalfont\bfseries}}
\renewcommand{\H}{\mathcal{H}}
\newcommand{\beq}{\begin{equation}}
\newcommand{\eeq}{\end{equation}}
\newcommand{\beqa}{\begin{eqnarray}}
\newcommand{\eeqa}{\end{eqnarray}}
\newcommand{\beqar}{\begin{eqnarray*}}
\def\[{\big[}
\def\]{\big]}
\def\la{\langle}
\def\ra{\rangle}
\def\p{\partial}
\colorlet{dblue}{blue!70!black}
\newcommand\xy{{\langle xy \rangle}}
\newcommand{\pd}{\partial}
\newcommand\be{\begin{equation}}
\newcommand\ee{\end{equation}}
\newcommand{\ba}{\begin{eqnarray}}
\newcommand{\ea}{\end{eqnarray}}
\newcommand{\lb}{\left(}
\newcommand{\rb}{\right)}
\newcommand{\lsb}{\left[}
\newcommand{\rsb}{\right]}
\newcommand{\ket}[1]{|{#1}\rangle}
\DeclareMathOperator{\area}{area}
\DeclareMathOperator{\vol}{vol}
\newcommand{\GN}{4G_{\rm N}}
\def\Res{\text{Res}}
\def\interior#1{%
  {\kern0pt#1}^{\mathrm{o}}%
}
\newcommand{\AB}{inner-su\-perbal\-anced\xspace}
\def\trick{Flow extension lemma}
\begin{document}
\begin{spacing}{1.3}
\begin{titlepage}

\renewcommand*{\thefootnote}{\fnsymbol{footnote}}

\ \\
\vspace{-2.3cm}
\begin{center}

{\LARGE{\textsc{Bit Threads and Holographic Monogamy}}}

\vspace{0.5cm}
Shawn X. Cui,$^1$\footnote{Present affiliation: Department of Mathematics, Virginia Tech, Blacksburg, VA 24061, USA} Patrick Hayden,$^1$ Temple He,$^2$\footnote{Present affiliation: Center for Quantum Mathematics and Physics (QMAP), Department of Physics, University of California, Davis, CA 95616, USA} Matthew Headrick,$^{3,4}$ Bogdan Stoica,$^{3,5}$ and Michael Walter$^6$

\vspace{5mm}

{\small
\textit{
$^1$Stanford Institute for Theoretical Physics, Stanford University, Stanford, CA 94305,~USA }\\

\vspace{2mm}

\textit{
$^2$Center for the Fundamental Laws of Nature, Harvard University, Cambridge, MA 02138, USA }\\

\vspace{2mm}

\textit{$^3$Martin A. Fisher School of Physics, Brandeis University, Waltham, MA 02453, USA}\\

\vspace{2mm}

\textit{$^4$Center for Theoretical Physics, Massachusetts Institute of Technology, Cambridge, MA 02139, USA}\\

\vspace{2mm}

\textit{
$^5$Department of Physics, Brown University, Providence, RI 02912, USA}\\

\vspace{2mm}

\textit{
$^6$Korteweg-de Vries Institute for Mathematics, Institute of Physics, Institute for Logic, Language \& Computation, and QuSoft, University of Amsterdam, 1098 XG Amsterdam, The Netherlands}\\

\vspace{4mm}

{\tt xingshan@vt.edu,   phayden@stanford.edu, tmhe@ucdavis.edu, headrick@brandeis.edu, bstoica@brandeis.edu, m.walter@uva.nl}

\vspace{0.3cm}
}

\end{center}

\begin{abstract}
Bit threads provide an alternative description of holographic entanglement, replacing the Ryu-Takayanagi minimal surface with bulk curves connecting pairs of boundary points.
We use bit threads to prove the monogamy of mutual information (MMI) property of holographic entanglement entropies.
This is accomplished using the concept of a so-called multicommodity flow, adapted from the network setting, and tools from the theory of convex optimization.
Based on the bit thread picture, we conjecture a general ansatz for a holographic state, involving only bipartite and perfect-tensor type entanglement, for any decomposition of the boundary into four regions.
We also give new proofs of analogous theorems on networks.
\end{abstract}

\vfill

\begin{flushleft}{\small BRX-6330, Brown HET-1764, MIT-CTP/5036}\end{flushleft}

\end{titlepage}

\setcounter{tocdepth}{2}
\tableofcontents

\renewcommand*{\thefootnote}{\arabic{footnote}}
\setcounter{footnote}{0}

\section{Introduction}\label{sec:intro}

One of the most important relationships between holographic gravity and entanglement is the Ryu-Takayanagi (RT) formula, which states that entanglement entropy of a region in the boundary conformal field theory (CFT) is dual to a geometric extremization problem in the bulk~\cite{Ryu:2006bv,Ryu:2006ef}.
Specifically, the formula states that the entropy of a spatial region $A$ on the boundary CFT is given by
\begin{align}
	S(A) = \frac{1}{\GN}\area(m(A))\ ,
\end{align}
where $m(A)$ is a minimal hypersurface in the bulk homologous to $A$. This elegant formula is essentially an anti-de Sitter (AdS) cousin of the black hole entropy formula, but more importantly, it is expected to yield new insights toward how entanglement and quantum gravity are connected~\cite{VanRaamsdonk:2010pw,Maldacena:2013xja}.

Despite the fact that the RT formula has been a subject of intense research for over a decade, there are still many facets of it that are only now being discovered. Indeed, only recently was it demonstrated that the geometric extremization problem underlying the RT formula can alternatively be interpreted as a flow extremization problem~\cite{Freedman:2016zud,Headrick:2017ucz}. By utilizing the Riemannian version of the max flow-min cut theorem, it was shown that the maximum flux out of a boundary region $A$, optimized over all divergenceless bounded vector fields in the bulk, is precisely the area of~$m(A)$. Because this interpretation of the RT formula suggests that the vector field captures the maximum information flow out of region $A$, the flow lines in the vector field became known as ``bit threads.'' These bit threads are a tangible geometric manifestation of the entanglement between $A$ and its complement.

Although bit threads paint an attractive picture that appears to capture more intuitively the information-theoretic meaning behind holographic entanglement entropy, there is still much not understood about them. They were used to provide alternative proofs of subadditivity and strong subadditivity in~\cite{Freedman:2016zud}, but the proof of the monogamy of mutual information (MMI) remained elusive. MMI is an inequality which, unlike subadditivity and strong subadditivity, does not hold for general quantum states, but is obeyed for holographic systems in the semiclassical or large $N$ limit. It is given by
\begin{equation}
\label{I3isneg}
\begin{aligned}
-I_3(A:B:C) :=\ &S(AB) + S(AC) + S(BC) \\
&-S(A) -S(B) - S(C) - S(ABC)   \ge 0\ .
\end{aligned}
\end{equation}
The quantity~$-I_3$ is known as the (negative) tripartite information, and property~\eqref{I3isneg} was proven in~\cite{Hayden:2011ag,Headrick:2013zda} using minimal surfaces.\footnote{MMI was also proven in the covariant setting in~\cite{Wall:2012uf}.} While MMI is a general fact about holographic states, the reason for this from a more fundamental viewpoint is not clear.
Presumably, such states take a special form which guarantees MMI (cf.~\cite{Nezami:2016zni,ding2016conditional}).
What is this form?
It was suggested in~\cite{Freedman:2016zud} that understanding MMI from the viewpoint of bit threads may shed some light on this question.

In this paper we will take up these challenges. First, we will provide a proof of MMI based on bit threads. Specifically, we show that, given a decomposition of the boundary into regions, there exists a thread configuration that simultaneously maximizes the number of threads connecting each region to its complement. MMI follows essentially directly from this statement.\footnote{V.\ Hubeny has given a method to explicitly construct such a thread configuration, thereby establishing MMI, in certain cases~\cite{Hubeny:2018bri}.} This theorem is the continuum analogue of a well-known result in the theory of multicommodity flows on networks. However, the standard network proof is discrete and combinatorial in nature and is not straightforwardly adapted to the continuum. Therefore, we develop a new method of proof based on strong duality of convex programs. Convex optimization proofs have the advantage that they work in essentially the same way on graphs and Riemannian manifolds, whereas the graph proofs standard in the literature often rely on integer edge capacities, combinatorics, and other discrete features, and do not readily translate over to the continuous case.\footnote{Conversely, when additional structure is present, such as integer capacity edges in a graph, the statements that can be proven are often slightly stronger than what can be proven in the absence of such extra structure, e.g.\ by also obtaining results on the integrality of the flows.} The convex optimization methods offer a unified point of view for both the graph and Riemannian geometry settings, and are a stand-alone mathematical result. As far as we know, these are the first results on multicommodity flows on Riemannian manifolds.

Second, we use the thread-based proof of MMI to motivate a particular entanglement structure for holographic states, which involves pairwise-entangled states together with a four-party state with perfect-tensor entanglement (cf.~\cite{Nezami:2016zni}). MMI is manifest in this ansatz, so if it is correct then it explains why holographic states obey MMI\@.

It has also been proven that holographic entropic inequalities exist for more than four boundary regions~\cite{Bao:2015bfa}. For example, MMI is part of a family of holographic entropic inequalities with dihedral symmetry in the boundary regions. These dihedral inequalities exist for any odd number of boundary regions, and for five regions other holographic inequalities are also known. However, the general structure of holographic inequalities for more than five boundary regions is currently not known. It would be interesting to try to understand these inequalities from the viewpoint of bit threads. In this paper, we make a tentative suggestion for the general structure of holographic states in terms of the extremal rays of the so-called holographic entropy cone.

We organize the paper in the following manner. In Section~\ref{sec:bbthrev}, we give the necessary background on holographic entanglement entropy, flows, bit threads, MMI, and related notions. In Section~\ref{sec:continuum}, we state the main theorem in this paper concerning the existence of a maximizing thread configuration on multiple regions and show that MMI follows from it. In Section~\ref{sec:statedecomp}, we use bit threads and the proof of MMI to motivate the conjecture mentioned above concerning the structure of holographic states. In Section~\ref{sec:proofs}, we prove our main theorem as well as a useful generalization of it. Section~\ref{sec:discrete} revisits our continuum results in the graph theoretic setting, demonstrating how analogous arguments can be developed there. In Section~\ref{sec:conclusion} we discuss open issues.

\section{Background}\label{sec:bbthrev}

\subsection{Ryu-Takayanagi formula and bit threads}\label{sec:RT}

We begin with some basic concepts and definitions concerning holographic entanglement entropies. In this paper, we work in the regime of validity of the Ryu-Takayanagi formula, namely a conformal field theory dual to Einstein gravity in a state represented by a classical spacetime with a time-reflection symmetry. The Cauchy slice invariant under the time reflection is a Riemannian manifold that we will call $\mathcal M$. We assume that a cutoff has been introduced ``near'' the conformal boundary so that $\mathcal M$ is a compact manifold with boundary. Its boundary $\partial \mathcal M$ is the space where the field theory lives.

It is sometimes convenient to let the bulk be bounded also on black hole horizons, thereby representing a thermal mixed state of the field theory. However, for definiteness in this paper we will consider only pure states of the field theory, and correspondingly for us $\partial \mathcal M$ will \emph{not} include any horizons.\footnote{$\mathcal M$ may have an ``internal'' boundary $\mathcal{B}$ that does not carry entropy, such as an orbifold fixed plane or end-of-the-world brane. This is accounted for in the Ryu-Takayanagi formula~\eqref{RT} by defining the homology to be relative to $\mathcal{B}$, and in the max flow formula~\eqref{maxflow} by requiring the flow $v^\mu$ to satisfy a Neumann boundary condition $n_\mu v^\mu=0$ along $\mathcal{B}$, and in the bit thread formula~\eqref{maxflow} by not allowing threads to end on $\mathcal{B}$. See~\cite{Headrick:2017ucz} for a fuller discussion. While we will not explicitly refer to internal boundaries in the rest of this paper, all of our results are valid in the presence of such a boundary.\label{internalboundary}} This assumption is without loss of generality, since it is always possible to purify a thermal state by passing to the thermofield double, which is represented holographically by a two-sided black hole.

Let $A$ be a region of $\partial \mathcal M$. The Ryu-Takayanagi formula~\cite{Ryu:2006bv,Ryu:2006ef} then gives its entropy $A$ as the area of the minimal surface in $\mathcal M$ homologous to $A$ (relative to $\partial A$):
\begin{equation}\label{RT}
S(A) = \frac1\GN\min_{m\sim A}\area(m)\ .
\end{equation}
(We could choose to work in units where $\GN=1$, and this would simplify certain formulas, but it will be useful to maintain a clear distinction between the microscopic Planck scale $G_{\rm N}^{1/(d-1)}$ and the macroscopic scale of $\mathcal M$, defined for example by its curvatures.) We will denote the minimal surface by $m(A)$\footnote{The minimal surface is generically unique. In cases where it is not, we let $m(A)$ denote any choice of minimal surface.} and the corresponding homology region, whose boundary is $A\cup m(A)$, by $r(A)$.
The homology region is sometimes called the ``entanglement wedge'', although strictly speaking the entanglement wedge is the causal domain of the homology region.

\subsubsection{Flows}

The notion of bit threads was first explored in~\cite{Freedman:2016zud}. To explain them, we first define a \emph{flow}, which is a vector field $v$ on $\mathcal M$ that is divergenceless and has norm bounded everywhere by $1/\GN$:\footnote{Flows are equivalent, via the Hodge star, to $(d-1)$-calibrations \cite{HL}: the $(d-1)$-form $\omega=*(4G_{\rm N}g_{\mu\nu}v^\mu dx^\nu)$ is a calibration if and only if the vector field $v$ is a flow. See \cite{Bakhmatov:2017ihw} for further work using calibrations to compute holographic entanglement entropies.}
\begin{align}
\label{flowdef}
	\nabla\cdot v = 0\ , \quad |v| \leq \frac1\GN\ .
\end{align}
For simplicity we denote the flux of a flow $v$ through a boundary region $A$ by $\int_A v$:
\begin{equation}\label{flux}
\int_A v:=\int_A\sqrt h\,\hat n\cdot v\ ,
\end{equation}
where $h$ is the determinant of the induced metric on $A$ and $\hat n$ is the (inward-pointing) unit normal vector. The flow $v$ is called a max flow on $A$ if the flux of $v$ through $A$ is maximal among all flows. We can then write the entropy of $A$ as the flux through $A$ of a max flow:
\begin{equation}\label{maxflow}
S(A) = \max_{v\text{ flow}}\int_A v\ .
\end{equation}

The equivalence of~\eqref{maxflow} to the RT formula~\eqref{RT} is guaranteed by the Riemannian version of the max flow-min cut theorem~\cite{Federer74,MR700642,MR1088184,MR2685608,Headrick:2017ucz}:
\begin{equation}\label{mfmc}
\max_{v\text{ flow}}\int_A v = \frac1\GN\min_{m\sim A}\area(m)\ .
\end{equation}
The theorem can be understood heuristically as follows: by its divergencelessness, $v$ has the same flux through every surface homologous to $A$, and by the norm bound this flux is bounded above by its area. The strongest bound is given by the minimal surface, which thus acts as the bottleneck limiting the flow. The fact that this bound is tight is proven by writing the left- and right-hand sides of~\eqref{mfmc} in terms of convex programs and invoking strong duality to equate their solutions. (See~\cite{Headrick:2017ucz} for an exposition of the proof.) While the minimal surface $m(A)$ is typically unique, the maximizing flow is typically highly non-unique; on the minimal surface it equals $1/\GN$ times the unit normal vector, but away from the minimal surface it is underdetermined.

\subsubsection{Bit threads}

We can further rewrite~\eqref{maxflow} by thinking about the integral curves of a flow $v$, in the same way that it is often useful to think about electric and magnetic field lines rather than the vector fields themselves.
We can choose a set of integral curves whose transverse density equals $|v|$ everywhere.
In~\cite{Freedman:2016zud} these curves were called \emph{bit threads}.

The integral curves of a given vector field are oriented and locally parallel. It will be useful to generalize the notion of bit threads by dropping these two conditions.
Thus, in this paper, the threads will be unoriented curves, and we will allow them to pass through a given neighborhood at different angles and even to intersect.
Since the threads are not locally parallel, we replace the notion of \emph{transverse density} with simply \emph{density}, defined at a given point as the total length of the threads in a ball of radius $R$ centered on that point divided by the volume of the ball, where $R$ is chosen to be much larger than the Planck scale $G_{\rm N}^{1/(d-1)}$ and much smaller than the curvature scale of $\mathcal M$.%
\footnote{It is conceptually natural to think of the threads as being microscopic but discrete, so that for example we can speak of the \emph{number} of threads connecting two boundary regions.
To be mathematically precise one could instead define a thread configuration as a continuous set $\{\mathcal{C}\}$ of curves equipped with a measure $\mu$.
The density bound would then be imposed by requiring that, for every open subset $s$ of $\mathcal{M}$, $\int d\mu \,\text{length}(\mathcal{C}\cap s)\le\vol(s)/\GN$, and the ``number'' of threads connecting two boundary regions would be defined as the total measure of that set of curves.}
A \emph{thread configuration} is thus defined as a set of unoriented curves on $\mathcal{M}$ obeying the following rules:
\begin{enumerate}\label{thread rules}
\item Threads end only on $\partial \mathcal M$.
\item The thread density is nowhere larger than $1/\GN$.
\end{enumerate}
A thread can be thought of as the continuum analogue of a ``path'' in a network, and a thread configuration is the analogue of a set of edge-disjoint paths, a central concept in the analysis of network flows.

Given a flow $v$, we can, as noted above, choose a set of integral curves with density $|v|$; dropping their orientations yields a thread configuration. In the classical or large-$N$ limit $G_{\rm N}\to0$, the density of threads is large on the scale of $\mathcal M$ and we can neglect any discretization error arising from replacing the continuous flow $v$ by a discrete set of threads. Thus a flow maps essentially uniquely (up to the unimportant Planck-scale choice of integral curves) to a thread configuration. However, this map is not invertible: a given thread configuration may not come from any flow, since the threads may not be locally parallel, and even if such a flow exists it is not unique since one must make a choice of orientation for each thread. The extra flexibility afforded by the threads is useful since, as we will see in the next section, a single thread configuration can simultaneously represent several different flows. On the other hand, the flows are easier to work with technically, and in particular we will use them as an intermediate device for proving theorems about threads; an example is~\eqref{maxthread} below.

We denote the number of threads connecting a region $A$ to its complement $\bar A:=\partial \mathcal M\setminus A$ in a given configuration by $N_{A\bar A}$. We will now show that the maximum value of $N_{A\bar A}$ over allowed configurations is $S(A)$:
\begin{equation}\label{maxthread}
S(A) = \max N_{A\bar A}\ .
\end{equation}
First, we will show that $N_{A\bar A}$ is bounded above by the area of any surface $m\sim A$ divided by $\GN$. Consider a slab of thickness $R$ around $m$ (where again $R$ is much larger than the Planck length and much smaller than the curvature radius of $\mathcal{M}$); this has volume $R\area(m)$, so the total length of all the threads within the slab is bounded above by $R\area(m)/\GN$. On the other hand, any thread connecting $A$ to $\bar A$ must pass through $m$, and therefore must have length within the slab at least $R$. So the total length within the slab of all threads connecting $A$ to $\bar A$ is at least $RN_{A\bar A}$. Combining these two bounds gives
\begin{equation}
N_{A\bar A}\le\frac1\GN\area(m)\ .
\end{equation}
In particular, for the minimal surface $m(A)$,
\begin{equation}\label{Nbound}
N_{A\bar A}\le\frac1\GN\area(m(A))=S(A)\ .
\end{equation}
Again,~\eqref{Nbound} applies to \emph{any} thread configuration. On the other hand, as described above, given any flow $v$ we can construct a thread configuration by choosing a set of integral curves whose density equals $|v|$ everywhere. The number of threads connecting $A$ to $\bar A$ is at least as large as the flux of $v$ on $A$:
\begin{equation}\label{Nfluxbound}
N_{A\bar A}\ge\int_A v\ .
\end{equation}
The reason we don't necessarily have equality is that some of the integral curves may go from $\bar A$ to $A$, thereby contributing negatively to the flux but positively to $N_{A\bar A}$. Given~\eqref{Nbound}, however, for a max flow $v(A)$ this bound must be saturated:
\begin{equation}\label{Nboundtight}
N_{A\bar A}=\int_A v(A)=S(A)\ .
\end{equation}

The bit threads connecting $A$ to $\bar A$ are vivid manifestations of the entanglement between $A$ and $\bar A$, as quantified by the entropy $S(A)$. This viewpoint gives an alternate interpretation to the RT formula that may in many situations be more intuitive.
For example, given a spatial region $A$ on the boundary CFT, the minimal hypersurface homologous to $A$ does not necessarily vary continuously as $A$ varies: an infinitesimal perturbation of $A$ can result in the minimal hypersurface changing drastically, depending on the geometry of the bulk. Bit threads, on the other hand, vary continuously as a function of $A$, even when the bottleneck surface jumps.

Heuristically, it is useful to visualize each bit thread as defining a ``channel'' that allows for one bit of (quantum) information to be communicated between different regions on the spatial boundary. The amount of information that can be communicated between two spatially separated boundary regions is then determined by the number of channels that the bulk geometry allows between the two regions. Importantly, whereas the maximizing bit thread configuration may change depending on the boundary region we choose, the set of all allowable configurations is completely determined by the geometry.
The ``channel'' should be viewed as a metaphor, however, similar to how a Bell pair can be be viewed as enabling a channel in the context of teleportation.
While it is known that Bell pairs can always be distilled at an optimal rate~$S(A)$, we conjecture a more direct connection between bit threads and the entanglement structure of the the underlying holographic states, elaborated in Section~\ref{sec:statedecomp}.

\subsubsection{Properties and derived quantities}\label{sec:properties}

Many interesting properties of entropies and quantities derived from them can be written naturally in terms of flows or threads. For example, let $A$, $B$ be disjoint boundary regions, and let $v$ be a max flow for their union, so $\int_{AB}v=S(AB)$. Then we have, by~\eqref{maxflow}
\begin{equation}
S(A) \ge \int_A v\ ,\qquad S(B) \ge\int_B v\ ,
\end{equation}
hence
\begin{equation}
S(A)+S(B)\ge\int_A v+\int_B v=\int_{AB}v=S(AB)\ ,
\end{equation}
which is the subadditivity property.

A useful property of flows is that there always exists a flow that simultaneously maximizes the flux through $A$ and $AB$ (or $B$ and $AB$, but not in general $A$ and $B$). We call this the \emph{nesting property}, and it is proven in~\cite{Headrick:2017ucz}. Let $v_1$ be such a flow. (In the notation of \cite{Freedman:2016zud}, $v_1$ would be called $v(A,B)$.) We then obtain the following formula for the conditional information:
\begin{equation}
H(B|A):=S(AB)-S(A)=\int_{AB}v_1-\int_A v_1=\int_B v_1\ .
\end{equation}
We can also write this quantity in terms of threads. Let $C$ be the complement of $AB$, and let $N^1_{AB}$, $N^1_{AC}$, $N^1_{BC}$ be the number of threads connecting the different pairs of regions in the flow $v_1$.\footnote{In addition to the threads connecting distinct boundary regions, there may be threads connecting a region to itself or simply forming a loop in the bulk. These will not play a role in our considerations.} Using~\eqref{Nboundtight}, we then have
\begin{equation}
S(AB) = N^1_{AC}+N^1_{BC}\ ,\qquad
S(A) = N^1_{AC}+N^1_{AB}\ .
\end{equation}
(Note that we don't have a formula for $S(B)$ in terms of these threads, since the configuration does not maximize the number connecting $B$ to its complement.) Hence
\begin{equation}\label{threadCE}
H(B|A) = N^1_{BC} - N^1_{AB}\ .
\end{equation}
For the mutual information, we have
\begin{equation}
I(A:B):=S(A)+S(B)-S(AB) = \int_A\left(v_1-v_2\right) = N^2_{AB}+N^2_{BC}+N^1_{AB}-N^1_{BC}\ ,
\end{equation}
where $v_2$ is a flow with maximum flux through $B$ and $AB$, and $N^2_{ij}$ are the corresponding numbers of threads. In the next section, using the concept of a \emph{multiflow}, we will write down a more concise formula for the mutual information in terms of threads (see~\eqref{threadMI}).

The nesting property also allows us to prove the strong subadditivity property, $S(AB)+S(BC)\ge S(B)+S(ABC)$, where $A$, $B$, $C$ are disjoint regions. (Unlike in the previous paragraph, here $C$ is not necessarily the complement of $AB$, i.e.\ $ABC$ does not necessarily cover $\partial\mathcal M$.) Let $v$ be a flow that maximizes the flux through both $B$ and $ABC$. Then
\begin{equation}
S(AB)\ge \int_{AB}v\ ,\qquad
S(BC)\ge \int_{BC}v\ ,
\end{equation}
hence
\begin{equation}
S(AB)+S(BC) \ge \int_{AB}v+\int_{BC}v = \int_B v+\int_{ABC}v = S(B)+S(ABC)\ .
\end{equation}

\subsection{MMI, perfect tensors, and entropy cones}\label{sec:MMI}

\subsubsection{MMI}

Given three subsystems $A,B,C$ of a quantum system, the \emph{(negative) tripartite information} is defined as the following linear combination of the subsystem entropies:\footnote{One can alternatively work with the quantity $I_3$, defined as the negative of $-I_3$. However, when discussing holographic entanglement entropies, $-I_3$ is more convenient since it is non-negative.}
\begin{align}
-I_3(A:B:C) &:= S(AB) + S(BC) + S(AC) - S(A) - S(B) - S(C) - S(ABC) \nonumber \\
&= I(A:BC) - I(A:B) - I(A:C)\ .
\end{align}
The quantity~$-I_3$ is manifestly symmetric under permuting $A,B,C$. In fact it is even more symmetric than that; defining $D:=\overline{ABC}$, it is symmetric under the full permutation group on $A,B,C,D$. (Note that, by purity, $S(AD)=S(BC)$, $S(BD)=S(AC)$, and $S(CD)=S(AB)$.) Since in this paper we will mostly be working with a fixed set of 4 parties, we will usually simply write $-I_3$, without arguments.

We note that~$-I_3$ is sensitive only to fully four-party entanglement, in the following sense. If the state is unentangled between any party and the others, or between any two parties and the others, then $-I_3$ vanishes. In a general quantum system, it can be either positive or negative. For example, in the four-party GHZ state,
\begin{equation}\label{GHZ}
\ket{\psi}_{ABCD} = \frac1{\sqrt2}\left(\ket{0000}+\ket{1111}\right)\ ,
\end{equation}
it is negative: $-I_3=-\ln2$. On the other hand, for the following state (with $D$ being a 4-state system),
\begin{equation}
\ket{\psi}_{ABCD} = \frac12\left(\ket{0000}+\ket{0111}+\ket{1012}+\ket{1103}\right)
\end{equation}
it is positive: $-I_3=\ln2$.

$-I_3$ is also positive for a class of four-party states called perfect-tensor states, which will play an important role in our considerations. A perfect-tensor state is a pure state on $2n$ parties such that the reduced density matrix on any $n$ parties is maximally mixed. For four parties, this implies that all the one-party entropies are equal, and all the two-party entropies have twice that value~\cite{Nezami:2016zni}:
\begin{equation}\label{PT_cond}
S(A)=S(B)=S(C)=S(D)=S_0\ ,\quad
S(AB)=S(BC)=S(AC)=2S_0\ ,
\end{equation}
where $S_0>0$.
Hence
\begin{equation}
-I_3 = 2S_0>0\ .
\end{equation}
In this paper, we will use the term \emph{perfect tensor (PT)} somewhat loosely to denote a four-party pure state whose entropies take the form~\eqref{PT_cond} for some $S_0>0$, even if they are not maximal for the respective Hilbert spaces.

In a general field theory, with the subsystems $A,B,C$ being spatial regions, $-I_3$ can take either sign \cite{Casini:2008wt}. However, it was proven in~\cite{Hayden:2011ag,Headrick:2013zda} that the entropies derived from the RT formula always obey the inequality
\begin{equation}\label{MMI}
-I_3(A:B:C) \ge 0\ ,
\end{equation}
which is known as \emph{monogamy of mutual information} (MMI). The proof involved cutting and pasting minimal surfaces. In this paper we will provide a proof of MMI based on flows or bit threads. Since a general state of a four-party system does not obey MMI, classical states of holographic systems (i.e.\ those represented by classical spacetimes) must have a particular entanglement structure in order to always obey MMI\@. It is not known what that entanglement structure is, and another purpose of this paper is to address this question.

\subsubsection{Entropy cones}\label{sec:entropy cones}

A general four-party pure state has 7 independent entropies, namely the 4 one-party entropies~$S(A)$, $S(B)$, $S(C)$, $S(D)$, together with 3 independent two-party entropies, e.g.\ $S(AB)$, $S(AC)$, and $S(BC)$. This set of numbers defines an \emph{entropy vector} in~$\mathbb R^7$. There is a additive structure here because entropies add under the operation of combining states by the tensor product. In other words, if
\begin{equation}
\ket{\psi}_{ABCD} = \ket{\psi_1}_{A_1B_1C_1D_1}\otimes\ket{\psi_2}_{A_2B_2C_2D_2}\ ,
\end{equation}
with $\H_A=\H_{A_1}\otimes\H_{A_2}$ etc., then the entropy vector of $\ket{\psi}$ is the sum of those of $\ket{\psi_1}$ and $\ket{\psi_2}$. The inequalities that the entropies satisfy---non-negativity, subadditivity, and strong subadditivity---carve out a set of possible entropy vectors which (after taking the closure) is a convex polyhedral cone in $\mathbb R^7$, called the four-party \emph{quantum entropy cone}. Holographic states satisfy MMI in addition to those inequalities, carving out a smaller cone, called the four-party \emph{holographic entropy cone}~\cite{Bao:2015bfa}. It is a simple exercise in linear algebra to show that the six pairwise mutual informations together with $-I_3$ also form a coordinate system (or dual basis) for $\mathbb R^7$:
\begin{equation}\label{vecbasis}
I(A:B)\ ,\; \;
I(A:C)\ ,\; \;
I(A:D)\ ,\; \;
I(B:C)\ ,\; \;
I(B:D)\ ,\; \;
I(C:D)\ ,\; \;
-I_3\ .
\end{equation}
For any point in the holographic entropy cone, these 7 quantities are non-negative---the mutual informations by subadditivity, and $-I_3$ by MMI\@. In fact, the converse also holds. Since MMI and subadditivity imply strong subadditivity, any point in $\mathbb R^7$ representing a set of putative entropies such that all 7 linear combinations~\eqref{vecbasis} are non-negative also obeys all the other inequalities required of an entropy, and is therefore in the holographic entropy cone. In other words, using the 7 quantities~\eqref{vecbasis} as coordinates, the holographic entropy cone consists precisely of the non-negative orthant in $\mathbb R^7$.

Any entropy vector such that exactly one of the 7 coordinates~\eqref{vecbasis} is positive, with the rest vanishing, is an \emph{extremal vector} of the holographic entropy cone; it lies on a 1-dimensional edge of that cone. Since the cone is 7-dimensional, any point in the cone can be written uniquely as a sum of 7 (or fewer) extremal vectors, one for each edge. States whose entropy vectors are extremal are readily constructed: a state with $I(A:B)>0$ and all other quantities in~\eqref{vecbasis} vanishing is necessarily of the form $\ket{\psi_1}_{AB}\otimes\ket{\psi_2}_C\otimes\ket{\psi_3}_D$, and similarly for the other pairs; while a state with $-I_3>0$ and all pairwise mutual informations vanishing is necessarily a PT\@. It is also possible to realize such states, and indeed arbitrary points in the holographic entropy cone, by holographic states~\cite{Balasubramanian:2014hda,Bao:2015bfa}.

\begin{figure}
\centering
\includegraphics[scale=0.7]{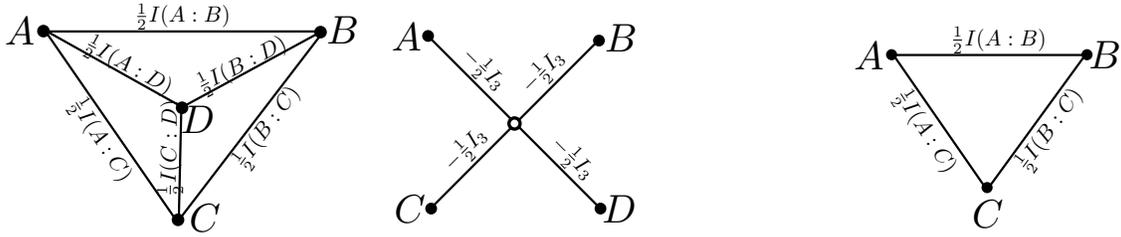}
\caption{Left: Skeleton graph (with two connected components) representing an arbitrary entropy vector in the four-party holographic entropy cone.
Right: Skeleton graph for a three-party entropy vector.}\label{fig:skeleton}
\end{figure}

The extremal rays can also be represented as simple graphs with external vertices $A,B,C,D$. For example, a graph with just a single edge connecting $A$ and $B$ with capacity $c$ gives an entropy vector with $I(A:B)=2c$ and all other coordinates in~\eqref{vecbasis} vanishing. Similarly, a star graph with one internal vertex connected to all four external vertices and capacity $c$ on each edge gives an entropy vector with $-I_3=2c$ and all other coordinates in~\eqref{vecbasis} vanishing. Since any vector in the holographic cone can be uniquely decomposed into extremal rays, it is reproduced by a (unique) ``skeleton graph'' consisting of the complete graph on $\{A,B,C,D\}$ with capacity $\frac12I(A:B)$ on the edge connecting $A$ and $B$ and similarly for the other pairs, plus a star graph with capacity $-\frac12I_3$ on each edge. This is shown in Fig.~\ref{fig:skeleton}.

Let us briefly discuss the analogous situation for states on fewer or more than four parties. For a three-party pure state, there are only 3 independent entropies (since $S(AB)=S(C)$ etc.), so the entropy vector lives in $\mathbb R^3$. Holographic states obey no extra entropy inequalities beyond those obeyed by any quantum state, namely non-negativity and subadditivity, so the holographic entropy cone is the same as the quantum entropy cone. A dual basis is provided by the three mutual informations,
\begin{equation}
I(A:B)\ ,\quad I(B:C)\ ,\quad I(A:C)\ .
\end{equation}
There are 3 types of extremal vectors, given by two-party entangled pure states $\ket{\psi_1}_{AB}\otimes\ket{\psi_2}_C$ etc. Thus the skeleton graph for three parties is simply a triangle, shown in Fig.~\ref{fig:skeleton}.

Given a decomposition of the boundary into four regions, we can merge two of the regions, say $C$ and $D$, and thereby consider the same state as a three-party pure state. Under this merging, the four-party skeleton graph on the left side of Fig.~\ref{fig:skeleton} turns into the three-party one on the right side as follows. The star graph splits at the internal vertex to become two edges, an $A(CD)$ edge and a $B(CD)$ edge, each with capacity $-\frac12I_3$. The first merges with the $AC$ and $AD$ edges from the four-party complete graph to give an $A(CD)$ edge with total capacity
\begin{equation}
-\frac12I_3+\frac12I(A:C)+\frac12I(A:D) = \frac12I(A:CD)\ .
\end{equation}
Similarly for the $B(CD)$ edges. The $AB$ edge remains unchanged and the $CD$ edge is removed. This rearrangement will play a role in our considerations of Section~\ref{sec:statedecomp}.

The situation for five and more parties was studied in~\cite{Bao:2015bfa,Cuenca:2019uzx}. For five-party pure states, there are no new inequalities beyond MMI and the standard quantum ones (non-negativity, subadditivity, strong subadditivity). There are 20 extremal vectors of the holographic entropy cone, given by 10 two-party entangled pure states, 5 four-party PTs, and 5 six-party PTs with two of the parties merged (e.g.\ a PT on $A_1A_2BCDE$ with $A=A_1A_2$). Since the cone is only 15-dimensional, the decomposition of a generic point into a sum of extremal vectors is \emph{not} unique, unlike for three or four parties. For six-party pure states, there are new inequalities; a complete list of inequalities was conjectured in~\cite{Bao:2015bfa} and proved in \cite{Cuenca:2019uzx}. Notable is the fact that the extremal rays are no longer only made from perfect tensors; rather, new entanglement structures come into play. For more than six parties, some new inequalities are known but a complete list has not even been conjectured.

\section{Multiflows and MMI}\label{sec:continuum}

As we reviewed in Subsection~\ref{sec:properties}, the subadditivity and strong subadditivity inequalities can be proved easily from the formula~\eqref{maxflow} for the entropy in terms of flows. Subadditivity follows more or less directly from the definition of a flow, while strong subadditivity requires the nesting property for flows (existence of a simultaneous max flow for $A$ and $AB$). Holographic entanglement entropies also obey the MMI inequality~\eqref{MMI}, which was proven using minimal surfaces~\cite{Hayden:2011ag,Headrick:2013zda}. Therefore it seems reasonable to expect MMI to admit a proof in terms of flows. However, it was shown in~\cite{Freedman:2016zud} that the nesting property alone is not powerful enough to prove MMI\@. Therefore, flows must obey some property beyond nesting. In this section we will state the necessary property and give a flow-based proof of MMI\@. The property is the existence of an object called a \emph{max multiflow}. It is guaranteed by our Theorem~\ref{thm17}, stated below and proved in Section~\ref{sec:proofs}.

\subsection{Multiflows}\label{sec:multiflows}

It turns out that the property required to prove MMI concerns not a single flow, like the nesting property, but rather a collection of flows that are compatible with each other in the sense that they can simultaneously occupy the same geometry (we will make this precise below). In the network context, such a collection of flows is called a \emph{multicommodity flow}, or \emph{multiflow}, and there is a large literature about them. (See Section~\ref{sec:discrete} for the network definition of a multiflow. Standard references are~\cite{frank1997integer,schrijver2003combinatorial}; two resources we have found useful are~\cite{Chandra,Guyslain}.) We will adopt the same terminology for the Riemannian setting we are working in here. We thus start by defining a multiflow.

\begin{defn}[Multiflow]\label{defn:multi-flow}
Given a Riemannian manifold $\mathcal M$ with boundary $\p \mathcal M$, let $A_1, \ldots, A_n$ be non-overlapping regions of $\partial\mathcal M$ (i.e.\ for $i\neq j$, $A_i\cap A_j$ is codimension-1 or higher in $\partial\mathcal M$) covering $\partial\mathcal M$ ($\cup_i A_i=\p \mathcal M $). A \emph{multiflow} is a set of vector fields $v_{ij}$ on $\mathcal M$ satisfying the following conditions:
\begin{align}
v_{ij}&=-v_{ji} \label{antisym}\\
\hat n \cdot v_{ij}
&= 0
\text{ on }A_k\quad(k\neq i,j)\label{noflux}\\
\nabla\cdot v_{ij}&=0 \label{divergenceless}\\
\sum_{i < j}^n |v_{ij}| &\leq \frac1{\GN}\ .\label{normbound}
\end{align}
\end{defn}

Given condition~\eqref{antisym}, there are $n(n-1)/2$ independent vector fields. Given condition~\eqref{noflux}, $v_{ij}$ has nonvanishing flux only on the regions $A_i$ and $A_j$, and, by~\eqref{divergenceless}, these fluxes obey
\begin{equation}
\int_{A_i}v_{ij}=-\int_{A_j}v_{ij}\ .
\end{equation}
Given conditions~\eqref{divergenceless} and~\eqref{normbound}, each $v_{ij}$ is a flow by itself. However, an even stronger condition follows: any linear combination of the form
\begin{equation}\label{combinations}
v=\sum_{i<j}^n \xi_{ij}v_{ij}\ ,
\end{equation}
where the coefficients $\xi_{ij}$ are constants in the interval $[-1,1]$, is divergenceless and, by the triangle inequality, obeys the norm bound $|v|\le1/\GN$, and is therefore also a flow.

Given a multiflow $\{v_{ij}\}$, we can define the $n$ vector fields
\begin{equation}
v_i:=\sum_{j=1}^n v_{ij}\ ,
\end{equation}
each of which, by the above argument, is itself a flow. Hence its flux on the region $A_i$ is bounded above by its entropy:
\begin{equation}\label{fluxbound}
\int_{A_i}v_i\le S(A_i)\ .
\end{equation}
The surprising statement is that the bounds~\eqref{fluxbound} are collectively tight. In other words, there exists a multiflow saturating all $n$ bounds~\eqref{fluxbound} simultaneously. We will call such a multiflow a \emph{max multiflow}, and its existence is our Theorem~\ref{thm17}:

\begin{theorem}[Max multiflow]\label{thm17}
There exists a multiflow $\{v_{ij}\}$ such that for each $i$, the sum
\begin{equation}\label{heresvi}
v_i:= \sum_{j=1}^n v_{ij}
\end{equation}
is a max flow for $A_i$, that is,
\begin{equation}\label{goal}
\int_{A_i}v_i=S(A_i)\ .
\end{equation}
\end{theorem}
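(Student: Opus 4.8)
The plan is to prove Theorem~\ref{thm17} by the same convex-optimization philosophy used elsewhere in the paper: set up the ``max multiflow'' problem as a concave maximization over the space of multiflows $\{v_{ij}\}$ and exhibit a dual program whose optimal value is manifestly the sum $\sum_i S(A_i)$ (divided by $\GN$). Since each individual bound $\int_{A_i}v_i\le S(A_i)$ already holds for \emph{any} multiflow (this is~\eqref{fluxbound}, which follows from the fact that $v_i$ is itself a flow), it suffices to produce a single multiflow achieving $\sum_i\int_{A_i}v_i=\sum_i S(A_i)$; the per-region saturation then follows because no term can exceed its bound while the total is already maximal.

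\textbf{Key steps.} First, I would cast the primal problem: maximize $\sum_{i<j}\big(\int_{A_i}v_{ij}-\int_{A_j}v_{ij}\big)$ over vector fields $v_{ij}=-v_{ji}$ subject to the divergenceless conditions~\eqref{divergenceless}, the no-flux conditions~\eqref{noflux}, and the pointwise norm constraint $\sum_{i<j}|v_{ij}|\le 1/\GN$. The objective equals $\sum_i\int_{A_i}v_i$, which is concave (in fact linear) in the $v_{ij}$; the feasible set is convex. Second, introduce Lagrange multipliers: scalar functions $\phi_i$ on $\mathcal M$ for the divergence constraints $\nabla\cdot v_i=0$ (integrating by parts produces boundary terms $\int_{\partial\mathcal M}\phi_i\,\hat n\cdot v_i$ that combine with the objective), and a nonnegative function $\psi$ on $\mathcal M$ for the norm constraint. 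Third, dualize: the supremum over the $v_{ij}$, taken pointwise, forces a relation of the form $\sum_{i<j}|\nabla\phi_i-\nabla\phi_j|\le\psi$ together with the requirement that each $\phi_i$ restrict to a locally constant function on $\partial\mathcal M$ equal to $1$ on $A_i$ and $0$ on the other $A_k$ (so that the boundary term reproduces $\int_{A_i}v_i$). The dual problem then becomes: minimize $\frac1{\GN}\int_{\mathcal M}\psi$ over such collections $\{\phi_i\}$, i.e.\ minimize $\frac1{\GN}\int_{\mathcal M}\sum_{i<j}|\nabla\phi_i-\nabla\phi_j|$. Fourth, recognize that this dual is exactly a sum of relaxed min-cut problems: choosing $\phi_i$ to be the indicator (or a BV relaxation thereof) of a region $r_i\supseteq A_i$ with the $r_i$ pairwise-disjoint-up-to-boundary makes $|\nabla\phi_i-\nabla\phi_j|$ supported on the shared walls, and the total counts each wall once, so the dual value is bounded below by $\frac1\GN\sum_i\area(\partial r_i\cap\interior{\mathcal M})$ minimized over such nested partitions. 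Fifth, invoke strong duality (Slater's condition is easy: the zero multiflow is strictly feasible for the norm constraint) to equate primal optimum with dual optimum, and then argue the dual optimum equals $\sum_i S(A_i)$ --- the $\le$ direction because partitioning $\mathcal M$ into the RT homology regions $r(A_i)$ is feasible (here one needs that the minimal surfaces $m(A_i)$ for a \emph{covering} family of regions can be chosen nested/non-crossing, which is the standard nesting-of-minimal-surfaces fact), and the $\ge$ direction from~\eqref{fluxbound}.

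\textbf{Main obstacle.} The delicate point is the functional-analytic setup of the continuum convex program and the justification of strong duality: one must choose function spaces (BV or $L^\infty$ vector fields, $W^{1,1}$ or BV potentials) in which the primal is attained, the dual is attained, and Slater-type constfound conditions give zero duality gap --- the same technical care as in the max-flow/min-cut proof of~\cite{Headrick:2017ucz}, now with $n(n-1)/2$ coupled fields. The second subtle point, on the geometry side, is showing the dual optimum is \emph{exactly} $\sum_i S(A_i)$ and not merely bounded by it: this requires that when $\{A_i\}$ covers $\partial\mathcal M$, the homology regions $r(A_i)$ can be taken to tile $\mathcal M$ (equivalently, the minimal surfaces can be chosen mutually non-crossing), so that the optimal nested partition in the dual coincides with the collection of individual RT surfaces. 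I expect the bulk of the work --- and the part most likely to need a separate lemma --- to be this identification of the dual program's value with $\sum_i S(A_i)$ via a non-crossing/nesting argument for the family of minimal surfaces.
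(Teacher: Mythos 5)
Your overall strategy (cast the total flux $\sum_i\int_{A_i}v_i$ as a convex program, dualize, invoke Slater/strong duality) is the same as the paper's, but the execution has two genuine problems. First, the dualization is not the Lagrangian dual of the stated primal. The divergence constraints~\eqref{divergenceless} are one per pair, so they require $n(n-1)/2$ scalar multipliers $\psi_{ij}$ with boundary values $\mp 1$ on $A_i,A_j$ (as in~\eqref{dual}); using only $n$ potentials $\phi_i$ for the aggregated constraints $\nabla\cdot v_i=0$ either relaxes the primal (so at the end you would not obtain a genuine multiflow) or is not a valid pointwise dualization. Moreover, with a single multiplier for the collective norm bound~\eqref{normbound}, integrating out each $v_{ij}$ forces $\lambda\ge|\nabla\psi_{ij}|$ for \emph{each} pair separately (a max condition), not the sum condition $\sum_{i<j}|\nabla\phi_i-\nabla\phi_j|\le\psi$ you wrote; with your sum-form constraint the ``dual'' optimum generically exceeds $\sum_i S(A_i)$, and your wall-counting (``each wall once'') is also incorrect for it.

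Second, and more seriously, the key inequality is never established and your two ``directions'' both point the same way. What must be shown is $p^\star\ge\sum_i S(A_i)$, equivalently a \emph{lower} bound $d^\star\ge\sum_i S(A_i)$ valid for \emph{every} feasible dual configuration. Exhibiting a feasible dual point built from the RT homology regions only gives $d^\star\le\sum_i S(A_i)$, and~\eqref{fluxbound} gives $p^\star\le\sum_i S(A_i)$; neither yields the needed direction, so your argument is circular at exactly the step that carries all the content. The paper's proof supplies this missing piece without any nesting or non-crossing facts about minimal surfaces: for an arbitrary feasible $(\{\psi_{ij}\},\lambda)$ it defines $\phi_i(x)=\inf_{\mathcal C}\int_{\mathcal C}ds\,\lambda$ over curves from $A_i$ to $x$, shows $\phi_i+\phi_j\ge2$ so the regions $R_i=\{\phi_i<1\}$ are pairwise disjoint, and then uses $|\nabla\phi_i|=\lambda$ together with the fact that each level set of $\phi_i$ (for values in $(0,1)$) is homologous to $A_i$ to conclude $\int_{R_i}\sqrt g\,\lambda\ge S(A_i)$, hence $d^\star\ge\sum_i S(A_i)$. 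Incidentally, the geometric lemma you flag as the ``bulk of the work'' is both unnecessary and generally unavailable: for $n\ge3$ the homology regions $r(A_i)$ of a covering family are disjoint but typically do \emph{not} tile $\mathcal M$ (there is a leftover central region), so a proof that hinges on such a tiling would fail; your ``rounding'' of the dual to indicator/partition configurations is precisely the nontrivial step that the paper's level-set argument replaces.
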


Theorem 1 is a continuum version of a well-known theorem on multiflows on graphs, first formulated in~\cite{Kupershtokh1971} (although a correct proof wasn't given until~\cite{MR0437391,cherkassky1977}). However, the original graph-theoretic proof is discrete and combinatorial in nature and not easily adaptable to the continuum. Therefore, in Section~\ref{sec:proofs} we will give a continuum proof based on techniques from convex optimization. (This proof can be adapted back to the graph setting to give a proof there that is new as far as we know. We refer the reader to Section~\ref{sec:discrete}.) Furthermore, we emphasize that it should not be taken for granted that a statement that holds in the graph setting necessarily also holds on manifolds. In fact, we will give an example below of a graph-theoretic theorem concerning multiflows that is \emph{not} valid on manifolds.

A simple corollary of Theorem \ref{thm17} in the case $n=3$ is the nesting property for flows, mentioned in subsection \ref{sec:properties} above. This says that, given a decomposition of the boundary into regions $A,B,C$, there exists a flow $v(A,B)$ that is simultaneously a max flow for $A$ and for $AB$. 
In terms of the flows of Theorem \ref{thm17} (with $A_1=A$, $A_2=B$, $A_3=C$), such a flow is given by
\begin{equation}\label{vABdef}
v(A,B) = v_{AB}+ v_{AC}+v_{BC}\,.
\end{equation}
In \cite{Freedman:2016zud}, it was pointed out that, given a flow $v(A,B)$ and another one $v(B,A)$ which is a max flow for $B$ and $AB$, half of their difference
\begin{equation}
v(A:B):=\frac12(v(A,B)-v(B,A))\,,
\end{equation}
is a flow, and its flux gives half the mutual information:
\begin{equation}
\int_Av(A:B) = \frac12I(A:B)\,.
\end{equation}
Here, with $v(B,A)$ chosen similarly to \eqref{vABdef}, this flow is nothing but the component vector field $v_{AB}$ in the multiflow:
\begin{equation}
v(A:B) = v_{AB}\,.
\end{equation}
Furthermore, since $v_{AC}+v_{BC}$ is a max flow for $AB$, it must saturate on the RT surface $m(AB)=m(C)$, so $v_{AB}$ must vanish there. By the condition \eqref{noflux}, it also vanishes on $C$; therefore it can be set to zero in the homology region $r(C)$. With this choice, $v_{AB}$ is fully contained in the homology region $r(AB)$ (which is the complement of $r(C)$).

A more interesting corollary of Theorem \ref{thm17} is MMI.\footnote{While one may be tempted to similarly apply this theorem to $n$-party pure states for $n>4$ to potentially prove other holographic inequalities, such efforts have not been successful to date (but see footnote~\ref{locking footnote} on p.~\pageref{locking footnote}).}
Set $n=4$. Given a max multiflow $\{v_{ij}\}$, we construct the following flows:
\begin{align}
\begin{split}
	u_1 &:= v_{AC} + v_{AD} + v_{BC} + v_{BD} = \frac{1}{2}(v_A + v_B - v_C - v_D) \\
	u_2 &:= v_{AB} + v_{AD} + v_{CB} + v_{CD} = \frac{1}{2}(v_A - v_B +v_C - v_D) \\
	u_3 &:= v_{BA}+v_{BD}+v_{CA} + v_{CD} = \frac{1}{2}(-v_A + v_B + v_C - v_D)\ .
\end{split}
\end{align}
The second equality in each line follows from the condition~\eqref{antisym} and definition~\eqref{heresvi}.
Each $u_i$ is of the form~\eqref{combinations} and is therefore a flow, so its flux through any boundary region is bounded above by the entropy of that region. In particular,
\begin{align}\label{tworegions}
	S(AB) \geq \int_{AB} u_1\ , \quad S(AC) \geq \int_{AC} u_2\ , \quad S(BC) \geq \int_{BC}  u_3\ .
\end{align}
Summing these three inequalites and using~\eqref{goal} leads directly to MMI\@:
\begin{align}\label{MMIproved}
\begin{split}
	S(AB) + S(AC) + S(BC) &\ge \int_A(u_1+u_2)+\int_B(u_1+u_3)+\int_C(u_2+u_3) \\
&=	 \int_A v_A + \int_B v_B + \int_C  v_C - \int_{ABC}  v_D \\
	&= S(A)+S(B)+S(C)+S(D)\ .
\end{split}
\end{align}

The difference between the left- and right-hand sides of~\eqref{MMIproved} is $-I_3$, so (unless $-I_3$ happens to vanish) it is not possible for all of the inequalities~\eqref{tworegions} to be saturated for a given multiflow. However, Theorem~\ref{nmmf}, proved in Subsection~\ref{sec:nmmf}, shows as a special case that any single one of the inequalities~\eqref{tworegions} \emph{can} be saturated. For example, there exists a max multiflow such that
\begin{equation}\label{u1saturate}
\int_{AB}u_1 = S(AB)\ .
\end{equation}
In the graph setting, it can be shown that in fact any \emph{two} of the inequalities \eqref{tworegions} can be saturated \cite{Karzanov1978,Guyslain}; however, this is not in general true in the continuum.\footnote{We thank V. Hubeny for pointing this out to us. Futher details on this point will be presented elsewhere.}

\subsection{Threads}

\subsubsection{Theorem~\ref{thm17}}

We can also frame multiflows, Theorem~\ref{thm17}, and the proof of MMI in the language of bit threads. The concept of a multiflow is very natural from the viewpoint of the bit threads, since the whole set of flows $\{v_{ij}\}$ can be represented by a single thread configuration. Indeed, for each $v_{ij}$ ($i<j$) we can choose a set of threads with density equal to $|v_{ij}|$; given~\eqref{noflux}, these end only on $A_i$ or $A_j$. By~\eqref{Nfluxbound}, the number that connect $A_i$ to $A_j$ is at least the flux of $v_{ij}$:
\begin{equation}\label{ijbound}
N_{A_i A_j}\ge\int_{A_i}v_{ij}\ .
\end{equation}
Since the density of a union of sets of threads is the sum of their respective densities, by~\eqref{normbound} the union of these configurations over all $i<j$ is itself an allowed thread configuration. Note that this configuration may contain, in any given neighborhood, threads that are not parallel to each other, and even that intersect each other.

Now suppose that $\{v_{ij}\}$ is a max multiflow. Summing~\eqref{ijbound} over $j\neq i$ for fixed $i$ yields
\begin{equation}\label{N>flux}
\sum_{j\neq i}^n N_{A_i A_j}\ge\int_{A_i}v_i=S(A_i)\ .
\end{equation}
But, by~\eqref{Nbound}, the total number of threads connecting $A_i$ to all the other regions is also bounded \emph{above} by $S(A_i)$:
\begin{equation}\label{N<flux}
\sum_{j\neq i}^n N_{A_i A_j}\le S(A_i)\ .
\end{equation}
So the inequalities~\eqref{N>flux} and~\eqref{N<flux} must be saturated, and furthermore each inequality~\eqref{ijbound} must be individually saturated:
\begin{equation}
N_{A_i A_j} = \int_{A_i}v_{ij}\ .
\end{equation}
Thus, in the language of threads, Theorem~\ref{thm17} states that there exists a thread configuration such that, for all $i$,
\begin{equation}\label{MMFthread}
\sum_{j\neq i}^n N_{A_i A_j} = S(A_i)\ .
\end{equation}
We will call such a configuration a \emph{max thread configuration}.

We will now study the implications of the existence of a max thread configuration for three and four boundary regions.

\subsubsection{Three boundary regions}\label{sec:n=3}

For $n=3$, we have
\begin{equation}\label{n=3}
S(A) = N_{AB}+N_{AC}\ ,\qquad
S(B) = N_{AB}+N_{BC}\ ,\qquad
S(C) = N_{AC}+N_{BC}\ .
\end{equation}
Since $S(AB)=S(C)$, we find an elegant formula for the mutual information:
\begin{equation}\label{threadMI}
I(A:B) = 2N_{AB}\ .
\end{equation}
Thus, at least from the viewpoint of calculating the mutual information, it is as if each thread connecting $A$ and $B$ represents a Bell pair.\footnote{Strictly speaking, since a Bell pair has mutual information $2\ln2$, each thread represents $1/\ln2$ Bell pairs. If one really wanted each thread to represent one Bell pair, one could define the threads to have density $|v|/\ln2$, rather than $|v|$, for a given flow $v$.} Note that~\eqref{threadMI} also reestablishes the subadditivity property, since clearly the number of threads cannot be negative. Similarly to~\eqref{threadCE}, we also have for the conditional entropy,
\begin{equation}
H(B|A) = N_{BC}-N_{AB}\ .
\end{equation}

As mentioned in Subsection~\ref{sec:MMI}, the three mutual informations $I(A:B)$, $I(A:C)$, $I(B:C)$ determine the entropy vector in $\mathbb R^3$. Therefore, by~\eqref{threadMI} and its analogues, the thread counts $N_{AB}$, $N_{AC}$, $N_{BC}$ determine the entropy vector, and conversely are uniquely fixed by it. Thus, in the skeleton graph representation of the entropy vector shown in Fig.~\ref{fig:skeleton} (right side), we can simply put $N_{AB}$, $N_{AC}$, $N_{BC}$ as the capacities on the respective edges; in other words, the thread configuration ``is'' the skeleton graph.

\subsubsection{Four boundary regions}\label{n4}

For $n=4$, we have, similarly to~\eqref{n=3},
\begin{equation}\label{n=4}
\begin{split}
S(A) &= N_{AB}+N_{AC}+N_{AD}\\
S(B) &= N_{AB}+N_{BC}+N_{BD}\\
S(C) &= N_{AC}+N_{BC}+N_{CD}\\
S(D) &= N_{AD}+N_{BD}+N_{CD}\ .
\end{split}
\end{equation}
The entropies of pairs of regions, $S(AB)$, $S(AC)$, and $S(BC)$ also enter in MMI\@. A max thread configuration does not tell us these entropies, only the entropies of \emph{individual} regions. Nonetheless, for any valid thread configuration, we have the bound~\eqref{Nbound}. In particular, $S(AB)$ is bounded below by the total number of threads connecting $AB$ to $CD$:
\begin{equation}\label{ABbound}
S(AB) \ge N_{(AB)(CD)}=N_{AC}+N_{BC}+N_{AD}+N_{BD}\ .
\end{equation}
Similarly,
\begin{equation}\label{ACBCbound}
\begin{split}
S(AC) &\ge N_{AB}+N_{AD}+N_{BC}+N_{CD}\\
S(BC) &\ge N_{AB}+N_{AC}+N_{BD}+N_{CD}\ .
\end{split}
\end{equation}
Inequalities~\eqref{n=4}, \eqref{ABbound}, and~\eqref{ACBCbound} together  imply MMI.

\begin{figure}
\centering
\includegraphics[scale=0.75]{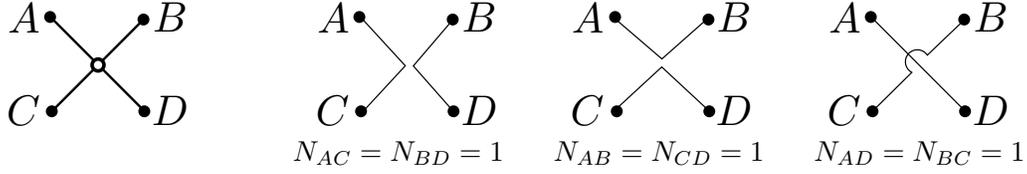}
\caption{
Left: Star graph with capacity 1 on each edge. The entropies derived from this graph are those of a perfect tensor, \eqref{PT_cond}, with $S_0=1$. In particular, all the pairwise mutual informations vanish and $-I_3=2$. Right: The three max thread configurations on this graph.
}\label{fig:xgraph}
\end{figure}

As we did for three parties, we can look at the mutual information between two regions. However, using~\eqref{n=4} and~\eqref{ABbound}, we now merely find a bound rather than an equality:
\begin{equation}\label{MIbound}
I(A:B) \le 2N_{AB}\ .
\end{equation}
Thus, in a four-party max configuration, each thread connecting $A$ and $B$ does not necessarily represent a ``Bell pair.'' To understand how this can occur, it is useful to look at a simple illustrative example, shown in Fig.~\ref{fig:xgraph}, which is a star graph where each edge has capacity 1. It is easy to evaluate the entropies of the single vertices and pairs. One finds that they have the form~\eqref{PT_cond} with $S_0=1$; in other words, this graph represents a perfect tensor. In particular, all pairwise mutual informations vanish, while $-I_3=2$. As shown in Fig.~\ref{fig:xgraph}, there are three max thread configurations. Each such configuration has two threads, which connect the external vertices in all possible ways.

The above example highlights the fact that, unlike for $n=3$, the thread counts $N_{A_i A_j}$ are not determined by the entropies: in~\eqref{n=4} there are only 4 equations for the 6 unknown $N_{A_i A_j}$, while the entropies of pairs of regions only impose the \emph{inequality} constraints~\eqref{ABbound}, \eqref{ACBCbound} on the $N_{A_i A_j}$. However, based on Theorem~\ref{nmmf} as summarized above (\ref{u1saturate}), we know that there exists a max thread configuration that saturates~\eqref{ABbound} and therefore~\eqref{MIbound}. The same configuration has $I(C:D)=2N_{CD}$. Similarly, there exists a (in general different) max configuration such that $I(A:C)=2N_{AC}$ and $I(B:D)=2N_{BD}$, and yet another one such that $I(A:D)=2N_{AD}$ and $I(B:C)=2N_{BC}$.
In summary, $\frac12I(A_i:A_j)$ is the minimal number of threads connecting $A_i$ and $A_j$, while~$-I_3$ is the total number of ``excess'' threads in any configuration:
\begin{equation}
-I_3 = \sum_{i<j}^n \left(N_{A_i A_j}-\frac12I(A_i:A_j)\right)\ .
\end{equation}
These $-I_3$ many threads are free to switch how they connect the different regions, in the manner of Fig.~\ref{fig:xgraph}.

So far we have treated the $n=3$ and $n=4$ cases separately, but they are related by the operation of merging boundary regions. For example, given the four regions $A$, $B$, $C$, $D$, we can consider $CD$ to be a single region, effectively giving a three-boundary decomposition. Under merging, not every four-party max thread configuration becomes a three-party max configuration. For example, in the case illustrated in Fig.~\ref{fig:xgraph}, if we consider $CD$ as a single region then, since $S(CD)=2$, any max thread configuration must have two threads connecting $CD$ to $AB$. Thus, the middle configuration, with $N_{AB}=N_{CD}=1$, is excluded as a three-party max configuration.

\section{State decomposition conjecture}\label{sec:statedecomp}

In this section we consider the thread configurations discussed in the previous section for different numbers of boundary regions. Taking seriously the idea that the threads represent entanglement in the field theory, we now ask what these configurations tell us about the entanglement structure of holographic states. We will consider in turn decomposing the boundary into two, three, four, and more regions.

For two complementary boundary regions $A$ and $B$, the number of threads connecting $A$ and $B$ in a max configuration is $N_{AB}=S(A)=S(B)$, so in some sense each thread represents an entangled pair of qubits with one qubit in $A$ and the other in $B$.
Of course, these qubits are not spatially localized in the field theory---in particular they are \emph{not} located at the endpoints of the thread---since even in a max configuration the threads have a large amount of freedom in where they attach to the boundary.

For three boundary regions, as discussed in Subsection~\ref{sec:n=3}, the max thread configuration forms a triangle, with the number of threads on the $AB$ edge fixed to be $N_{AB}=\frac12I(A:B)$ and similarly with the $AC$ and $BC$ edges. If we take this picture seriously as a representation of the entanglement structure of the state itself, it suggests that the state contains only bipartite entanglement. In other words, there is a decomposition of the $A,B,C$ Hilbert spaces,
\begin{equation}
\H_A = \H_{A_1}\otimes\H_{A_2}\ ,\qquad
\H_B = \H_{B_1}\otimes\H_{B_3}\ ,\qquad
\H_C = \H_{C_2}\otimes\H_{C_3}\ ,
\end{equation}
(again, this is \emph{not} a spatial decomposition) such that the full state decomposes into a product of three bipartite-entangled pure states:
\begin{equation}\label{threeregion}
\ket{\psi}_{ABC} = \ket{\psi_1}_{A_1B_1}\otimes\ket{\psi_2}_{A_2C_2}\otimes\ket{\psi_3}_{B_3C_3}\ ,
\end{equation}
each of which carries all the mutual information between the respective regions:
\begin{align}
S(A_1)&=S(B_1)=\frac12I(A:B)\nonumber \\
S(A_2)&=S(C_2)=\frac12I(A:C)\\
S(B_3)&=S(C_3)=\frac12I(B:C)\ .\nonumber
\end{align}
(Of course, any of the factors in \eqref{threeregion} can be trivial, if the corresponding mutual information vanishes.)

So far the picture only includes bipartite entanglement. For four boundary regions, however, with only bipartite entanglement we would necessarily have $-I_3=0$, which we know is not always the case in holographic states. Furthermore, as we saw in Subsection~\ref{n4}, even in a max thread configuration, the number of threads in each group, say $N_{AB}$, is not fixed. We saw that there is a minimal number $\frac12I(A_i:A_j)$ of threads connecting $A_i$ and $A_j$, plus a number $-I_3$ of ``floating'' threads that can switch which pairs of regions they connect. This situation is summarized by the skeleton graph of Fig.~\ref{fig:skeleton}, which includes six edges connecting pairs of external vertices with capacities equal to half the respective mutual informations, plus a star graph connecting all four at once with capacity $-\frac12I_3$. The star graph has perfect-tensor entropies. This suggests that the state itself consists of bipartite-entangled pure states connecting pairs of regions times a four-party perfect tensor:
\begin{equation}
\label{fourregion}
\begin{aligned}
\ket{\psi}_{ABCD} &= \ket{\psi_1}_{A_1B_1}\otimes
\ket{\psi_2}_{A_2C_2}\otimes
\ket{\psi_3}_{A_3D_3}\otimes
\ket{\psi_4}_{B_4C_4}\otimes
\ket{\psi_5}_{B_5D_5}\otimes
\ket{\psi_6}_{C_6D_6} \\
&\otimes \ket{PT}_{A_7B_7C_7D_7}\ .
\end{aligned}
\end{equation}
(Again, any of these factors can be trivial.) This is the simplest ansatz for a four-party pure state consistent with what we know about holographic entanglement entropies. In this ansatz the MMI property is manifest.

The conjectures \eqref{threeregion}, \eqref{fourregion} for the form of the state for three and four regions respectively are in fact equivalent. The four-region conjecture implies the three-region one, either by taking one of the regions to be empty or by merging two of the regions. A four-party perfect tensor, under merging two of the parties, factorizes into two bipartite-entangled states. For example, if we write $C'=CD$, then the bipartite-entangled factors in~\eqref{fourregion} clearly take the form~\eqref{threeregion}, while the perfect-tensor factor splits into bipartite-entangled pieces:
\begin{equation}
\ket{PT}_{A_7B_7C_7D_7} = \ket{\psi_1'}_{A_7C'_{7,1}}\otimes\ket{\psi'_2}_{B_7C'_{7,2}}
\end{equation}
for some decomposition $\mathcal H_{C'_7} = \mathcal H_{C_7} \otimes \mathcal H_{D_7} \cong \mathcal H_{C'_{7,1}} \otimes
 \mathcal H_{C'_{7,2}}$, as follows from the fact that ${I(A_7:B_7)}_{\ket{PT}}=0$.

Conversely, the three-region decomposition implies the four-region one, as follows \cite{Nezami:2016zni}. Suppose a pure state on $ABCD$ contains only bipartite entanglement when any two parties are merged. For example, when merging $C$ and $D$, it has only bipartite entanglement, and in particular contains an entangled pure state between part of $A$ and part of $B$, with entropy $I(A:B)/2$. There is a similar pure state shared between any pair of parties. These factors carry all of the pairwise mutual information, so what is left has vanishing pairwise mutual informations and is therefore a four-party PT (or, if $-I_3$ vanishes, a completely unentangled state).

We remind the reader that, throughout this paper, we have been working in the classical, or large-$N$, limit of the holographic system, and we emphasize that the state decomposition conjectures stated above should be understood in this sense.
Thus we are not claiming that the state takes the form~\eqref{threeregion} or~\eqref{fourregion} exactly, but rather only up to corrections that are subleading in $1/N$.
If we consider, for example, a case where $I(A:B)=0$ at leading order, such as where $A$ and $B$ are well-separated regions, the three-party decomposition~\eqref{threeregion} would suggest that $\rho_{AB}=\rho_A\otimes\rho_B$.
However, even in this case $I(A:B)$ could still be of order $O(1)$, so we should not expect this decomposition to hold approximately in any norm, but rather in a weaker sense.

Support for these conjectures comes from tensor-network toy models of holography~\cite{pastawski2015holographic,hayden2016holographic,Nezami:2016zni}. Specifically, it was shown in~\cite{Nezami:2016zni} that random stabilizer tensor network states at large $N$ indeed have the form~\eqref{threeregion}, \eqref{fourregion} at leading order in $1/N$.
More precisely, these decompositions hold provided one traces out $O(1)$ many degrees of freedom in each subsystem.
In other words, there are other types of entanglement present (such as GHZ-type entanglement), but these make a subleading contribution to the entropies. We believe that it would be interesting to prove or disprove the state decomposition conjectures~\eqref{threeregion}, \eqref{fourregion}, as well as to sharpen them by clarifying the possible form of the $1/N$ corrections.

Finally, we note that it is straightforward to generalize~\eqref{threeregion} and~\eqref{fourregion} to more than four regions. Namely, we can conjecture that for $n$ parties, $\ket{\psi}_{A_1\dots A_n}$ decomposes into a direct product of states, each realizing an extremal ray in the $n$-party holographic entropy cone. (Note that the above procedure of using the three-party decomposition to remove bipartite entanglement between any two parties also works in the $n$-party case, but what is left is not necessarily an extremal vector as it is for $n=4$.) A new feature that arises for $n>4$, as mentioned in Subsection~\ref{sec:entropy cones}, is that a generic vector in the holographic entropy cone no longer admits a unique decomposition into extremal rays. Therefore the amount of entropy carried by each factor in the state decomposition cannot be deduced just from the entropy vector, but would require some more fine-grained information about the state. Another new feature that arises for $n>5$ is that the extremal rays no longer arise only from perfect tensors; rather, new entanglement structures are involved. It would be interesting to explore whether the thread picture throws any light on these issues.

\section{Proofs}\label{sec:proofs}

In this section, we give proofs of our main results on the existence of multiflows in Riemannian geometries.
We are not claiming mathematical rigor, particularly when it comes to functional analytical aspects.
To simplify the notation, we set $\GN=1$ throughout this section.

\subsection{Theorem~\ref{thm17}}

For convenience, we repeat the definition of a multiflow and the statement of Theorem~\ref{thm17}.

\setcounter{defn}{0}

\begin{defn}[Multiflow]
Given a Riemannian manifold $\mathcal M$ with boundary\footnote{As mentioned in footnote \ref{internalboundary}, the case where $\mathcal{M}$ has an ``internal boundary'' $\mathcal{B}$ is also physically relevant. In this case, $\mathcal{B}$ not included in the decomposition into regions $A_i$, all flows are required to satisfy the boundary condition $n\cdot v=0$ on $\mathcal{B}$, and homology relations are imposed relative to $\mathcal{B}$. The reader can verify by following the proofs, with $\partial\mathcal{M}$ replaced by $\partial\mathcal{M}\setminus\mathcal{B}$, that all of our results hold in this case as well.} $\p \mathcal M$, let $A_1, \ldots, A_n$ be non-overlapping regions of $\partial\mathcal M$ (i.e.\ for $i\neq j$, $A_i\cap A_j$ is codimension-1 or higher in $\partial\mathcal M$) covering $\partial\mathcal M$ ($\cup_i A_i=\p \mathcal M $). A \emph{multiflow} is a set of vector fields $v_{ij}$ on $\mathcal M$ satisfying the following conditions:
\begin{align}
v_{ij}&=-v_{ji} \label{antisym2}\\
\hat n \cdot v_{ij}
&= 0
\text{ on }A_k\quad(k\neq i,j)\label{noflux2}\\
\nabla\cdot v_{ij}&=0 \label{divergenceless2}\\
\sum_{i < j}^n |v_{ij}| &\leq 1\label{normbound2}
\end{align}
\end{defn}

\setcounter{theorem}{0}

\begin{theorem}[Max multiflow]\label{thm172}
There exists a multiflow $\{v_{ij}\}$ such that for each $i$, the sum
\begin{equation}
v_i:= \sum_{j=1}^n v_{ij}
\end{equation}
is a max flow for $A_i$, that is,
\begin{equation}\label{goal2}
\int_{A_i}v_i=S(A_i)\ .
\end{equation}
\end{theorem}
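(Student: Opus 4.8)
The plan is to prove Theorem~\ref{thm172} by setting up the max multiflow problem as a convex program and invoking strong duality, in parallel with how the ordinary max flow-min cut theorem~\eqref{mfmc} is established in~\cite{Headrick:2017ucz}. Concretely, I would consider the optimization problem of maximizing $\sum_i \int_{A_i} v_i$ over all multiflows $\{v_{ij}\}$ (equivalently, maximizing the total flux $\sum_i \int_{A_i} v_i$ subject to \eqref{antisym2}--\eqref{normbound2}). Since each $v_i = \sum_j v_{ij}$ is itself a flow, the bound $\int_{A_i} v_i \le S(A_i)$ holds term by term, so the optimal value is at most $\sum_i S(A_i)$; the content of the theorem is that this upper bound is attained, i.e.\ that the optimum equals $\sum_i S(A_i)$. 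The strategy is to write down the Lagrangian dual of this convex program, show that the dual optimal value is $\sum_i S(A_i)$, and then conclude by strong duality (the program is convex with a suitable Slater-type interior point, e.g.\ $v_{ij}=0$, so there is no duality gap).

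The key steps, in order, would be: (i) formulate the primal program carefully, treating the divergenceless and Neumann-type boundary conditions \eqref{noflux2}, \eqref{divergenceless2} as linear constraints enforced by Lagrange multipliers (scalar fields $\phi_{ij}$ on $\mathcal M$, one per commodity, reflecting the homology/potential structure as in the single-commodity case), and the pointwise norm bound \eqref{normbound2} as a convex constraint handled via its support function; (ii) perform the Legendre/Fenchel transform in the $v_{ij}$ variables — because $\sum_{i<j}|v_{ij}|\le 1$ is a spectrahedral-type constraint, the dual will involve $\max_{i<j}|\nabla\phi_i - \nabla\phi_j + (\text{boundary terms})|$ evaluated pointwise, integrated over $\mathcal M$; (iii) recognize that the resulting dual program decouples, region by region, into something computing $\sum_i S(A_i)$ via the min-cut characterization: each dual configuration assigns to each commodity $(i,j)$ a function interpolating between values associated with $A_i$ and $A_j$, and the optimal dual solution is built from the indicator functions of the homology regions $r(A_i)$, so that the dual objective collapses to $\sum_i \frac1{\GN}\area(m(A_i)) = \sum_i S(A_i)$; (iv) invoke strong duality to equate primal and dual optima, and finally argue that an optimal primal $\{v_{ij}\}$ is a multiflow meeting \eqref{goal2} — since the total equals $\sum_i S(A_i)$ and each term is individually bounded by $S(A_i)$, all bounds are saturated simultaneously. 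I would also note the graph version follows by replacing integrals with sums over edges and $\nabla$ with the incidence map, giving the new combinatorial proof alluded to for Section~\ref{sec:discrete}.

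The main obstacle I anticipate is step~(iii): exhibiting an explicit dual feasible point whose objective equals $\sum_i S(A_i)$, and dually, checking that no better (smaller) dual value is possible. In the single-commodity case the dual min-cut is a single characteristic function of a homology region; here one must choose, for each pair $(i,j)$, a consistent ``potential difference'' — and the subtlety is that the natural candidates (differences of indicators $\mathbf 1_{r(A_i)}-\mathbf 1_{r(A_j)}$) must be shown to be simultaneously admissible and to make the pointwise constraint $\max_{i<j}|\cdots|\le 1$ tight exactly on the union of the minimal surfaces. This requires a careful argument that the minimal surfaces $m(A_i)$ can be chosen nested/compatibly (or at least that their contributions do not overcount), which is precisely the continuum analogue of the combinatorial core of the classical graph theorem. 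A secondary technical worry is the functional-analytic setup — existence of optimizers in an appropriate function space ($BV$ vector fields, or $L^\infty$ with distributional divergence) and validity of strong duality in infinite dimensions — but, consistent with the disclaimer at the start of Section~\ref{sec:proofs}, I would handle this at a physicist's level of rigor, appealing to the same framework already used for~\eqref{mfmc}.
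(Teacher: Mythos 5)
Your overall architecture is the same as the paper's: maximize $\sum_i\int_{A_i}v_i$ over multiflows, dualize the convex program, invoke Slater's condition for strong duality, and conclude that since each $\int_{A_i}v_i\le S(A_i)$ individually, a total flux of $\sum_i S(A_i)$ saturates all bounds at once. The genuine gap is in your step (iii), where the required inequality points the wrong way. Exhibiting an explicit dual-feasible configuration built from the indicators of the homology regions $r(A_i)$, with $\lambda$ supported on the minimal surfaces, only shows $d^\star\le\sum_i S(A_i)$; combined with strong duality that merely reproduces the trivial primal upper bound $p^\star\le\sum_i S(A_i)$ which you already have. What the theorem needs is the opposite bound, $d^\star\ge\sum_i S(A_i)$, valid for \emph{every} dual-feasible $(\{\psi_{ij}\},\lambda)$, since only then does $p^\star=d^\star$ force the existence of a multiflow with total flux $\sum_i S(A_i)$. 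You acknowledge that one must check ``no smaller dual value is possible,'' but the mechanism you propose --- nestedness/compatibility of the $m(A_i)$, or non-overcounting in the candidate dual point --- again concerns only that one candidate, not arbitrary feasible points, so it cannot close the argument. Relatedly, the dual does not ``decouple region by region'': all commodities share the single function $\lambda$ through the constraints $\lambda\ge|\nabla\psi_{ij}|$, and showing that this coupling does not push the optimum below $\sum_i S(A_i)$ is precisely the substantive content of the proof.

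The paper closes exactly this gap in part (b) of its proof: for any feasible $(\{\psi_{ij}\},\lambda)$, the boundary conditions $\psi_{ij}=\mp1$ on $A_i,A_j$ force $\int_{\mathcal C}ds\,\lambda\ge2$ along every curve joining distinct regions, cf.~\eqref{step1}; defining $\phi_i(x)$ as the $\lambda$-weighted distance from $A_i$ as in~\eqref{def phi_i}, one gets $\phi_i+\phi_j\ge2$, so the regions $R_i=\{\phi_i<1\}$ of~\eqref{def R_i} are pairwise disjoint; then $\int_{\mathcal M}\sqrt g\,\lambda\ge\sum_i\int_{R_i}\sqrt g\,\lambda=\sum_i\int_{R_i}\sqrt g\,|\nabla\phi_i|$, and a level-set (coarea) argument --- each level set of $\phi_i$ with value in $(0,1)$ is homologous to $A_i$ --- bounds each term below by $S(A_i)$. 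Note that no statement about the relative position or nesting of the $m(A_i)$ is needed anywhere. Unless you supply this (or an equivalent) lower bound for arbitrary dual-feasible configurations, your chain of reasoning does not prove the theorem; the explicit $r(A_i)$-based dual configuration you emphasize appears in the paper only as an after-the-fact remark identifying the minimizer, not as part of the proof.
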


Our proof of Theorem~\ref{thm17} will not be constructive. Rather, using tools from the theory of convex optimization, specifically strong duality of convex programs,\footnote{We refer the reader to~\cite{Boyd2004:CO:993483} for an excellent guide to this rich subject, but we also recommend~\cite{Headrick:2017ucz} for a short physicist-friendly introduction summarizing the concepts and results applied here.}  we will establish abstractly the existence of a multiflow obeying~\eqref{goal2}. The methods employed here will carry over with only small changes to the discrete case, as shown in Section~\ref{sec:discrete}.

\begin{proof}[Proof of Theorem~\ref{thm172}]
As discussed in Subsection~\ref{sec:multiflows}, for any multiflow, $v_i$ is a flow and therefore obeys
\begin{equation}\label{flowbound2}
\int_{A_i}v_i\le S(A_i)\ .
\end{equation}
What we will show is that there exists a multiflow such that
\begin{equation}\label{fluxsum}
\sum_{i=1}^n \int_{A_i}v_i \ge\sum_{i=1}^n S(A_i)\ .
\end{equation}
This immediately implies that~\eqref{flowbound2} is saturated for all $i$.

In order to prove the existence of a multiflow obeying~\eqref{fluxsum}, we will consider the problem of maximizing the left-hand side of~\eqref{fluxsum} over all multiflows as a convex optimization problem, or convex program. That this problem is convex follows from the following facts: (1) the variables (the vector fields $v_{ij}$) have a natural linear structure; (2) the equality constraints~\eqref{antisym2}, \eqref{noflux2}, \eqref{divergenceless2} are affine (in fact linear); (3) the inequality constraint~\eqref{normbound2} is convex (i.e.\ it is preserved by taking convex combinations); (4) the objective, the left-hand side of~\eqref{fluxsum}, is a concave (in fact linear) functional. We will find the Lagrangian dual of this problem, which is another convex program involving the constrained \emph{minimization} of a convex functional. We will show that the objective of the dual program is bounded below by the right-hand side of~\eqref{fluxsum}, and therefore its minimum $d^\star$ is bounded below:
\begin{equation}\label{dbound}
d^\star \ge \sum_{i=1}^n S(A_i)\ .
\end{equation}
We will then appeal to strong duality, which states that the maximum $p^\star$ of the original (primal) program equals the minimum of the dual,
\begin{equation}
p^\star = d^\star \ .
\end{equation}
We thus obtain
\begin{equation}\label{pbound}
p^\star \ge\sum_{i=1}^n S(A_i)\ ,
\end{equation}
showing that there is a multiflow obeying~\eqref{fluxsum}.

To summarize, we need to (a) derive the dual program and show that strong duality holds, and (b) show that its objective is bounded below by $\sum_{i=1}^n S(A_i)$. We will do these in turn. Many of the steps are similar to those in the proof of the Riemannian max flow-min cut theorem, described in~\cite{Headrick:2017ucz}; the reader who wishes to see the steps explained in more detail should consult that reference.

\paragraph{(a) Dualization:} The Lagrangian dual of a convex program is defined by introducing a Lagrange multiplier for each constraint and then integrating out the original (primal) variables, leaving a program written in terms of the Lagrange multipliers. More specifically, an inequality constraint is enforced by a Lagrange multiplier $\lambda$ which is itself subject to the inequality constraint $\lambda\ge0$. In integrating out the primal variables, the objective plus Lagrange multiplier terms (together called the \emph{Lagrangian}) is minimized or maximized \emph{without} enforcing the constraints. The resulting function of the Lagrange multipliers is the objective of the dual program. The requirement that the minimum or maximum of the Lagrangian is finite defines the constraints of the dual program (in addition to the constraints $\lambda\ge0$ mentioned above). If the primal is a minimization program then the dual is a maximization one and vice versa.

In fact it is not necessary to introduce a Lagrange multiplier for each constraint of the primal program. Some constraints can be kept \emph{implicit}, which means that no Lagrange multiplier is introduced and those constraints \emph{are} enforced when integrating out the primal variables.

Our task is to dualize the program of maximizing $\sum_{i=1}^n \int_{A_i}v_i$ over all multiflows, i.e.\ over sets $\{v_{ij}\}$ of vector fields obeying~\eqref{antisym2}--\eqref{normbound2}; as discussed above, this is a convex program. We will choose to keep~\eqref{antisym2} and~\eqref{noflux2} implicit. For the constraint~\eqref{divergenceless2}, we introduce a set of Lagrange multipliers $\psi_{ij}$ ($i<j$), each of which is a scalar field on $\mathcal M$. Note that $\psi_{ij}$ is only defined for $i<j$ since, given the implicit constraint~\eqref{antisym2}, the constraint~\eqref{divergenceless2} only needs to be imposed for $i<j$. For the inequality constraint~\eqref{normbound2} we introduce the Lagrange multiplier $\lambda$, which is also a scalar function on $\mathcal M$ and is subject to the constraint $\lambda\ge0$. The Lagrangian is
\begin{equation}\label{Lag1}
\begin{aligned}
L\left[\{v_{ij}\},\{\psi_{ij}\},\lambda\right]
&=
\sum_{i=1}^n \int_{A_i}\sqrt h\sum_{j=1}^n \hat n\cdot v_{ij}\\
&{}+
\int_{\mathcal M}\sqrt g\left[
\sum_{i<j}^n \psi_{ij}\nabla\cdot v_{ij}+\lambda\left(1-\sum_{i<j}^n |v_{ij}|\right)
\right]\ .
\end{aligned}
\end{equation}
Rewriting the first term slightly, integrating the divergence term by parts, and using the constraint~\eqref{noflux2}, the Lagrangian becomes
\begin{equation}\label{Lag2}
\begin{aligned}
L\left[\{v_{ij}\},\{\psi_{ij}\},\lambda\right]
&=
\sum_{i<j}^n\left[\int_{A_i}\sqrt h\,\hat n\cdot v_{ij}(\psi_{ij}+1)+\int_{A_j}\sqrt h\,\hat n\cdot v_{ij}(\psi_{ij}-1)\right]\\
&{}+\int_{\mathcal M}\sqrt g\left[
\lambda-\sum_{i<j}^n \left(v_{ij}\cdot\nabla\psi_{ij}+\lambda|v_{ij}|\right)
\right]\ .
\end{aligned}
\end{equation}

We now maximize the Lagrangian with respect to $v_{ij}$ (again, only imposing constraints~\eqref{antisym2}, \eqref{noflux2} but \emph{not}~\eqref{divergenceless2}, \eqref{normbound2}). The requirement that the maximum is finite leads to constraints on the dual variables $\{\psi_{ij}\}$, $\lambda$. Since the Lagrangian, as written in~\eqref{Lag2}, is ultralocal in $v_{ij}$, we can do the maximization pointwise. On the boundary, for a given $i<j$, at a point in $A_i$ or $A_j$, $\hat n\cdot v_{ij}$ can take any value.
Therefore, in order for the maximum to be finite, its coefficient must vanish, leading to the constraints
\begin{equation}
\psi_{ij} = -1\text{ on }A_i\ ,\qquad
\psi_{ij} = 1\text{ on }A_j\ .
\end{equation}
When those constraints are satisfied, the boundary term vanishes. In the bulk, the term $-(v_{ij}\cdot\nabla\psi_{ij}+\lambda|v_{ij}|)$ is unbounded above as a function of $v_{ij}$ unless
\begin{equation}\label{lambdaconstraint}
\lambda\ge|\nabla\psi_{ij}|\ ,
\end{equation}
in which case the maximum (at $v_{ij}=0$) vanishes. (As a result of~\eqref{lambdaconstraint}, the constraint $\lambda\ge0$ is automatically satisfied and can be dropped.) The only term left in the Lagrangian is $\int_{\mathcal M}\sqrt g\,\lambda$.

All in all, we are left with the following dual program:
\begin{multline}\label{dual}
\text{Minimize }\int_{\mathcal M}\sqrt g\,\lambda
\text{ with respect to }\{\psi_{ij}\},\lambda \\
\text{ subject to }
\lambda\ge|\nabla\psi_{ij}|\ ,\quad
\psi_{ij} = -1\text{ on }A_i\ ,\quad
\psi_{ij} = 1\text{ on }A_j\ ,
\end{multline}
where again, $\psi_{ij}$ is defined only for $i<j$.

Strong duality follows from the fact that Slater's condition is satisfied. Slater's condition states that there exists a value for the primal variables such that all equality constraints are satisfied and all inequality constraints are strictly satisfied (i.e.\ satisfied with $\le$ replaced by $<$). This is the case here: the configuration $v_{ij}=0$ satisfies all the equality constraints and strictly satisfies the norm bound~\eqref{normbound2}.

\paragraph{(b) Bound on dual objective:} It remains to show that, subject to the constraints in~\eqref{dual}, the objective is bounded below by $\sum_i S(A_i)$.

First, because $\psi_{ij} = -1$ on $A_i$ and $1$ on $A_j$, for any curve $\mathcal{C}$ from a point in $A_i$ to a point in $A_j$, we have
\begin{align}\label{step1}
	\int_{\mathcal{C}}ds\, \lambda \geq \int_{\mathcal{C}} ds\,|\nabla\psi_{ij}|\geq \int_{\mathcal{C}}ds\, \hat t \cdot \nabla\psi_{ij} = \int_{\mathcal{C}} d\psi_{ij} = 2\ ,
\end{align}
where $ds$ is the proper length element, $\hat t$ is the unit tangent vector, and in the second inequality we used the Cauchy-Schwarz inequality. Now, for each $i$, define the function $\phi_i(x)$ on $\mathcal M$ as the minimum of $\int_{\mathcal C}ds\,\lambda$ over any curve from $A_i$ to $x$:
\begin{align}\label{def phi_i}
	\phi_i(x) = \inf_{\mathcal{C}\text{ from}\atop A_i\text{ to }x}\int_{\mathcal{C}} ds\,\lambda \ .
\end{align}
By virtue of~\eqref{step1},
\begin{align}\label{step2}
	\phi_i(x) + \phi_j(x) \geq 2\qquad (i\neq j)\ .
\end{align}
Define the region $R_i$ as follows:
\begin{equation}\label{def R_i}
R_i:=\{x\in\mathcal M:\phi_i(x)<1\} \ .
\end{equation}
It follows from~\eqref{step2} that $R_i\cap R_j=\emptyset$ for $i \neq j$. Given that $\lambda\ge0$, this implies that the dual objective is bounded below by the sum of the integrals on the $R_i$s:
\begin{align}\label{ineq1}
	\int_{\mathcal M}\sqrt g\, \lambda \ge \sum_{i=1}^n \int_{R_i} \sqrt g\,\lambda \ .
\end{align}

Finally, we will show that each term in the sum on the right-hand side of~\eqref{ineq1} is bounded below by $S(A_i)$. Using Hamilton-Jacobi theory, where we treat $\int_{C}ds\, \lambda$ as the action, it is straightforward to show that $|\nabla\phi_i| = \lambda$, so this term can be written $\int_{R_i}\sqrt g|\nabla\phi_i|$. This integral in turn equals the average area of the level sets of $\phi_i$ for values between 0 and 1. Since $\phi_i=0$ on $A_i$ and $\ge2$ on $\overline{A_i}$, each level set is homologous to $A_i$ and so has area at least as large as that of the minimal surface $m(A_i)$. This is precisely $S(A_i)$, so the average is also at least $S(A_i)$. (The reasoning here is the same as used in the proof of the max flow-min cut theorem; see in particular Step 3 of Section 3.2 in~\cite{Headrick:2017ucz}.) This completes the proof.
\end{proof}

We end this subsection with two comments on the proof. The first is that the converse to~\eqref{step1} holds, in other words, given a function $\lambda$ on $\mathcal M$ such that $\int_C ds\,\lambda\ge2$ for any curve $\mathcal{C}$ connecting different boundary regions, there exist functions $\psi_{ij}$ satisfying the constraints of the dual program~\eqref{dual}. These can be constructed in terms of the functions $\phi_i$ and regions $R_i$ defined above:
\begin{equation}
\psi_{ij} = \begin{cases}\phi_i-1\text{ on }R_i\\ 1-\phi_j\text{ on }R_j\\ 0\text{ elsewhere}\end{cases}\ .
\end{equation}
Thus~\eqref{dual} is equivalent to the following program:
\begin{multline}\label{dual2}
\text{Minimize }\int_{\mathcal M}\sqrt g\,\lambda
\text{ with respect to }\lambda \\
\text{ subject to }
\int_{\mathcal{C}}ds\,\lambda\ge2\text{ for all $\mathcal{C}$ connecting different boundary regions.}
\end{multline}
This program is the continuum analogue of the ``metrics on graphs'' type of program that arises as duals of graph multiflow programs (see~\cite{Guyslain}).

Second, we know from Theorem~\ref{thm17} that the bound~\eqref{pbound} is saturated, implying that~\eqref{dbound} is saturated. In fact, it is straightforward to construct the minimizing configuration $\{\psi_{ij}\},\lambda$ which achieves this bound. Letting $m(A_i)$ be the minimal surface homologous to $A_i$ and $r(A_i)$ the corresponding homology region, we set
\begin{equation}
\psi_{ij} = \begin{cases}-1\text{ on }r(A_i)\\1\text{ on }r(A_j)\\0\text{ elsewhere}\end{cases}\ ,\qquad
\lambda = \sum_i\delta_{m(A_i)}\ ,
\end{equation}
where $\delta_{m(A_i)}$ is a delta-function supported on $m(A_i)$.

\subsection{Theorem~\ref{nmmf}}\label{sec:nmmf}

In this subsection we prove a theorem which establishes a sort of ``nesting property'' for multiflows. The theorem states that a multiflow exists that not only provides a max flow for each individual region $A_i$ but also for any given set of regions $s\subset \{A_i\}$. (The example $n=4$, $s=AB$ was considered in~\eqref{u1saturate}.) The corresponding flow $v_s$ is defined as the sum of the vector fields $v_{ij}$ from $s$ to $s^c$:
\begin{equation}
v_s := \sum_{A_i\in s\atop A_j\not\in s}v_{ij}\ .
\end{equation}
Being a max flow means
\begin{equation}
\int_s v_s = S(s)\ .
\end{equation}
For example, this was applied in Subsection~\ref{sec:multiflows} to the four-region case with $s=AB$.

\begin{theorem}[Nested max multiflow]\label{nmmf}
Given a composite boundary region~$s\subset \{A_i\}$, there exists a multiflow $\{v_{ij}\}$ such that for each $i$, the sum
\begin{equation}
v_i:= \sum_{j=1}^n v_{ij}
\end{equation}
is a max flow for $A_i$, that is,
\begin{equation}
\int_{A_i}v_i=S(A_i)\ ,
\end{equation}
and the sum
\begin{equation}
v_s := \sum_{A_i\in s\atop A_j\not\in s}v_{ij}\ .
\end{equation}
is a max flow for $s$, that is,
\begin{equation}
\int_s v_s = S(s)\ .
\end{equation}
\end{theorem}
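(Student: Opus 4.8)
The plan is to mimic the convex-duality argument used for Theorem~\ref{thm172}, but to enlarge the primal objective so that it rewards flux through $s$ as well as flux through each individual $A_i$. Concretely, I would fix the composite region $s\subset\{A_i\}$ and consider the primal program of maximizing
\begin{equation}
\sum_{i=1}^n \int_{A_i} v_i \;+\; \int_s v_s
\end{equation}
over all multiflows $\{v_{ij}\}$. As before each $v_i$ and $v_s$ is individually a flow (the latter because it is of the form~\eqref{combinations}), so termwise $\int_{A_i}v_i\le S(A_i)$ and $\int_s v_s\le S(s)$; hence the primal optimum $p^\star$ is at most $\sum_i S(A_i)+S(s)$. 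It therefore suffices to show $p^\star\ge \sum_i S(A_i)+S(s)$, since that forces every one of these bounds to be saturated simultaneously, which is exactly the assertion of the theorem.

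The argument for the lower bound again goes through strong duality (Slater's condition holds trivially, $v_{ij}=0$ strictly satisfying~\eqref{normbound2}). Dualizing: keep~\eqref{antisym2} and~\eqref{noflux2} implicit, introduce scalars $\psi_{ij}$ for~\eqref{divergenceless2} and $\lambda\ge0$ for~\eqref{normbound2}, just as before. The only new feature is the extra boundary term $\int_s v_s=\sum_{A_i\in s,\,A_j\notin s}\int_{A_i}v_{ij}$, which shifts the coefficient of $\hat n\cdot v_{ij}$ on the boundary. Finiteness of the pointwise maximum over $v_{ij}$ now yields modified boundary conditions on $\psi_{ij}$: for a pair with $A_i\in s$, $A_j\notin s$ one gets $\psi_{ij}=-2$ on $A_i$ and $\psi_{ij}=0$ on $A_j$ (an overall shift by the indicator of $s$), while for pairs with both indices inside $s$ or both outside the conditions $\psi_{ij}=\mp1$ are unchanged; the bulk constraint $\lambda\ge|\nabla\psi_{ij}|$ and the dual objective $\int_\mathcal M\sqrt g\,\lambda$ are unchanged. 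Equivalently, writing $\psi_{ij}=\tilde\psi_{ij}-\mathbbm 1_s(i)-\mathbbm 1_s(j)$ where $\mathbbm 1_s(i)$ is $1$ if $A_i\in s$ and $0$ otherwise, one recovers precisely the dual program~\eqref{dual}/\eqref{dual2} but with the "$2$" on the right-hand side of the path constraint replaced by $2$ for curves joining $A_i$ to $A_j$ with both or neither endpoint in $s$, and by a suitable value on curves crossing $\partial s$ — the key point is that along any curve from a point of $A_i$ to a point of $A_j$ with $i\neq j$ one still has $\int_{\mathcal C}ds\,\lambda\ge 2$, and additionally along any curve from $s$ to $s^c$ one has $\int_{\mathcal C}ds\,\lambda\ge 2$ as well (since $\psi$ now jumps by $2$ across such curves too).

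With that in hand the lower bound on the dual objective is obtained exactly as in the proof of Theorem~\ref{thm172}, but using $n+1$ regions rather than $n$. Define $\phi_i(x)$ as in~\eqref{def phi_i} for each $i$, and additionally $\phi_s(x):=\inf_{\mathcal C \text{ from } s \text{ to } x}\int_{\mathcal C}ds\,\lambda$; the path bounds give $\phi_i(x)+\phi_j(x)\ge2$ for $i\ne j$ and $\phi_i(x)+\phi_s(x)\ge2$ whenever $A_i\notin s$. Then $R_i:=\{\phi_i<1\}$ and $R_s:=\{\phi_s<1\}$ — taking the $R_i$ only for $A_i\notin s$ — are pairwise disjoint (the pairs $R_i\cap R_j$ as before, and $R_i\cap R_s=\emptyset$ from the new inequality; note $R_s$ contains all of $s$ and none of $s^c$). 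Since $\lambda\ge0$,
\begin{equation}
\int_\mathcal M \sqrt g\,\lambda \;\ge\; \sum_{A_i\in s}\int_{R_i}\sqrt g\,\lambda \;+\; \int_{R_s}\sqrt g\,\lambda \;+\; \sum_{A_i\notin s}\int_{R_i}\sqrt g\,\lambda,
\end{equation}
and I must be slightly careful: for $A_i\in s$ the level sets of $\phi_i$ are still homologous to $A_i$, giving $\int_{R_i}\sqrt g\,\lambda\ge S(A_i)$; for $A_i\notin s$ likewise $\int_{R_i}\sqrt g\,\lambda\ge S(A_i)$; and the level sets of $\phi_s$ are homologous to $s$, giving $\int_{R_s}\sqrt g\,\lambda\ge S(s)$. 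Summing yields $d^\star\ge\sum_i S(A_i)+S(s)$, and strong duality finishes the proof.

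The main obstacle I anticipate is bookkeeping the boundary conditions on $\psi_{ij}$ correctly when exactly one of $i,j$ lies in $s$, and then verifying that the resulting path-length inequalities are strong enough to make \emph{all} the relevant regions — the $n$ regions $R_i$ \emph{and} the region $R_s$ — mutually disjoint while each still captures the correct homology class. In particular one should check that for $A_i\in s$ the region $R_i$ does not overrun into $s^c$ in a way that spoils either disjointness from $R_j$ ($A_j\notin s$) or the homology bound; this works because $\phi_i=0$ on $A_i$ and $\phi_i\ge 2$ on every other $A_j$, so every level set $\{\phi_i=c\}$ with $0<c<1$ separates $A_i$ from its complement regardless of whether $A_i\in s$. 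Once that combinatorial check is done, the rest is the same Hamilton–Jacobi / coarea argument ($|\nabla\phi_i|=\lambda$, integral over $R_i$ equals the average area of level sets, each $\ge\area(m(A_i))$) already invoked in the proof of Theorem~\ref{thm172}.
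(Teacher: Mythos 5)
Your overall strategy is the same as the paper's (augment the primal objective by $\int_s v_s$, dualize, build distance functions $\phi_i,\phi_s$ and disjoint regions, then use the coarea/homology argument), but two linked steps go wrong, and the second one is fatal as written. First, the dual boundary conditions are miscomputed. The extra term $\int_s v_s$ only adds flux contributions on the endpoint lying in $s$: for $i<j$ with $A_i\in s$, $A_j\notin s$ it shifts the coefficient of $\hat n\cdot v_{ij}$ on $A_i$ only, so finiteness forces $\psi_{ij}=-2$ on $A_i$ but still $\psi_{ij}=+1$ on $A_j$ (and $-1/+2$ in the opposite case); it is not an ``overall shift by the indicator of $s$.'' Consequently the path bound for a curve joining $A_i$ to $A_j$ with exactly one endpoint in $s$ is $\int_{\mathcal C}ds\,\lambda\ge 3$, not $\ge 2$ (this is~\eqref{distbound} in the paper's dual~\eqref{dual3}). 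Your weaker claim ``$\ge 2$ along any curve from $s$ to $s^c$'' is in fact vacuous, since any such curve already connects two distinct boundary regions; with only the $\ge 2$ constraints the dual minimum is the same as in Theorem~\ref{thm172}, namely $\sum_i S(A_i)$, which is generically strictly smaller than $\sum_i S(A_i)+S(s)$, so no argument based on those constraints alone can close the proof.

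Second, and independently, your region bookkeeping double counts. With $\phi_s(x)=\inf_{\mathcal C\text{ from }s\text{ to }x}\int_{\mathcal C}ds\,\lambda$ one has $\phi_s\le\phi_i$ for every $A_i\in s$, hence $R_i=\{\phi_i<1\}\subseteq R_s=\{\phi_s<1\}$ for those $i$. So $R_s$ is \emph{not} disjoint from the $R_i$ with $A_i\in s$, yet your displayed lower bound sums $\int_{R_i}\sqrt g\,\lambda$ over all $n$ regions \emph{and} adds $\int_{R_s}\sqrt g\,\lambda$; that inequality is unjustified precisely where you need it. The paper's fix uses both missing ingredients at once: with the correct $\ge 3$ bound, one takes $R_s:=\{x:1<\phi_s(x)<2\}$, a shell which is automatically disjoint from $R_i$ for $A_i\in s$ (since those sit inside $\{\phi_s<1\}$) and disjoint from $R_i$ for $A_i\notin s$ (since $\phi_i+\phi_s\ge3$ implies $\phi_s<2\Rightarrow\phi_i>1$), and whose level sets for values in $(1,2)$ are homologous to $s$ because $\phi_s=0$ on $s$ and $\ge3$ on $\bar s$, giving $\int_{R_s}\sqrt g\,\lambda\ge S(s)$. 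Your concluding paragraph correctly flags the boundary-condition bookkeeping and the disjointness check as the danger points, but as carried out both checks fail; repairing them leads exactly to the paper's argument.
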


\begin{proof}[Proof of Theorem~\ref{nmmf}]
The proof proceeds very similarly to that for Theorem~\ref{thm17}; we will only point out the differences. Since $v_s$ is automatically a flow, in addition to~\eqref{flowbound2} we have
\begin{equation}
\int_s v_s\le S(s)\ .
\end{equation}
Therefore, to prove the theorem, it suffices to show that there exists a multiflow such that (in place of~\eqref{fluxsum}),
\begin{equation}\label{fluxsum2}
\int_s v_s+\sum_{i=1}^n \int_{A_i}v_i\ge S(s)+\sum_{i=1}^n S(A_i)\ .
\end{equation}
For this purpose we dualize the program of maximizing the left-hand side of~\eqref{fluxsum2} over multiflows. Compared to the proof of Theorem~\ref{thm17}, this adds a term $\int_s v_s$ to the primal objective, and therefore to the Lagrangian~\eqref{Lag1}. This term can be written
\begin{equation}
\sum_{\substack{i<j\\ A_i\in s\\ A_j\not\in s}}\int_{A_i}\sqrt h\,\hat n\cdot v_{ij}-
\sum_{\substack{i<j\\ A_i\not\in s\\ A_j\in s}}\int_{A_j}\sqrt h\,\hat n\cdot v_{ij}\ .
\end{equation}
This term has the effect, after integrating out the $v_{ij}$s, of changing the boundary conditions for the dual variables $\psi_{ij}$. The dual program is now
\begin{equation}\label{dual3}
\begin{aligned}
&\text{Minimize }\int_{\mathcal M}\sqrt g\,\lambda
\text{ with respect to }\{\psi_{ij}\},\lambda \\
&\quad\text{ subject to }
\lambda\ge|\nabla\psi_{ij}|\ ,\\
&\qquad\qquad\left.\psi_{ij}\right|_{A_i} = \begin{cases}
-2\,,\quad A_i\in s,A_j\not\in s \\
-1\,,\quad\text{otherwise}
\end{cases} \\
&\qquad\qquad\left.\psi_{ij}\right|_{A_j} = \begin{cases}
2\,,\quad A_i\not\in s,A_j\in s \\
1\,,\quad\text{otherwise}
\end{cases}\ .
\end{aligned}
\end{equation}
This implies that the bound~\eqref{step1} on $\int_{\mathcal{C}}ds\,\lambda$ for a curve $\mathcal{C}$ from a point in $A_i$ to a point in $A_j$ becomes
\begin{equation}\label{distbound}
\int_{\mathcal{C}}ds\,\lambda\ge
\begin{cases}2\ ,\quad A_i,A_j\in s\text{ or }A_i,A_j\notin s \\
3\ ,\quad A_i\in s,A_j\not\in s\text{ or }A_i\not\in s,A_j\in s
\end{cases}\ .
\end{equation}
We now define the functions $\phi_i$ and regions $R_i$ as in the proof of Theorem~\ref{thm172}, and in addition the function $\phi_s$ and region $R_s$:
\begin{align}
\phi_s(x)&:=\min_{A_i\in s}\phi_i(x)=\min_{\mathcal{C}}\int_{\mathcal{C}}ds\,\lambda \\
R_s&:=\{x\in\mathcal M:1<\phi_s<2\} \ ,
\end{align}
where $\mathcal{C}$ is any curve from $s$ to $x$.
It follows from~\eqref{def phi_i}, \eqref{def R_i}, and~\eqref{distbound} that the regions $R_i$ do not intersect each other nor $R_s$. Therefore the objective in~\eqref{dual3} is bounded below by
\begin{equation}
\int_{R_s}\sqrt g\,\lambda+\sum_{i=1}^n\int_{R_i}\sqrt g\,\lambda \ .
\end{equation}
The integral over $R_i$ is bounded below by $S(A_i)$ by the same argument as in the proof of Theorem~\ref{thm172}, and the integral over $R_s$ is by $S(s)$ by a similar one: Again, $\lambda=|\nabla\phi_s|$, so $\int_{R_s}\sqrt g\,\lambda=\int_{R_s}\sqrt g\,|\nabla\phi_s|$, which in turn equals the average area of the level sets of $\phi_s$ for values between 1 and 2. Since $\phi_s$ is 0 on $s$ and $\ge3$ on $\bar s$, those level sets are homologous to $s$. Therefore their average area is at least the area of the minimal surface homologous to $s$, which is $S(s)$.
\end{proof}

\section{Multiflows on networks}\label{sec:discrete}

In this section we investigate multiflows on networks. This study can be thought of as a discrete analogue of the results in the previous sections, with the spacetime replaced by a weighted graph, the flows by graph flows, and the Ryu-Takayanagi surfaces by minimal cuts. Since the results in this section are stand-alone mathematical results, we will remain agnostic as to how (and if) a network is obtained from a spacetime: this could be done e.g.\ via the graph models of~\cite{Bao:2015bfa}, in some other way, or not at all.

There are two motivations for the study of multiflows on networks: (1) it could yield new insights into discrete models of gravity, but also (2) it can produce new mathematical results and conjectures in graph theory.

In this section we will report on several items:
\begin{enumerate}
\item In Section~\ref{secmultiflowproofdiscrete} we will give a convex optimization proof of the discrete analogue of Theorem~\ref{thm17} (Theorem~\ref{thm2} below). Although Theorem~\ref{thm2} has been proven before in the literature using combinatorial methods, this convex optimization proof is new to the best of our knowledge, and closely follows the proof in the continuum setup.
\item We will prove a decomposition of an arbitrary network with three boundary vertices into three subnetworks, such that each subnetwork computes precisely one of the three boundary mutual informations, and has zero value for the other two mutual informations. Furthermore, we will conjecture a decomposition of an arbitrary network with four boundary vertices into $6+1$ subnetworks, such that each of the six networks computes precisely one of the six pairwise mutual informations, and has vanishing mutual informations for the other five pairs of boundary vertices, as well as vanishing tripartite information. The remainder subnetwork has vanishing mutual informations and has tripartite information equal to that of the original network. The tripartite decomposition is the discrete analogue of the decomposition in Section~\ref{sec:n=3} in the continuous case, and the four-partite decomposition is a slight generalization of the decomposition in Section~\ref{n4}. We will also conjecture a decomposition of networks with arbitrary number of boundary vertices.
\item On networks with positive rational capacities, we will give a constructive combinatorial proof of the existence of a certain configuration of flows (in what we will call the flow extension lemma, Lemma~\ref{lem:trick}), which by itself is sufficient to establish the nonnegativity of~tripartite information. The result applies to not only undirected graphs, but also more generally to a certain class of directed graphs which we call \AB (to be defined later).
\end{enumerate}

\subsection{Background on networks}\label{subsec:background_definitions}
Denote a graph by $(V,E)$, where $V$ is the set of vertices and $E$ is the set of edges. We first consider the case of \emph{directed} graphs. For an edge $e \in E$, denote by $s(e)$ and $t(e)$ the source and target, respectively, of $e$. A capacity function on $(V,E)$ is a map $c:\, E \to \mathbb{R}_{\geq 0}$. For each $e \in E$, $c^e:= c(e)$ is called the capacity of $e$. We refer the graph $(V,E)$ together with a capacity function $c$ as a \emph{network} $\Sigma = (V,E,c)$.  Given a network $\Sigma$, we designate a subset of vertices $\partial \Sigma \subset V$ as the \emph{boundary} of $\Sigma$. Vertices in $\partial \Sigma$ play the role of sources or sinks.

\begin{defn}[Discrete flows]\label{flow_definition}
Given a network $\Sigma = (V,E,c)$, a \emph{flow} on $\Sigma$ is a function $v:\, E \to \mathbb R_{\geq 0}$ on edges such that the following two properties hold.
\begin{enumerate}
\item Capacity constraint:  for all edges $e\in E$, $|v^e| \leq c^e$.
\item Flow conservation: for all vertices $x \not\in \partial \Sigma $,
\begin{align}\label{flow_conservation}
	\sum_{E \ni e\colon t(e) = x} v^{e} = \sum_{E \ni e \colon s(e) = x} v^{e}\ .
\end{align}
\end{enumerate}
\end{defn}

For a network $\Sigma = (V,E,c)$, define the virtual edge set $\tilde{E}$ to be the set of edges obtained by reversing all directions of edges in $E$. Then clearly a flow $v$ on $\Sigma$ can be uniquely extended to a function, still denoted by $v$, on $E \sqcup \tilde{E}$ such that $v^{e} = -v^{\tilde{e}}$ where $\tilde{e} \in \tilde{E}$ corresponds to $e \in E$. All flows are implicitly assumed to have been extended in this way. Then the flow conservation property in \eqref{flow_conservation} can be rewritten as
\begin{align}
\label{flow_conservation2}
	\sum_{E\sqcup \tilde{E} \ni e\colon s(e) = x} v^{e} = 0,\quad \forall x \in V\ .
\end{align}

\begin{defn}
Let $v$ be a flow on $\Sigma = (V,E,c)$ and $A \subset \partial \Sigma$ be a subset of the boundary vertices.
\begin{enumerate}
\item The flux  of $v$ out of $A$ is defined to be
\begin{align}
S_{\Sigma}(A;v):= \sum_{E \sqcup \tilde{E} \ni e \colon s(e) \in A } v^{e}\ .
\end{align}
\item A max flow on $A$ is a flow that has maximal flux out of $A$ among all flows. The maximal flux is denoted by $S_{\Sigma}(A)$ (or $S(A)$ when no confusion arises), i.e.
\begin{align}
S_{\Sigma}(A) = \max_{v \text{ flow}} \,S_{\Sigma}(A;v)\ .
\end{align}
\item An edge cut set $C$ with respect to $A$ is a set of edges such that there exists a partition $V = V_1 \sqcup V_2$ with $A \subset V_1, \ \partial \Sigma \setminus A \subset V_2$, and $C = \{e \in E\,:\, s(e) \in V_1, t(e) \in V_2 \}$. The value of $C$ is defined to be $|C| = \sum_{e \in C} c^{e}$. A min cut with respect to $A$ is an edge cut set that has minimal value among all edge cut sets.
\end{enumerate}
\end{defn}

It is a classical result (the max-flow min-cut theorem)~\cite{ford1956maximal, elias1956note} that $S(A)$ is equal to the value of a min cut with respect to $A$.

In the continuum setup, $S(A)$ plays the role of entanglement entropy of the boundary region $A$ by the Ryu-Takayanagi formula. On networks, we still call $S(A)$ the ``entropy'' of $A$ by analogy. Similarly, for pairwise nonoverlapping boundary subsets $A,B,C$, define the mutual information and the tripartite information by
\begin{align}
I(A:B) &:= S(A) + S(B) - S(AB) \\
-I_3(A:B:C) &:= S(AB) + S(BC) + S(AC) - S(A) - S(B) - S(C) - S(ABC)\ .
\end{align}

If $(V,E)$ is an \emph{undirected} graph, it can be viewed as a directed graph $(V, D(E))$ where $D(E)$ is obtained by replacing each edge $e \in E$ with a pair of parallel oppositely-oriented edges $e_1,e_2$.
An undirected network $\Sigma = (V,E,c)$ can be viewed as a directed network $D(\Sigma) =(V,D(E),D(c))$, where ${D(c)}^{e_1} = {D(c)}^{e_2}:= c^e$.
We define a flow on $\Sigma$ to be a flow on $D(\Sigma)$.
From the viewpoint of computing the fluxes of flows, we can always assume that a flow $v$ on $D(\Sigma)$ satisfies the following property.
Namely, for each pair of parallel edges $(e_1,e_2)$, $v$ is positive on at most one of the two edges.
This allows us to define a flow on an undirected network $\Sigma$ without referring to $D(\Sigma)$.
Firstly, we arbitrarily fix a direction on each edge $e \in E$.
Then a flow on $\Sigma$ is a function $v: E \to \mathbb{R}$ that satisfies the same two conditions given in Definition~\ref{flow_definition}. The convention here is that if $v$ is negative on some $e \in E$ with the pre-fixed direction, then it means $v$ flows backwards along $e$ with the value $|v^{e}|$.
Also note that when computing the value of a cut on $\Sigma$, the edges of $E$ should be treated as undirected or bidirected, rather than directed with the prefixed orientation. The concepts of max flows and min cuts can be translated to undirected networks in a straightforward way, and the max-flow/min-cut theorem still holds.

We will consider undirected networks in Section~\ref{secmultiflowproofdiscrete} and~\ref{subsec:Network decomposition}, and directed networks in Section~\ref{subsec:tricklemma}.

\subsection{Discrete multiflow theorem and convex duality}\label{secmultiflowproofdiscrete}

In this subsection, all networks $\Sigma = (V,E,c)$ are undirected.
For simplicity, assume the underlying graph is simple and connected.
We also assume an arbitrary direction for each edge has been chosen.
For an edge $e$ with $x = s(e), y = t(e)$, we write $e$ as $\xy$, then $\langle yx \rangle$ denotes the edge $\tilde{e}$ in $\tilde{E}$.
The relation $x \sim y$ means $\xy \in E \sqcup \tilde{E}$, or equivalently, $x$ and $y$ are connected by an undirected edge.
For consistency with the notation used in the continuum setup, where different flows are labeled by subscripts, we will reserve subscripts of flows for the same purpose, and write $v^e$ for the value of a flow $v$ on an edge $e$.
Also, for convenience, we will label the vertices of $\partial \Sigma$ by $A_i$, $i\in \{1,2,\ldots,n\}$.
We do not allow for multiple boundary vertices to belong to the same $A_i$, but this will not result in any loss of generality.
Furthermore, we will assume, also without loss of generality, that each vertex $A_i\in\pd\Sigma$ connects to precisely one vertex $A_{\bar i}$ in $V \setminus \pd\Sigma$.

\begin{defn}[Multiflow, discrete version]
Given a network $\Sigma$ with boundary $\pd\Sigma$, a multiflow $\{ v_{ij} \}$ is a set of flows such that:
\begin{enumerate}
\item For all $i, j \in \{1,\dots,n\}$,
\begin{equation} \label{eqn:compat-disc-1}
v_{ij} = - v_{ji} \quad \text{and} \quad v_{ij}^{\langle k \bar{k} \rangle} = 0\ ,
\end{equation}
for any boundary vertex $A_k \in \p\Sigma \setminus \{A_i,A_j\}$.
\item The set of flows is collectively norm-bounded:
for all edges $e \in E$,
\begin{equation} \label{eqn:compat-disc-2}
\sum_{i < j}^n |v_{ij}^e | \leq c^e\ .
\end{equation}
\end{enumerate}
\end{defn}
As before, the compatibility condition ensures that any linear combination $\sum_{i < j}^n \xi_{ij} v_{ij}$ is a flow, provided that $|\xi_{ij}| \leq 1$.

The following theorem is well known in the multicommodity literature. It was first formulated in~\cite{Kupershtokh1971} with a correct proof given in~\cite{MR0437391,cherkassky1977} which is based on a careful analysis of the structure of max multiflows and involves delicate flow augmentations (cf.~\cite{frank1997integer}). Here we adapt the proof in the continuum setup to give a new proof (as far as we know) of this theorem via the convex optimization method. The proof proceeds mostly in parallel with the continuum case with some small changes. Readers who are not interested in the proof can skip to the next subsection.  Theorem~\ref{nmmf} also holds on networks, as it follows from the locking theorem~\cite{Karzanov1978}. A convex optimization proof of Theorem~\ref{nmmf} on networks would be similar to the continuum proof in Section~\ref{sec:proofs}. We will not, however, spell out the details here.

\begin{theorem}[Max multiflow, discrete version]\label{thm2}
Given a network $\Sigma$, there exists a multiflow $\{v_{ij}\}$ such that, for any $i$, the flow $\sum_{j=1}^n v_{ij}$ is a max flow on $A_i$.
\end{theorem}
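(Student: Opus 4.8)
The plan is to mimic precisely the convex-optimization argument given for Theorem~\ref{thm172} in the continuum, since the excerpt explicitly promises that ``the proof proceeds mostly in parallel with the continuum case with some small changes.'' First I would observe that, exactly as in~\eqref{flowbound2}, for any multiflow $\{v_{ij}\}$ each $v_i:=\sum_j v_{ij}$ is itself a flow on $\Sigma$, so its flux out of $A_i$ is bounded above by $S(A_i)$. Hence it suffices to exhibit a single multiflow for which $\sum_{i=1}^n S_\Sigma(A_i;v_i)\ge\sum_{i=1}^n S(A_i)$; the upper bounds then force each individual bound to be saturated. I would set up the primal program: maximize $\sum_i S_\Sigma(A_i;v_i)=\sum_i\sum_{e:\,s(e)\in A_i}v_i^e$ over all multiflows. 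This is a linear program in the finite-dimensional vector of edge-values $\{v_{ij}^e\}$: the constraints~\eqref{eqn:compat-disc-1} (antisymmetry, no flux at other boundary vertices) and flow conservation~\eqref{flow_conservation2} are linear equalities, while~\eqref{eqn:compat-disc-2} is a convex (polyhedral) inequality. Feasibility is immediate ($v_{ij}\equiv0$), so strong LP duality holds with no Slater subtlety; alternatively one can invoke Slater exactly as in the continuum since $v_{ij}\equiv0$ satisfies~\eqref{eqn:compat-disc-2} strictly when all $c^e>0$ (and the degenerate edges can be deleted).

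Next I would dualize following the continuum derivation verbatim: keep the antisymmetry and the ``no flux on $A_k$'' constraints implicit, introduce a scalar potential $\psi_{ij,x}$ (one per vertex $x\in V$, for $i<j$) as Lagrange multiplier for flow conservation of $v_{ij}$, and a nonnegative $\lambda^e\ge0$ per edge for~\eqref{eqn:compat-disc-2}. Assembling the Lagrangian and maximizing over the unconstrained $v_{ij}^e$: the boundary vertices $A_i,A_j$ (each connected by the single edge $\langle i\bar i\rangle$, by the standing assumption) force the boundary conditions $\psi_{ij,A_i}=-1$, $\psi_{ij,A_j}=+1$; each interior edge $e=\langle xy\rangle$ contributes a term $v_{ij}^e(\psi_{ij,y}-\psi_{ij,x})-\lambda^e|v_{ij}^e|$, whose supremum over $v_{ij}^e\in\mathbb R$ is finite iff $\lambda^e\ge|\psi_{ij,y}-\psi_{ij,x}|$, i.e.\ $\lambda^e\ge|\nabla\psi_{ij}|^e$ in the obvious discrete sense, and then the maximum is zero. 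The surviving dual program is: minimize $\sum_{e\in E}c^e\lambda^e$ over $\{\psi_{ij}\}$, $\lambda\ge0$ subject to $\lambda^e\ge|\psi_{ij,t(e)}-\psi_{ij,s(e)}|$ for all $e$ and $\psi_{ij}=\mp1$ on $A_i,A_j$. This is the graph ``metric'' program alluded to after~\eqref{dual2}.

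Then I would bound the dual objective below by $\sum_i S(A_i)$, copying the continuum Steps exactly with sums over edges replacing integrals. For any path $\mathcal C$ from $A_i$ to $A_j$, telescoping gives $\sum_{e\in\mathcal C}\lambda^e\ge\sum_{e\in\mathcal C}|\Delta\psi_{ij}|\ge|\psi_{ij,A_j}-\psi_{ij,A_i}|=2$, the discrete analogue of~\eqref{step1}. Define $\phi_i(x)$ as the $\lambda$-weighted graph distance from $A_i$ to $x$; then $\phi_i(x)+\phi_j(x)\ge2$ for $i\neq j$, so the vertex sets $R_i:=\{x:\phi_i(x)<1\}$ are pairwise disjoint, and since each $R_i$ separates $A_i$ from $\overline{A_i}$ (because $\phi_i=0$ on $A_i$ and $\phi_i\ge2$ off $A_i$), the set of edges leaving $R_i$ is a cut with respect to $A_i$, of weight $\ge S(A_i)$ by max-flow/min-cut. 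Since $\lambda\ge0$, $\sum_e c^e\lambda^e\ge\sum_i\sum_{e\in\delta(R_i)}c^e\lambda^e\ge\sum_i\sum_{e\in\delta(R_i)}c^e\,(\phi_i(t(e))-\phi_i(s(e)))^{\pm}$; the cleanest way to finish is the discrete coarea/layer-cake identity $\sum_e c^e\lambda^e\ge\sum_i\int_0^1 |\delta(\{x:\phi_i(x)\le t\})|\,dt\ge\sum_i\min_{0\le t\le1}|\delta(\{\phi_i\le t\})|\ge\sum_i S(A_i)$, using that every superlevel threshold in $(0,1)$ yields a cut separating $A_i$ from $\overline{A_i}$. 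Strong duality then gives the primal optimum $p^\star=d^\star\ge\sum_i S(A_i)$, producing the desired multiflow, which completes the proof.

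The main obstacle I anticipate is purely bookkeeping: the continuum ``average area of level sets'' argument ($\int\sqrt g|\nabla\phi_i|$ equals the mean area over $\phi_i\in[0,1]$) must be replaced by a discrete coarea statement, and one has to be slightly careful that the level/superlevel sets of $\phi_i$ really do give legitimate cuts with respect to $A_i$ in the sense of the definition in the excerpt—this needs $\phi_i=0$ on all of $A_i$ and $\phi_i\ge2$ on all other boundary vertices, which follows from the boundary conditions and the triangle inequality for $\phi_i$, but the degenerate case of edges weighted $\lambda^e=0$ (so that $\phi_i$ may be locally constant) requires picking the threshold $t$ generically in $(0,1)$. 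Everything else—setting up the LP, Slater/strong duality, the telescoping bound, disjointness of the $R_i$—is a direct transcription of the continuum proof and should go through verbatim.
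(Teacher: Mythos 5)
Your setup matches the paper's proof almost line for line: the reduction to exhibiting one multiflow with $\sum_i S_\Sigma(A_i;v_i)\ge\sum_i S(A_i)$, the LP dualization with potentials $\psi_{ij}$ and edge multipliers $\lambda$, Slater/strong duality, the telescoping bound giving $\sum_{e\in\mathcal C}\lambda^e\ge2$ along boundary-to-boundary paths, the $\lambda$-distance functions $\phi_i$ (the paper's $f_i$), and the pairwise-disjoint regions $R_i$ are all exactly the paper's steps. The gap is in how you finish. Your inequality $\sum_e c^e\lambda^e\ge\sum_i\sum_{e\in\delta(R_i)}c^e\lambda^e$ is false in general: disjointness of the vertex sets $R_i$ does not make the edge boundaries $\delta(R_i)$ disjoint, and an edge $\langle xy\rangle$ with $x\in R_i$, $y\in R_j$ ($i\neq j$) is counted twice on the right. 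This is precisely where the continuum argument \eqref{ineq1} does not transcribe verbatim: there the bound splits a \emph{volume} integral over disjoint bulk regions, whereas here the ``mass'' $c^e\lambda^e$ sits on edges, which can straddle two regions. The coarea inequality you then fall back on, $\sum_e c^e\lambda^e\ge\sum_i\int_0^1\left|\delta(\{\phi_i\le t\})\right|dt$, is true but is exactly the nontrivial content you have not proved: per edge, the total measure of thresholds at which the edge is cut, summed over $i$, must be at most $\lambda^e$, and the only delicate case is again an edge running from $R_i$ to $R_j$, where one needs $\left(1-\phi_i(x)\right)+\left(1-\phi_j(y)\right)\le\lambda^{\langle xy\rangle}$. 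You flag the wrong subtlety (genericity of the threshold, legitimacy of the cuts — both of which are fine) and never address this sharing of a single edge's capacity between two different cuts.

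The paper resolves exactly this point by introducing the truncated potentials $g_i=\min(f_i,1)$ and proving the pointwise edge inequality $\sum_{i=1}^n|g_i(x)-g_i(y)|\le\lambda^{\langle xy\rangle}$, Eq.~\eqref{equ:f'}, by a case analysis whose crucial case (both endpoints in distinct $R_i$, $R_j$) is settled by the two-boundary path bound \eqref{equals2}; your layer-cake integrand is identical to $\sum_i|g_i(x)-g_i(y)|$, so your route is equivalent once this is supplied. The repair is short and uses only facts you already have: from the discrete Hamilton--Jacobi bound $\lambda^{\langle xy\rangle}\ge\phi_i(y)-\phi_i(x)$ and $\phi_i(y)+\phi_j(y)\ge2$ one gets $\lambda^{\langle xy\rangle}\ge\left(1-\phi_i(x)\right)+\left(1-\phi_j(y)\right)$, and the remaining cases (both endpoints in one $R_i$, one endpoint in no $R_k$, neither endpoint in any $R_k$) are immediate. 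As written, however, the proposal asserts rather than proves the step in which the remaining content of the theorem lives, and its stated intermediate bound would fail whenever two regions $R_i$, $R_j$ are joined by an edge.
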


\begin{proof}
Our goal is to determine the value
\begin{align}
\label{disc_primal}
p^\star =  \text{Maximize} \quad \sum_{i\neq j}^n v_{ij}^{\langle i \bar i \rangle}  \quad \text{subject to (\ref{flow_conservation2}), (\ref{eqn:compat-disc-1}), (\ref{eqn:compat-disc-2})}
\end{align}
of our primal optimization problem. We collect all constraints in an explicit manner  except the condition $v_{ij} = -v_{ji}$, which is treated as an implicit constraint. Introduce
\begin{align} \label{eqn:lagrangian-discr}
	L[v_{ij},\psi_{ij},\lambda] &= \sum_{i \not= j}^n v_{{i}{j}}^{\langle {i}\bar{i} \rangle} + \sum_{{i}<{j}}^n \sum_{y\in V\setminus\{i,j\}} \sum_{x\sim y} \psi_{{i}{j}}^{y} v_{{i}{j}}^{\xy}
	  +\sum_{\xy\in E} \lambda^\xy\lb c^{\xy} - \sum_{{i}<{j}}^n | v_{{i}{j}}^\xy | \rb\ ,
\end{align}
where $\psi_{ij}^x$ are Lagrange multipliers that enforce~\eqref{flow_conservation2} and~\eqref{eqn:compat-disc-1}, and $\lambda^{\la xy \ra}$ those that enforce~\eqref{eqn:compat-disc-2}. The Lagrange dual function is then
\begin{align}
	g(\psi_{ij},\lambda) = \sup_{v_{ij}} L[v_{ij},\psi_{ij},\lambda]\ .
\end{align}
As in the proof of Theorem~\ref{thm17}, we introduce $d^\star = \inf g(\psi_{ij},\lambda)$. By Slater's theorem, we again see that strong duality holds, as the primal constraints are strictly satisfied for the choice $v^\xy_{ij} = 0$. Thus, $p^\star = d^\star$, and we have reduced our primal objective~\eqref{disc_primal} to solving for $d^\star$.

By rewriting the $i\neq j$ sum in the first term of \eqref{eqn:lagrangian-discr} as two sums over $i<j$ and $j<i$, and interchanging the $i,j$ summation indices in the second sum, \eqref{eqn:lagrangian-discr} can be simplified to
\begin{equation}\label{eqn:L-simpl}
	L[v_{ij},\psi_{ij},\lambda] = \sum_{\xy\in E} \lsb \sum_{{i}<{j}}^n \lb \psi_{{i}{j}}^{y} - \psi_{{i}{j}}^{x} \rb v_{{i}{j}}^{\xy} + \lambda^\xy\lb  c^{\xy} - \sum_{{i}<{j}}^n | v_{{i}{j}}^\xy | \rb \rsb\ .
\end{equation}
Note that here we have introduced the boundary values
\begin{equation}\label{psibc}
-\psi_{{i}{j}}^{i} = \psi_{{i}{j}}^{j} = 1\ ,
\end{equation}
which should not be confused with the adjustable Lagrange multipliers $\psi_{ij}^x$. For fixed $\{\psi_{ij}^x\}$, we can always choose $v_{ij}^{\la xy\ra}$ so that $\operatorname{sgn}(v_{ij}^{\la xy\ra }) =\operatorname{sgn}(\psi_{ij}^y-\psi_{ij}^x)$.
Hence,
\begin{equation}
\begin{aligned}
	g(\psi_{ij},\lambda) &= \sup_{v_{ij}} \sum_{\xy\in E} \lsb \sum_{{i}<{j}}^n \left| \psi_{{i}{j}}^{y} - \psi_{{i}{j}}^{x} \right| | v_{{i}{j}}^{\xy} | + \lambda^\xy\lb c^{\xy} - \sum_{{i}<{j}}^n | v_{{i}{j}}^\xy | \rb \rsb \\
&= \sup_{v_{ij}} \sum_{\xy\in E} \lsb \sum_{{i}<{j}}^n \lb \left| \psi_{{i}{j}}^{y} - \psi_{{i}{j}}^{x} \right| - \lambda^\xy \rb | v_{{i}{j}}^{\xy} | + \lambda^\xy c^{\xy} \rsb\ .
\end{aligned}
\end{equation}
We observe that this is finite if and only if
\be
\left| \psi_{{i}{j}}^{y} - \psi_{{i}{j}}^{x} \right| - \lambda^\xy \leq 0\quad  \forall\ {i},{j} \in \p\Sigma\ , \quad \xy \in E\ ,
\ee
in which case the supremum is obtained by setting $v_{{i}{j}}^{\xy}=0$ everywhere. Therefore, it follows that
\be
\label{ref1}
d^\star = \inf g(\psi_{ij},\lambda) = \inf_\lambda \sum_{\xy\in E} \lambda^\xy c^{\xy}\ ,
\ee
subject to the edgewise condition
\be
\label{ref2}
\lambda^\xy \geq \max_{{i},{j} \in\p\Sigma} \left| \psi_{{i}{j}}^{y} - \psi_{{i}{j}}^{x} \right|\ .
\ee
This can be compactly written as
\be
d^\star = \min_{\psi_{ij}} \sum_{\xy\in E} c^\xy \max_{{i},{j}\in\p\Sigma} \left| \psi_{{i}{j}}^{x} - \psi_{{i}{j}}^{y} \right|\ ,
\ee
where the minimization over $\psi_{ij}$ is only subjected to the boundary condition~\eqref{psibc}.

We would like to prove that $p^{\star} = d^{\star} = \sum_{i=1}^n S(A_i)$. By the definition of $p^{\star}$, we have $p^{\star} \leq \sum_{i=1}^{n}S(A_i)$. Referring to~\eqref{psibc}, \eqref{ref1}, and~\eqref{ref2}, it suffices to prove that given the constraints
\be
\label{eq216}
	-\psi_{{i}{j}}^{i} =  \psi_{{i}{j}}^{j} = 1\ ,\quad \lambda^\xy \geq \left| \psi_{{i}{j}}^{y} - \psi_{{i}{j}}^{x} \right| \quad\text{for all $i,j\in\p\Sigma$}\ ,
\ee
we have for all $\lambda$
\be\label{dual_goal}
\sum_{\xy\in E} \lambda^\xy c^{\xy} \geq \sum_{i=1}^n S(A_i)\ .
\ee
This is done in the following steps.

First, let $C_{ij}$ be any (\emph{undirected}) path from a vertex $i$ to another vertex $j$. Utilizing~\eqref{eq216}, we see that for the particular case where $i,j \in \p\Sigma$,
\be
\label{equals2}
\sum_{\xy \in C_{{i}{j}}} \lambda^\xy \geq \sum_{\xy \in C_{{i}{j}}} \left| \psi_{{i}{j}}^{y} - \psi_{{i}{j}}^{x} \right| \geq \sum_{\xy \in C_{{i}{j}}} \lb \psi_{{i}{j}}^{y} - \psi_{{i}{j}}^{x} \rb =  \psi^{j}_{{i}{j}} - \psi^{i}_{{i}{j}} = 2\ .
\ee
Now, for any boundary vertex ${i} \in \p\Sigma$, we define the function $f_i$ on the set of vertices by
\begin{align}
	f_{i}(x)= \inf_{C_{ix}}\sum_{\langle uv\rangle \in C_{ix}} \lambda^{\langle uv \rangle}\ .
\end{align}
It immediately follows from~\eqref{equals2} that for any ${i}, {j}\in\p\Sigma$,
\be
\label{sum35}
f_{i}(x) + f_{j}(x) \geq 2\ .
\ee
Furthermore, we observe that by construction
\begin{align}
\begin{split}
	f_i(x) &\leq f_i(y)+\lambda^\xy\ ,
\end{split}
\end{align}
as the left-hand side chooses the path that minimizes the sum of $\lambda^{\la uv \ra}$ over paths from $i$ to $x$, whereas the right-hand side is the sum of $\lambda^{\la uv\ra}$ over a particular path from $i$ to $x$. Likewise, we can also interchange $x$ and $y$ to obtain another inequality (note $\lambda^\xy = \lambda^{\la yx\ra}$ by definition). These two inequalities combine to form the analog of the Hamilton-Jacobi equation from the proof of Theorem~\ref{thm17}:
\begin{equation}\label{absfxy}
| f_{i}(x) - f_{i}(y) | \leq \lambda^\xy\ .
\end{equation}

Finally, similar to the continuous case, we define $R_{i} \subset V$ to be the set of vertices for which $f_{{i}}(x)<1$. Note that given $i,j \in \p\Sigma$ with $i \not=j$, $R_i$ and $R_j$ do not overlap. To prove this, suppose there is a vertex $v \in R_i \cap R_j$. Then
\begin{align}
	f_i(v) + f_j(v) < 2\ ,
\end{align}
contradicting~\eqref{sum35}. If we define the function $g_i:\,  V \to \mathbb R$ such that
\begin{equation}
g_{{i}}(x) =
\begin{cases}
g_{{i}}(x) & x \in R_{{i}} \\
1             & x \in V \setminus R_{{i}}
\end{cases}\ ,
\end{equation}
it is readily apparent that $g_i(j) = 1 - \delta_{i,j}$ for all $j \in \p\Sigma$. We then claim the following inequality holds for $g_i$:
\begin{equation}
\label{equ:f'}
\sum\limits_{{i}=1}^n |g_{{i}}(x) - g_{{i}}(y)| \leq \lambda^{\xy}, \quad \forall \xy \in E\ .
\end{equation}
If this is indeed true, then we have successfully proven~\eqref{dual_goal}:
\begin{equation}
\sum\limits_{\xy \in E} c^{\xy}\lambda_{\xy} \geq \sum \limits_{{i}=1}^n \sum \limits_{\xy \in E} c^{\xy} |g_{{i}}(x) - g_{{i}}(y)|\geq \sum \limits_{{i}=1}^n S(A_i)\ ,
\end{equation}
where the last inequality follows from the same reasoning as the continuous case (with the details again spelled out in Step 3 of Section 3.2 of~\cite{Headrick:2017ucz}).  Alternately, one can write out the convex optimization program for maximizing the flux out of $A_i$ and see that the dual program is given by
\begin{align}
\inf_{g}\sum\limits_{\xy \in E} c^{\xy} |g(x) - g(y)| \quad \text{subject to}\quad g(j) = 1-\delta_{i,j},\, j \in \partial\Sigma\ .
\end{align}

Thus, it only remains for us to verify~\eqref{equ:f'}. For any fixed $\xy \in E$, consider the following cases. If $x,y \in R_i$ for some $i$, then the only term that contributes to the sum in~\eqref{equ:f'} is the $i$th term, and~\eqref{equ:f'} is true by~\eqref{absfxy}. If $x \in R_i$ but $y \notin R_i$, then there are two possibilities. If $y \in R_j$ for some $j \in \p\Sigma\setminus \{i\}$, then~\eqref{equ:f'} becomes
\begin{align}
	(1-f_i(x)) + (1 - f_j(x)) \leq \lambda^{\xy}\ ,
\end{align}
which is true by~\eqref{equals2}. If $y \notin R_j$ for any $j \in \p\Sigma$, then~\eqref{equ:f'} becomes
\begin{align}
	1 - f_i(x) \leq \lambda^{\xy}\ ,
\end{align}
which is true since $\lambda^\xy + f_i(x) \geq f_i(y) \geq 1$ by construction. Lastly, if $x,y \notin R_i$ for any $i \in\p\Sigma$, then the left-hand side of~\eqref{equ:f'} vanishes and is hence trivially true.
\end{proof}

\subsection{Network decomposition}\label{subsec:Network decomposition}

We now discuss network decomposition. Again let $\Sigma = (V,E,c)$ be an undirected network whose boundary $\partial \Sigma$ consists of $n$ components $A_1, \ldots, A_n$. The following definitions will be useful.

\begin{defn} 
\begin{enumerate}
\item The \emph{entropy ray} $R(\Sigma)$ associated with $\Sigma$ is the ray generated by the \emph{entropy vector}
\begin{equation}
\left( S(A_1), \dots, S(A_1A_2),\dots,S(A_1,\dots,A_n)  \right)\  ,
\end{equation}
where each entry $S(A_{i_1}\cdots A_{i_k})$ is the entropy of the boundary set~$A_{i_1}\cdots A_{i_k}$, i.e. the maximal flux out of $A_{i_1}\cdots A_{i_k}$.
\item A network (or geometry) $\Sigma$ \emph{realizes} a ray $R$ if $R(\Sigma) = R$.
\end{enumerate}
\end{defn}

Note that if a vector $R$ is realized by a network (geometry), then any positive scalar multiple of $R$ is also realized by the same network (geometry) up to scaling the capacities (metric).

We will sometimes say that a network is of a certain type (Bell pair, perfect tensor, etc.) if it realizes an entropy ray of that particular type of entanglement.

\begin{defn}[Subnetworks and subnetwork decomposition]
A subnetwork of $\Sigma = (V,E,c)$ is a network $(V,E,c_1)$ such that $c_1 \leq c$ on all edges. We say $\Sigma$ decomposes into subnetworks $\Sigma_{1},\dots,\Sigma_m$ for $\Sigma_i = (V,E,c_i)$ if
\be
\sum_{i=1}^m c_{i}^e = c^e\ , \quad \forall e \in E\ .
\ee
\end{defn}

Two vertices in a network $\Sigma$ are \emph{connected by} $\Sigma$ if there is a path of nonzero capacities between them.

\begin{theorem}[Tripartite network decomposition]\label{thm4}
An arbitrary network $\Sigma$ with three boundary vertices $A_1, A_2, A_3$ decomposes into three subnetworks $\Sigma_{ij}$, $1 \leq i < j \leq 3$, such that
\begin{align}
\label{equ:subnetwork decomposition}
S_{\Sigma_{ij}}(A_i) = S_{\Sigma_{ij}}(A_j) = \frac{1}{2}I(A_i:A_j)\ , \quad \text{and} \quad S_{\Sigma_{ij}}(A_k) = 0\ , \quad k \neq i,j\ ,
\end{align}
where $I(A_i:A_j) = S_{\Sigma}(A_i) + S_{\Sigma}(A_j) - S_{\Sigma}(A_i A_j)$ is the mutual information between $A_i$ and $A_j$ on $\Sigma$. In particular, this implies that $\Sigma_{ij}$ connects only $A_i$ and $A_j$, that $\sum_{i<j} S_{\Sigma_{ij}}(A_k) = S_\Sigma(A_k)$ for every $k$,
and that the $A_i:A_j$ mutual information computed on $\Sigma_{ij}$ equals that computed on~$\Sigma$.
\end{theorem}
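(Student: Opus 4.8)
The plan is to read the three subnetworks off a max multiflow for $n=3$, giving $\Sigma_{ij}$ (essentially) the capacity used by the component flow $v_{ij}$, and then to parcel out the slack in the norm bound so that the capacities add up to $c$ exactly without inflating any $S_{\Sigma_{ij}}(A_i)$. Concretely, I would first invoke the discrete max multiflow theorem (Theorem~\ref{thm2}) with $n=3$ to get a multiflow $\{v_{12},v_{13},v_{23}\}$ for which $v_i:=\sum_j v_{ij}$ is a max flow on $A_i$. Let $\phi_{ij}$ be the flux of $v_{ij}$ out of $A_i$. By~\eqref{eqn:compat-disc-1}, $v_{ij}$ has vanishing flux on the third region, and flow conservation gives $\phi_{ij}=\phi_{ji}$ and (flux of $v_{ij}$ out of $A_j$)~$=-\phi_{ij}$; hence the three max-flow conditions read $\phi_{12}+\phi_{13}=S(A_1)$, $\phi_{12}+\phi_{23}=S(A_2)$, $\phi_{13}+\phi_{23}=S(A_3)$, which solve to $\phi_{ij}=\tfrac12(S(A_i)+S(A_j)-S(A_k))$. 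The one substantive point here is that on an \emph{undirected} network $S(A_iA_j)=S(A_k)$ — a vertex partition realizing a cut for $A_iA_j$ realizes one of the same value for its complement $A_k$, so their min cuts coincide — whence $\phi_{ij}=\tfrac12 I(A_i:A_j)$, nonnegative by subadditivity. I also record that, since each $A_i$ meets a single edge $\langle i\bar i\rangle$ with $\bar i$ internal (the standing assumption of this section; if it fails one first attaches a pendant internal vertex to each boundary vertex, which changes no entropy), one has $|v_{ij}^{\langle i\bar i\rangle}|=\phi_{ij}$ and $v_{ij}^{\langle k\bar k\rangle}=0$ for $k\ne i,j$.

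Next I would build the $\Sigma_{ij}=(V,E,c_{ij})$. On the edge $\langle k\bar k\rangle$ put $c_{ij}^{\langle k\bar k\rangle}:=|v_{ij}^{\langle k\bar k\rangle}|$ when $k\in\{i,j\}$ and $:=0$ when $k\notin\{i,j\}$; the surplus $w_k:=c^{\langle k\bar k\rangle}-S(A_k)\ge 0$ on that edge is added to exactly one of the two subnetworks meeting $A_k$, following the cyclic rule $w_1\mapsto\Sigma_{12}$, $w_2\mapsto\Sigma_{23}$, $w_3\mapsto\Sigma_{13}$. On every bulk edge $e$, set $c_{ij}^e:=|v_{ij}^e|$ and dump the remaining nonnegative slack $c^e-\sum_{i<j}|v_{ij}^e|$ into $\Sigma_{12}$ (the placement is immaterial). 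Then $\sum_{i<j}c_{ij}^e=c^e$ on every edge, all $c_{ij}\ge 0$, and in $\Sigma_{ij}$ the region $A_k$ ($k\ne i,j$) has its only edge at capacity $0$, hence is isolated: $S_{\Sigma_{ij}}(A_k)=0$, and no boundary vertex other than $A_i,A_j$ is connected by $\Sigma_{ij}$.

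It remains to check $S_{\Sigma_{ij}}(A_i)=S_{\Sigma_{ij}}(A_j)=\phi_{ij}$. For the lower bounds, $v_{ij}$ and $-v_{ij}$ are legal flows on $\Sigma_{ij}$ (since $|v_{ij}^e|\le c_{ij}^e$ everywhere) with flux $\phi_{ij}$ out of $A_i$, resp.\ $A_j$. For the upper bounds, the cyclic rule guarantees that exactly one of $A_i,A_j$ — say $A_j$ — received no added slack in $\Sigma_{ij}$; then the cut isolating $A_j$ has value $c_{ij}^{\langle j\bar j\rangle}=\phi_{ij}$, while for $A_i$ the partition with $V_2=\{A_j,A_k\}$ has crossing edges exactly $\langle j\bar j\rangle$ (capacity $\phi_{ij}$) and $\langle k\bar k\rangle$ (capacity $0$), so its cut has value $\phi_{ij}$. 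By max-flow/min-cut this gives $S_{\Sigma_{ij}}(A_i)=S_{\Sigma_{ij}}(A_j)=\phi_{ij}=\tfrac12 I(A_i:A_j)$, which is~\eqref{equ:subnetwork decomposition}. The three stated consequences then follow at once: $\Sigma_{ij}$ connects only $A_i,A_j$; $\sum_{i<j}S_{\Sigma_{ij}}(A_k)=\sum_{j\ne k}\phi_{kj}=S_\Sigma(A_k)$; and since $A_k$ is isolated in $\Sigma_{ij}$ we get $S_{\Sigma_{ij}}(A_iA_j)=S_{\Sigma_{ij}}(A_k)=0$, so the $A_i:A_j$ mutual information computed on $\Sigma_{ij}$ is $2\phi_{ij}=I_\Sigma(A_i:A_j)$.

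The main obstacle is precisely making the decomposition \emph{exhaustive}: Theorem~\ref{thm2} only yields $\sum_{i<j}|v_{ij}^e|\le c^e$, and returning the surplus carelessly — onto the edge $\langle i\bar i\rangle$, or onto a bulk edge short-cutting $A_i$ to $A_j$ — can raise $S_{\Sigma_{ij}}(A_i)$. What rescues the argument is that the matching upper-bound cuts are supported entirely on the boundary edges $\langle k\bar k\rangle$ (never on bulk edges), together with the cyclic rule, which ensures that in each $\Sigma_{ij}$ one of the two relevant boundary edges is pinned at its flow value; everything else is bookkeeping.
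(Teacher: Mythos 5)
Your proposal is correct in the setting the paper actually works in, and its core is the same as the paper's: invoke Theorem~\ref{thm2} with $n=3$, solve the three saturation equations (using $S(A_iA_j)=S(A_k)$) to get $\phi_{ij}=\tfrac12 I(A_i:A_j)$, and build $\Sigma_{ij}$ from the capacities $|v_{ij}^e|$, with the only real work being the return of the unused capacity. Where you genuinely differ is in that last step. The paper dumps \emph{all} residual capacity (boundary and bulk edges alike) into $\Sigma_{12}$ and certifies the entropies by a superposition argument: max flows on the three subnetworks add to a flow on $\Sigma$, so $S_\Sigma(A_1)\ge S_{\Sigma_{12}}(A_1)+S_{\Sigma_{13}}(A_1)+S_{\Sigma_{23}}(A_1)$, and since $\pm v_{13}$, $\pm v_{23}$ already give lower bounds $\tfrac12 I(A_1:A_3)$, $\tfrac12 I(A_2:A_3)$, every remaining value in \eqref{equ:subnetwork decomposition} is forced from above; in particular the ``careless assignment'' you flag as the main obstacle is in fact harmless, since the other two subnetworks lock in their shares of each $S_\Sigma(A_i)$, and this bound does not care where the slack sits or whether cuts live on boundary edges. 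Your cyclic rule plus explicit boundary-edge min cuts is a perfectly valid, more constructive certification, but it leans on the section's standing assumption that each boundary vertex meets exactly one edge into the interior. The one shaky point is your parenthetical claim that this assumption is removable by attaching a pendant internal vertex ``which changes no entropy'': the augmented network indeed has the same entropies, but a decomposition of the augmented network does not automatically restrict to a decomposition of the original one with the required properties, because the multiflow component $v_{ij}$ may route through the internal copy of $A_k$, so after pulling back, $\Sigma_{ij}$ can have positive capacities at $A_k$ and hence $S_{\Sigma_{ij}}(A_k)>0$. This aside is not needed for the theorem as stated in the paper's context, but if you want literally arbitrary networks, the superposition bound above is the argument that survives unchanged (it needs only feasibility of $\pm v_{ij}$ and additivity of capacities), whereas the boundary-edge cut argument does not.
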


\begin{proof}
From Theorem~\ref{thm2}, there exists a multiflow $\{v_{ij}\,:\, 1 \leq i \neq j \leq 3\}$ such that $v_{ij} = -v_{ji}$ and for each $i$, $\sum_{j \neq i} v_{ij}$ is a max flow on $A_i$. Hence, we have
\begin{align}
\label{somemaxflows}
\begin{split}
S(A_1) &= S(A_1; v_{12}) + S(A_1; v_{13})\ ,\\
S(A_2) &= S(A_1; v_{12}) + S(A_2; v_{23})\ ,\\
S(A_3) &= S(A_1; v_{13}) + S(A_2; v_{23})\ .
\end{split}
\end{align}
It follows that
\begin{align}
S(A_i; v_{ij}) = \frac{1}{2}I(A_i:A_j)\ ,\quad i < j\ .
\end{align}
Also, by definition, we have
\begin{align}
S(A_k; v_{ij}) = 0\ , \quad \text{and} \quad S(A_j \; v_{ij}) = -S(A_i ; v_{ij})\ , \quad k \neq i,j\ .
\end{align}

For $i<j$, define the subnetwork $\Sigma_{ij} = (V,E,c_{ij})$ such that $c_{ij}^e = |v_{ij}^e|$ for all $e \in E$. By noting that $v_{ij}$ can be viewed as a fully saturated flow on $\Sigma_{ij}$, it is clear that the subnetworks $\Sigma_{ij}$ satisfy the condition in \eqref{equ:subnetwork decomposition}. However, the three subnetworks $\Sigma_{ij}$ do not necessarily add up to $\Sigma$. There may still be a residual capacity $c^e - \sum_{i<j} c^e_{ij}$ on each edge $e$ of $\Sigma$. To solve the issue, we simply append this capacity to one of the $\Sigma_{ij}$, say $\Sigma_{12}$. This does not change the maximal fluxes on $\Sigma_{12}$ due to \eqref{somemaxflows}. For instance, if $S_{\Sigma_{12}}(A_1) > I(A_1:A_2)/2$, then
\begin{align}
S(A_1) \geq S_{\Sigma_{12}}(A_1) + S_{\Sigma_{13}}(A_1) > S(A_1)\ ,
\end{align}
which is absurd.
\end{proof}

\begin{conj}[Four-partite network decomposition]\label{discconjn4}
An arbitrary network $\Sigma$ with four boundary vertices $A_1, \ldots, A_4$ decomposes into six pairwise subnetworks $\Sigma_{ij}$, $1 \leq i<j\leq 4$, together with a remainder subnetwork $\Sigma_r$. The subnetworks obey the following properties:
\begin{enumerate}
\item $\Sigma_{ij}$ connects only $A_i$ and $A_j$, and has fluxes $I(A_i:A_j)/2$ on $A_i$. In other words,
\begin{align}
S_{\Sigma_{ij}}(A_i) = S_{\Sigma_{ij}}(A_j) = \frac{1}{2}I(A_i:A_j)\ , \quad \text{and} \quad S_{\Sigma_{ij}}(A_k) = 0\ , \quad k \neq i,j\ .
\end{align}
The right-hand side condition implies that $S_{\Sigma_{ij}}(A_i A_j) = 0$, so the $A_i:A_j$ mutual information computed on $\Sigma_{ij}$ equals that computed on~$\Sigma$.
Moreover, the tripartite information calculated on $\Sigma_{ij}$ vanishes.
\item $\Sigma_r$ is a four-partite perfect tensor network with the same tripartite information as $\Sigma$.
In other words, all pairwise mutual informations vanish and
\begin{equation}
S_{\Sigma_r}(A_i) = \frac{1}{2}S_{\Sigma_r}(A_i A_j) = \frac{-I_3}{2}\ ,
\end{equation}
where $-I_3 = -I_3(A_1:A_2:A_3)$ is the tripartite information calculated on $\Sigma$.
\end{enumerate}
Properties 1 and 2 together imply that the subsystem entropies of the seven subnetworks add up to those of $\Sigma$, i.e.
$\sum_{i<j} S_{\Sigma_{ij}}(s) + S_{\Sigma_r}(s) = S_\Sigma(s)$ for every $s \subset \{A_i\}$.
\end{conj}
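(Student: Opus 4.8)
\emph{Proposed approach.} The plan is to imitate the proof of Theorem~\ref{thm4}, extracting the pairwise pieces one at a time, the new difficulty being that for four boundary vertices the thread counts are no longer fixed by the entropies: the system~\eqref{n=4} is underdetermined (contrast the exactly-determined system~\eqref{n=3}), so a single max multiflow does not by itself select the decomposition, and the perfect-tensor remainder must be produced separately. The key step would be a \emph{bipartite extraction lemma}: for an arbitrary undirected network $\Sigma$ and a pair $A_i,A_j\in\partial\Sigma$, there is a subnetwork $\Sigma_{ij}=(V,E,c_{ij})$ with $0\le c_{ij}\le c$ that connects only $A_i$ and $A_j$, with $S_{\Sigma_{ij}}(A_i)=S_{\Sigma_{ij}}(A_j)=\frac12 I(A_i:A_j)$, and such that the residual network $\Sigma'=(V,E,c-c_{ij})$ satisfies $S_{\Sigma'}(T)=S_{\Sigma}(T)-\frac12 I(A_i:A_j)$ whenever the boundary set $T$ contains exactly one of $A_i,A_j$, and $S_{\Sigma'}(T)=S_{\Sigma}(T)$ otherwise. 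The construction of $\Sigma_{ij}$ would use the network version of Theorem~\ref{nmmf} (valid by the locking theorem~\cite{Karzanov1978}) for the partition consisting of the singletons together with the composite region $\{A_i,A_j\}$: the component $v_{ij}$ of such a nested max multiflow has flux exactly $\frac12 I(A_i:A_j)$ out of $A_i$ and zero out of every other boundary vertex, reproducing the continuum identity $v(A_i:A_j)=v_{ij}$ of Subsection~\ref{sec:multiflows}, and one would further arrange it to be supported inside the homology region of a min cut for $A_iA_j$ (the discrete counterpart of confining $v_{AB}$ to $r(AB)$). Setting $c_{ij}^e=|v_{ij}^e|$ then yields a fully saturated flow on $\Sigma_{ij}$, giving the first two conclusions, and the localization is what should give the last one.

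Granting the lemma, the theorem follows by six successive extractions: remove $\Sigma_{12}$ from $\Sigma$, then $\Sigma_{13}$ from the residual, and so on through $\Sigma_{34}$. A short computation with the lemma's entropy identities shows each step drives the corresponding pairwise mutual information to zero while leaving the other five pairwise mutual informations and the tripartite information unchanged; in particular every subsequent $\Sigma_{ij}$ still computes the \emph{original} mutual information $I(A_i:A_j)$. After all six steps the residual network $\Sigma_r=(V,E,c-\sum_{i<j}c_{ij})$ has nonnegative capacities and all six pairwise mutual informations equal to zero. Since MMI holds on networks (the discrete analogue of~\eqref{MMIproved}), $-I_3(\Sigma_r)\ge0$, and a network whose entropy vector has all of the coordinates~\eqref{vecbasis} vanishing except $-I_3$ lies on the perfect-tensor extremal ray, so $\Sigma_r$ realizes~\eqref{PT_cond}; the identity $S_{\Sigma}(A_i)=\sum_{j\neq i}\frac12 I(A_i:A_j)+\frac12(-I_3)$ then forces $S_{\Sigma_r}(A_i)=-I_3/2$ and $S_{\Sigma_r}(A_iA_j)=-I_3$ with $-I_3=-I_3(\Sigma)$. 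The additivity of all subsystem entropies, $\sum_{i<j}S_{\Sigma_{ij}}(s)+S_{\Sigma_r}(s)=S_{\Sigma}(s)$, drops out because entropy is always superadditive under a subnetwork split and the lemma makes each split tight. Finally, property~1 of the conjecture (vanishing tripartite information on each $\Sigma_{ij}$) is automatic, since a network connecting only two boundary vertices realizes a Bell-pair ray, for which $-I_3=0$.

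\paragraph{Main obstacle.} The hard part --- and presumably the reason this is stated as a conjecture rather than proved --- is the clause of the bipartite extraction lemma controlling the entropies of \emph{composite} boundary sets, i.e.\ the assertion that superadditivity of entropy over the split $c=c_{ij}+(c-c_{ij})$ is tight for \emph{every} boundary set, not merely for the singletons, which is all the max multiflow sum rule supplies for free. In the continuum this tightness is exactly what the homology-region confinement of $v_{AB}$ guarantees; on a general network the analogous statement that $v_{ij}$ can be localized to a min-cut homology region for $A_iA_j$, and that removing the corresponding capacity shifts every composite min cut by precisely $\frac12 I(A_i:A_j)$ or not at all, is plausible but does not follow directly from the multiflow or locking theorems and seems to require a genuinely new argument. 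A secondary, much milder point is to check that the outcome is independent of the order in which the pairs are extracted; once the lemma is available this follows from the entropy identities and the uniqueness of the extremal-ray decomposition in the four-party holographic entropy cone.
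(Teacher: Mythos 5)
There is nothing in the paper to compare your argument against: the statement you are addressing is Conjecture~\ref{discconjn4}, which the paper explicitly does \emph{not} prove. The authors state that Theorem~\ref{thm2} is not sufficient to establish it and that their only support is numerical evidence from explicit network examples. So the relevant question is whether your proposal closes the gap the authors could not, and it does not --- as you yourself candidly acknowledge in your ``main obstacle'' paragraph.

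Concretely, the unproven step is the composite-set clause of your bipartite extraction lemma, and everything else hangs on it. The singleton part is fine and mirrors the paper's proof of Theorem~\ref{thm4}: the discrete analogue of Theorem~\ref{nmmf} (via the locking theorem) gives a multiflow with $F_{ij}=\frac12 I(A_i:A_j)$ for the chosen pair, setting $c_{ij}^e=|v_{ij}^e|$ gives $S_{\Sigma_{ij}}(A_i)\ge\frac12 I(A_i:A_j)$, and superadditivity of max flow under a capacity split (flows on the pieces add to a flow on $\Sigma$) forces equality and also pins down $S_{\Sigma'}(A_i)$ on the residual. But for a composite set $T$ containing exactly one of $A_i,A_j$, superadditivity only yields $S_{\Sigma'}(T)\le S_{\Sigma}(T)-\frac12 I(A_i:A_j)$; the matching lower bound requires exhibiting a flow of that value on the residual, and neither the max multiflow theorem nor the locking theorem supplies one for all composite sets simultaneously (note that the relevant pair sets such as $A_iA_j$ and $A_iA_k$ cross, and the paper's footnote already warns that even two pair-set saturations fail in general in the continuum). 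Worse, your scheme needs this control to survive six successive extractions, since the final claims --- that $\Sigma_r$ has \emph{all} pairwise mutual informations zero and $-I_3(\Sigma_r)=-I_3(\Sigma)$, and that the subsystem entropies add for every $s\subset\{A_i\}$ --- are statements about composite entropies of the remainder, exactly the quantities the multiflow sum rules do not determine (the system~\eqref{n=4} is underdetermined, as you note). Your appeal to the discrete MMI and the extremal-ray structure of the four-party cone is only decisive once those composite entropies of $\Sigma_r$ are known, so it cannot substitute for the missing lemma. In short: your proposal is a reasonable strategy and correctly isolates the crux, but it is a conditional argument resting on an unproved lemma, consistent with the statement's status in the paper as a conjecture rather than a theorem.
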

Although Theorem~\ref{thm2} is not sufficient to prove Conjecture~\ref{discconjn4}, we have numerical evidence for it in the form of direct computations for some network examples. Furthermore, it is in fact not difficult to state a decomposition conjecture for arbitrary number of boundary vertices.

\begin{conj}[Arbitrary $n$ network decomposition]\label{nconedeconj}
A network with $n\geq 2$ boundary vertices decomposes into subnetworks each of which are realizations of extremal rays of the $n$-boundary region holographic entropy cone.
\end{conj}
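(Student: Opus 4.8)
The plan is to argue by induction on $n$, peeling off one extremal-ray subnetwork at a time and showing that what remains is again a network whose entropy vector lies in the holographic entropy cone. The base case $n=2$ is trivial: the network realizes a single Bell-pair ray $\tfrac12 I(A_1:A_2)$ times the elementary vector, so it is its own decomposition. For the inductive step the first move is exactly the one carried out for $n=3$ in Theorem~\ref{thm4}: invoke the discrete max multiflow theorem (Theorem~\ref{thm2}) to get a multiflow $\{v_{ij}\}$ for which each $\sum_{j} v_{ij}$ is a max flow on $A_i$, set $c_{ij}^e := |v_{ij}^e|$, and let $\Sigma_{ij} := (V,E,c_{ij})$. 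As in the proof of Theorem~\ref{thm4}, each $\Sigma_{ij}$ realizes the Bell-pair ray between $A_i$ and $A_j$ with value $\tfrac12 I(A_i:A_j)$, and because $v_{ij}$ is simultaneously a fully-saturated flow on $\Sigma_{ij}$ and contributes exactly to the max flows on $A_i$ and $A_j$, the residual network $\Sigma_r := (V,E,c-\sum_{i<j}c_{ij})$ has $S_{\Sigma_r}(A_i)=S_\Sigma(A_i)-\sum_{j\neq i}\tfrac12 I(A_i:A_j)$ — precisely the ``star-graph'' part of the skeleton decomposition — so all pairwise mutual informations of $\Sigma_r$ vanish. Residual capacity left over after extracting the $\Sigma_{ij}$ can be absorbed into one Bell-pair piece without changing its ray, since superadditivity of max flows under subnetwork decomposition turns any increase of a flux into a contradiction with $S_\Sigma(A_i)\geq\sum_k S_{\Sigma_k}(A_i)$, exactly as in the last paragraph of the proof of Theorem~\ref{thm4}.

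Thus the problem reduces to: a network with $n$ boundary vertices and all pairwise mutual informations zero decomposes into subnetworks realizing the non-bipartite extremal rays. For $n=3$ the residual is trivial and we are done; for $n=4$ this reduced statement is exactly Conjecture~\ref{discconjn4}, i.e.\ that such a residual is a single perfect-tensor network. The natural tool to attempt here is a stronger locking-type result: Karzanov's locking theorem~\cite{Karzanov1978} already shows that in the graph setting one can simultaneously lock any \emph{two} of the pair-groupings $A_iA_j\,|\,A_kA_l$, and the hope is to iterate such statements to extract perfect-tensor subnetworks by the same template — take a multiflow that locks the appropriate family of region-groupings, read subnetwork capacities off the component flows, check the extracted piece realizes the intended extremal ray, and absorb the residual capacity into one piece. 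For $n\geq 5$ one would need to apply this recursively: after removing the Bell-pair layer, remove a layer of four-party PTs and six-party ``merged'' PTs, and so on up the list of extremal rays.

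The main obstacle is that no such universal locking theorem is known, and for $n\geq 5$ there is good reason to doubt a purely flow-based argument can succeed. Already for $n=4$ the max multiflow theorem alone does not suffice (cf.\ the remark after~\eqref{u1saturate}), so one genuinely needs locking input, and for $n\geq 5$ the decomposition of an entropy vector into extremal rays is \emph{not} unique — the induction must commit to a target decomposition before extracting anything, and which choices are simultaneously realizable at the network level is unclear. Most seriously, for $n\geq 6$ the extremal rays are no longer all perfect tensors: new entanglement structures appear~\cite{Bao:2015bfa,Cuenca:2019uzx}, and these are not obviously ``flow-like'' in the way Bell pairs and PTs are (for a PT the entropy saturations are witnessed by explicit nested flows; for the exotic extremal rays no such description is known), so extracting them as subnetworks seems to require genuinely new combinatorial input — either a locking theorem for arbitrary crossing families of terminal groupings, or an inductive characterization of the extremal rays allowing one to peel them off recursively (e.g.\ by merging boundary vertices, under which an $n$-party extremal ray degenerates to lower-party data, and lifting a known lower-$n$ decomposition back up). I expect the $n\geq 6$ case to be the true bottleneck, with even $n=5$ demanding a careful combinatorial analysis that goes well beyond what is established in this paper.
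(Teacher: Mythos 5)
This statement is a conjecture in the paper, not a theorem: the authors give no proof, only the observations that one direction is immediate (all extremal rays are realizable by networks), that the difficulty is showing no other building blocks are needed, and that even the four-boundary case (Conjecture~\ref{discconjn4}) is supported only by numerical evidence. So there is no paper proof to compare yours against, and your proposal --- which is explicitly a plan with acknowledged missing ingredients (a universal locking-type theorem, a way to handle the non-unique ray decompositions at $n\geq 5$, and the non-perfect-tensor extremal rays at $n\geq 6$) --- does not close the gap either. That part of your self-assessment is accurate.

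However, there is a concrete error in the one layer you present as already settled. For $n\geq 4$, extracting the Bell-pair subnetworks by taking a max multiflow from Theorem~\ref{thm2} and setting $c_{ij}^e=|v_{ij}^e|$ does \emph{not} give $S_{\Sigma_{ij}}(A_i)=\tfrac12 I(A_i:A_j)$. One gets $S_{\Sigma_{ij}}(A_i)=S_\Sigma(A_i;v_{ij})$, and for four or more regions these component fluxes are not determined by the entropies: they obey only $S_\Sigma(A_i;v_{ij})\geq \tfrac12 I(A_i:A_j)$ (cf.~\eqref{MIbound}), with a total excess of $-I_3$ that can be routed through any of the pairs. The star graph of Fig.~\ref{fig:xgraph} is a counterexample to your step: a max multiflow there can place one unit of flux on the $A$--$B$ pairing even though $I(A:B)=0$, so your extracted $\Sigma_{AB}$ would not realize the intended Bell-pair ray, the residual would not carry the full perfect-tensor part, and the subsystem entropies of the pieces would not add up to those of $\Sigma$ (which is the content the decomposition must preserve, as spelled out in Conjecture~\ref{discconjn4}). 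This is exactly why the paper states that Theorem~\ref{thm2} alone is insufficient even for $n=4$; the argument of Theorem~\ref{thm4} works only because at $n=3$ the three equations \eqref{somemaxflows} uniquely fix the component fluxes, a linear-algebra accident that fails from $n=4$ onward. Consequently your induction needs unproven locking-type input (in the spirit of Theorem~\ref{nmmf} or \cite{Karzanov1978}, but strong enough to pin all $N_{A_iA_j}$ at $\tfrac12 I(A_i:A_j)$ simultaneously) already at its base layer, not merely at $n\geq 5$ or $n\geq 6$ as your closing paragraph suggests.
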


Note that the cone for $n$ boundary regions has $S_n$ permutation symmetry, and we are not modding out by this symmetry in Conjecture~\ref{nconedeconj}.

One part of the conjecture is immediate: the set of allowed subnetworks must contain realizations of all the extremal rays.
The difficulty comes from the converse, which is showing that the decomposition of an arbitrary network $\Sigma$ requires no other subnetworks beyond those in Conjecture~\ref{nconedeconj}.

For $n=5$, the holographic entropy cone is known~\cite{Bao:2015bfa}. In this case, a network $\Sigma$ should decompose into (at most) $10$~Bell pair subnetworks, $5$~four-partite perfect tensor networks, and $5$~five-partite perfect tensor networks. However, beyond five boundary regions, Bell pair and perfect tensor networks will no longer suffice, as it is known that the higher holographic entropy cones have extremal rays that only admit network realizations of nontrivial topology.

\subsection{\trick}\label{subsec:tricklemma}

We give another proof that $-I_3\ge0$ for discrete networks. The result holds not only for undirected networks, but also more generally for directed networks with rational capacities satisfying a property called {\it \AB}, as defined below. The proof here involves only properties of flows without referring to min-cuts. Thus it is interesting to see if the techniques used here can be generalized to obtain entropy inequalities involving more than four boundary regions~\cite{Bao:2015bfa}.

Except in Corollary~\ref{cor:general_I3} at the end of this subsection, we consider \emph{directed} networks $\Sigma = (V,E,c)$ with boundary $\partial \Sigma$ where the capacity function $c$ is the constant function that assigns $1$ to all edges. Hence, $c$ will often be suppressed from the notation below. Also, we will only consider flows which have values zero or one on all edges. By the Ford-Fulkerson algorithm, a max flow on a boundary subset $A \subset \partial \Sigma$ can always be taken to have such a property. A flow is essentially a collection of edges such that at each non-boundary vertex the collection contains an equal number of incoming edges and outgoing edges.

Let $v$ be any flow on $\Sigma$ and $A \subset \partial \Sigma$ be a boundary subset. Denote by $A^c$ the complement of $A$ in $\partial \Sigma$, and set $n = S_{\Sigma}(A;v)$.  It is not hard to show that $v$ contains $n$ edge-disjoint paths from $A$ to $A^c$.  Conversely, an arbitrary collection of $n$ edge-disjoint paths from $A$ to $A^c$ defines a flow whose flux out of $A$ is $n$. Consequently, the max flow $S(A)$ is equal to the maximum number of edge-disjoint paths from $A$ to $A^c$.

Given a flow $v$ on a network $\Sigma = (V,E)$, the residual network $\Res(v;\Sigma)$ is obtained from $(V,E)$ by reversing the direction of all edges on which $v$ takes non-zero value. By our convention, each edge in $\Res(v;\Sigma)$ still has capacity one.

\begin{lemma}\label{lem:Res}
Let $\Sigma, A, $ and $v$ be as above, then $S_{\Sigma}(A) = S_{\Res(v;\Sigma)}(A) + S_{\Sigma}(A;v)$.
\end{lemma}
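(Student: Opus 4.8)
\textbf{Proof proposal for Lemma~\ref{lem:Res}.}

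The plan is to prove the two inequalities $S_\Sigma(A)\le S_{\Res(v;\Sigma)}(A)+S_\Sigma(A;v)$ and $S_\Sigma(A)\ge S_{\Res(v;\Sigma)}(A)+S_\Sigma(A;v)$ separately, treating flows as collections of edge-disjoint paths as explained just before the statement. Throughout I set $n:=S_\Sigma(A;v)$ and $k:=S_{\Res(v;\Sigma)}(A)$, and I work with integer-valued ($0$/$1$) flows, which is legitimate by Ford-Fulkerson.

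For the inequality $S_\Sigma(A)\ge n+k$, I would start from a max flow $w$ on $A$ in the residual network $\Res(v;\Sigma)$, realized as $k$ edge-disjoint paths from $A$ to $A^c$ using residual edges. The standard augmenting-path construction then applies: superpose $v$ and $w$, cancel any edge that is used by $v$ in one direction and by $w$ in the reversed (residual) direction, and observe that what remains is again a $0$/$1$ flow on $\Sigma$ — flow conservation at interior vertices is preserved by the cancellation, exactly as in the correctness proof of the max-flow algorithm. Its flux out of $A$ is $S_\Sigma(A;v)+S_{\Res(v;\Sigma)}(A;w)=n+k$ (fluxes add because cancellations occur on edge--reversed-edge pairs, which contribute zero net flux across any cut, and in particular across the cut separating $A$). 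Hence $S_\Sigma(A)\ge n+k$.

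For the reverse inequality $S_\Sigma(A)\le n+k$, I would take a max flow $u$ on $A$ in $\Sigma$, with $S_\Sigma(A;u)=S_\Sigma(A)$, and compare it to $v$. Form the ``difference'' of the two flows edge by edge: on each edge keep $u$'s usage minus $v$'s usage, interpreting a negative value as flow along the reversed edge. The resulting object $u-v$ (after the usual cancellation on oppositely-traversed edges) is a valid flow in the residual network $\Res(v;\Sigma)$, since precisely the edges where $v=1$ are the ones that have been reversed, and $u-v$ never tries to push flow along a forward edge already saturated by $v$ beyond capacity $1$. Its flux out of $A$ equals $S_\Sigma(A;u)-S_\Sigma(A;v)=S_\Sigma(A)-n$. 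Being a flow in $\Res(v;\Sigma)$ with flux $S_\Sigma(A)-n$ out of $A$, it is bounded by the max flow there: $S_\Sigma(A)-n\le k$, i.e.\ $S_\Sigma(A)\le n+k$. Combining the two inequalities gives the claim.

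The main obstacle is making the ``flow difference / cancellation'' step fully rigorous for directed networks: one must check that after cancelling oppositely-oriented usage on each edge, the residual object genuinely satisfies both the capacity constraint (values in $\{0,1\}$ in the appropriate direction) and flow conservation at every interior vertex, and — crucially — that it lives in the correct network ($\Sigma$ in the first direction, $\Res(v;\Sigma)$ in the second). This is exactly the bookkeeping underlying the Ford-Fulkerson correctness argument, so it is standard, but since the paper works with directed graphs one should state carefully that an edge of $\Sigma$ and its residual reversal are never used simultaneously after cancellation, so no edge is over-subscribed. Everything else (additivity of flux, the reduction to $0$/$1$ flows, the path decompositions) is routine given the setup already developed in this subsection.
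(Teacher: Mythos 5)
Your proposal is correct and takes essentially the same approach as the paper: the first inequality is proved by the identical superposition-with-cancellation of $v$ with a residual max flow, and your direct construction of the difference flow $u-v$ as a valid flow on $\Res(v;\Sigma)$ is exactly what the paper's second step amounts to. The only cosmetic difference is that the paper packages that second step as an application of the first inequality to the pair $(\tilde\Sigma,\tilde v)$, using $\Res(\tilde v;\tilde\Sigma)=\Sigma$ and $S_{\tilde\Sigma}(A;\tilde v)=-S_\Sigma(A;v)$, which spares it from re-verifying flow validity, whereas you verify it directly.
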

\begin{proof}
Set $\tilde{\Sigma} = \Res(v;\Sigma)$. For each edge $e$ in $\Sigma$, denote by $\tilde{e}$ the corresponding edge in~$\tilde{\Sigma}$. Hence $\tilde{e}$ has the same direction as $e$ if and only if $v^e = 0$. Let $w$ be any flow on~$\tilde{\Sigma}$. We define a  flow $u$ on $\Sigma$ by adding $w$ to $v$, taking the direction of the flows into consideration. Explicitly, set
\begin{align}
u^e &= |v^e - w^{\tilde{e}}|\ .
\end{align}
That is, there is one unit of flow for $u$ on an edge if and only if either $v$ or $w$, but not both, occupies that edge. (When they both occupy an edge, the net result adding them cancels each other.) It can be shown that $u$ is a valid flow on $\Sigma$ and moreover,
\begin{align}
S_{\Sigma}(A;u) &=  S_{\tilde{\Sigma}}(A; w) + S_{\Sigma}(A;v)\ .
\end{align}
By taking $w$ to be a max flow on $A$, we obtain the $\geq$ direction in the lemma,
\begin{align}
S_{\Sigma}(A) &\geq S_{\tilde{\Sigma}}(A) + S_{\Sigma}(A;v)\ .
\end{align}

To prove the other direction, define $\tilde{v}$ to be the flow on $\tilde{\Sigma}$ such that $\tilde{v}^{\tilde{e}} = v^e$. Then it follows that
\begin{align}
S_{\tilde{\Sigma}}(A;\tilde{v}) = - S_{\Sigma}(A;v) \quad\text{and}\quad \Res(\tilde{v};\tilde{\Sigma}) = \Sigma\ .
\end{align}
Hence, by what we have showed above,
\begin{align}
S_{\tilde{\Sigma}}(A) &\geq S_{\Sigma}(A) -S_{\Sigma}(A;v)\ .
\end{align}
\end{proof}

\begin{defn}
A flow $v$ on $\Sigma$ is \emph{reachable} from a boundary set $A \subset \partial \Sigma$ if for any edge $e$ with $v^e \neq 0$, there exists a path contained in $v$ connecting a vertex in $A$ to $s(e)$. 
\end{defn}

\begin{defn}
A network $\Sigma$ is \emph{\AB} if at each non-boundary vertex, the total capacity of incoming edges is no greater than that of outgoing edges.
\end{defn}

If $v$ is a flow on an \AB network $\Sigma$, then $\Res(f;\Sigma)$ is also \AB.

\begin{lemma}[\trick]\label{lem:trick}
Let $\Sigma$ be an \AB graph with a partition $\partial \Sigma = B_1 \sqcup B_2 \sqcup B_3$ such that $S(B_1) = 0$.%
\footnote{Actually, a weaker condition is sufficient, namely, there being no flow from $B_1$ to $B_2$.} Then given any flow $v$ reachable from $B_1$, there exists a flow $\tilde{v}$ extending $v$ such that $\tilde{v}-v$ is a max flow on $B_2$. Furthermore, one can choose $\tilde{v}$ such that $\tilde{v}-v$ is a collection of $S(B_2)$ edge-disjoint paths from $B_2$ to $B_2^c$, and in particular, $\tilde{v}$ is reachable from $B_1 \sqcup B_2$. See Figure~\ref{fig:trick_lemma} (Left) for a schematic picture of $\tilde{v}$.
\end{lemma}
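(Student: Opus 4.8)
The plan is to produce $\tilde v$ by augmenting $v$ with flow routed for $B_2$ in a residual network, controlling the interference between the two pieces via the hypotheses. I would first record the structural fact that the hypotheses push $v$ entirely off $B_2$: since $v$ is reachable from $B_1$, every edge $e$ with $v^e\neq0$ lies on a walk inside $v$ starting in $B_1$, and if such a walk ever reached a vertex of $B_2$ it would exhibit a flow from $B_1$ to $B_2$, which $S(B_1)=0$ (or even the weaker hypothesis of the footnote) forbids. Hence $v$ meets no vertex of $B_2$, so in particular $S_\Sigma(B_2;v)=0$. It is harmless to delete in advance any cycles of $v$ disconnected from $B_1$, which affects neither reachability nor any flux.

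I would then pass to the residual network $\tilde\Sigma=\Res(v;\Sigma)$, which is again \AB because $v$ is a $\{0,1\}$-flow on a unit-capacity graph, so at each internal vertex it reverses equally many incoming and outgoing edges. Lemma~\ref{lem:Res} applied with $A=B_2$, together with $S_\Sigma(B_2;v)=0$, gives $S_{\tilde\Sigma}(B_2)=S_\Sigma(B_2)$. Since all capacities are $1$, a max flow $w$ on $B_2$ in $\tilde\Sigma$ may be taken to be a union of $S(B_2)$ edge-disjoint directed paths from $B_2$ to $B_2^c$ and, after discarding whatever lies off these paths, reachable from $B_2$ in $\tilde\Sigma$. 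Combining $v$ and $w$ by the recipe in the proof of Lemma~\ref{lem:Res}, namely $\tilde v^e:=|v^e-w^{\tilde e}|$ with $\tilde e$ the copy of $e$ in $\tilde\Sigma$, yields a valid flow on $\Sigma$ that extends $v$, with $\tilde v-v=w$ carrying flux $S_{\tilde\Sigma}(B_2;w)+S_\Sigma(B_2;v)=S(B_2)$ out of $B_2$; that is, $\tilde v-v$ is a max flow on $B_2$, in edge-disjoint-path form.

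What remains — and what I expect to be the real obstacle — is to arrange that $\tilde v$ is reachable from $B_1\sqcup B_2$; equivalently, that $w$ uses no reversed (residual) edge, so that $\tilde v=v+w$ is an honest edge-disjoint union and $\tilde v-v$ a genuine family of $S(B_2)$ paths from $B_2$ to $B_2^c$ inside $\Sigma$. Expressed purely in $\Sigma$, the fact needed is $S_{\Sigma\setminus\mathrm{supp}(v)}(B_2)=S_\Sigma(B_2)$, noting that $\Sigma\setminus\mathrm{supp}(v)$ is still \AB. This is precisely where the \AB property and the reachability of $v$ from $B_1$ become indispensable and the argument acquires content: one shows by a cut-uncrossing argument that a minimum cut for $B_2$ can be taken disjoint from $\mathrm{supp}(v)$. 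Indeed, if some $v$-edge lay in a min cut $(V_1,V_2)$ (with $B_2\subseteq V_1$ and $B_1\cup B_3\subseteq V_2$), then $v$'s walks, which run from $B_1\subseteq V_2$ and avoid $B_2$, cross the cut out of $V_1$ and back into $V_1$ equally often; sliding the $V_1$-portions of those walks across the partition replaces the $v$-edges leaving the cut by the edges those portions receive from the rest of $V_1$, and the \AB inequality summed over the (necessarily internal) vertices of those portions is exactly what prevents the cut value from increasing, so iterating produces a min cut avoiding $\mathrm{supp}(v)$. Routing $w$ inside $\Sigma\setminus\mathrm{supp}(v)$ then makes $\tilde v=v+w$ reachable from $B_1\sqcup B_2$ — edges of $v$ are reachable from $B_1$ within $v$, edges of $w$ lie on paths from $B_2$ within $w$ — completing the proof. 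That this last step genuinely fails for arbitrary directed networks, and needs the \AB property, is the same phenomenon underlying the failure of $-I_3\ge0$ there, so I would expect most of the write-up to go into making the uncrossing estimate precise.
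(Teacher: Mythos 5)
Your setup is sound as far as it goes: $v$ indeed avoids $B_2$, the residual network stays \AB, Lemma~\ref{lem:Res} gives $S_{\Res(v;\Sigma)}(B_2)=S_\Sigma(B_2)$, and you correctly identify that everything reduces to the claim $S_{\Sigma\setminus\mathrm{supp}(v)}(B_2)=S_\Sigma(B_2)$, from which $\tilde v=v+w$ finishes the proof. The gap is in the cut-uncrossing argument you offer for that claim. First, a logical point: showing that \emph{some minimum cut of $\Sigma$} for $B_2$ avoids $\mathrm{supp}(v)$ would only give $S_{\Sigma\setminus\mathrm{supp}(v)}(B_2)\le S_\Sigma(B_2)$, which is automatic; what you need is a lower bound on \emph{every} cut of the reduced network (equivalently, you must uncross a minimum cut of $\Sigma\setminus\mathrm{supp}(v)$, not of $\Sigma$). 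Second, and decisively under either reading, the accounting does not close. If $U$ is the set of vertices on the $V_1$-portions of the $v$-walks that you slide into $V_2$, the cut value changes by $c(V_1\setminus U\to U)-c(U\to V_2)$, whereas summing the \AB inequality over $U$ only gives $c(V_1\setminus U\to U)\le c(U\to V_2)+c(U\to V_1\setminus U)-c(V_2\to U)$: the out-capacity guaranteed by \AB may point back into $V_1\setminus U$ rather than across the cut, and the $v$-walks' entry edges from $V_2$ sit on the wrong side of the balance. Concretely, let a slid portion be a single internal vertex $u$ with the $v$-walk entering from $V_2$ and exiting to $V_2$, plus three non-$v$ edges from $V_1\setminus U$ into $u$ and three from $u$ back into $V_1\setminus U$; then \AB holds at $u$ ($4\le 4$), yet the move changes the full cut value by $+3-1=+2$ and the reduced ($v$-free) cut value by $+3$. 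So the step carrying all the content of the lemma is unproved.

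For comparison, the paper closes exactly this point by a constructive rerouting rather than a cut argument: it takes a max flow $w$ for $B_2$ in $\Sigma$ itself as $S(B_2)$ edge-disjoint paths and inductively removes the edges shared with $v$, splicing in a path $q$ from $B_1$ (this is where reachability enters), invoking the \AB property only locally at the single vertex where conservation is temporarily violated in order to find the outgoing type-$(1,0)$ edge that lets it switch onto another path of $w$, and using $S(B_1)=0$ to force the rerouted path to terminate in $B_2^c$. If you want to keep a min-cut route, you would need a sharper uncrossing exploiting the global path structure of $v$ (e.g.\ that the excess out-capacity at the slid vertices can itself be routed onward to $V_2$), and making that precise is essentially the paper's augmentation argument in disguise.
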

\begin{figure}
\centering
\includegraphics[scale=1]{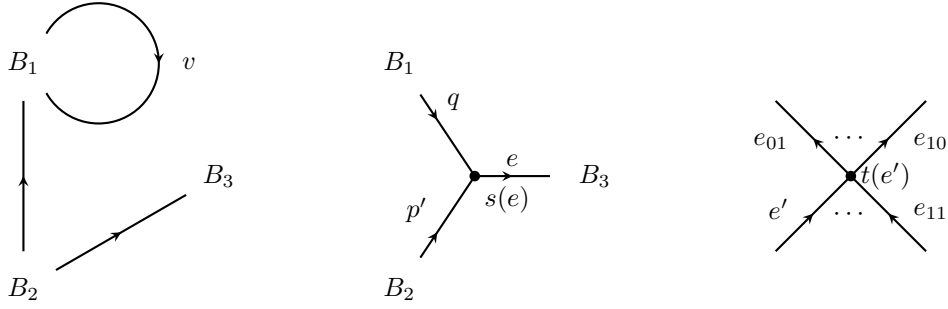}
\caption{(Left) A flow configuration of $\tilde{v}$ in Lemma~\ref{lem:trick}; (Middle) A configuration near $s(e)$ when $e$ is a type $(1,1)$ edge; (Right) A local picture at $t(e')$ when there are no outgoing edges with type $(0,0)$.}\label{fig:trick_lemma}
\end{figure}
\begin{proof}
Let $w$ be a max flow on $B_2$ consisting of $S(B_2)$ edge-disjoint paths from $B_2$ to $B_2^c$. We call an edge $e$ of type $(i,j)$, if $v^e = i, w^e = j$, $i,j = 0,1$. Let $n$ be the number of type $(1,1)$ edges. We construct a flow $\tilde{v}$ satisfying the requirement in the statement of the lemma by induction on $n$. If $n=0$, then $\tilde{v}:= v + w$ is such a flow.

If $n>0$, pick a path $p$ of $w$ such that $e$ is the first edge of $p$ along its direction that has type $(1,1)$. Truncate the path $p$ at the vertex $s(e)$, only keeping the first half from its initial vertex to $s(e)$ and dropping the second half from $w$ (so $w$ temporarily becomes an invalid flow). Update the type of these edges in the second half by subtracting $(0,1)$ from the original types. Denote the first half of $p$, i.e. the remaining part of $p$, by $p'$. Since $v$ is reachable from $B_1$, there exists a path $q$ of $v$ from $B_1$ to $s(e)$. Note that $s(e)$ cannot be a vertex of $B_2$ since otherwise there would be a flow from $B_1$ to $B_2$, contradicting to our assumption. Hence, $p'$ is not empty. See Figure~\ref{fig:trick_lemma} (Middle). Note that at the moment the number of type $(1,1)$ edges is at most $n-1$.

We now proceed in an algorithmic approach.

\vspace{0.3cm}
\noindent {\bf (ENTER):} If $p'$ is a complete path, i.e. it ends at a boundary vertex, {\bf goto} {\bf (EXIT)}. Otherwise, continue in the following two exclusive cases.

\vspace{0.3cm}
\noindent {\bf Case \boldmath{$I$}: there is an outgoing edge of type \boldmath{$(0,0)$} at the end point of \boldmath{$p'$}}. We simply append any outgoing edge of type $(0,0)$ to $p'$ and change its type to $(0,1)$. The augmented path is still denoted by $p'$, and we always consider $p'$ as part of $w$. {\bf goto} {\bf (ENTER)}.

\vspace{0.3cm}
\noindent {\bf Case \boldmath{$II$}: there is no outgoing edge of type \boldmath{$(0,0)$} at the end point of \boldmath{$p'$}}. By construction, any edge on $p'$, and in particular the last edge $e'$ on $p'$, is of type $(0,1)$. Note that at the moment $w$ violates the law of conservation only at the vertex $t(e')$, where there is one more unit of incoming flows than outgoing flows. Since $\Sigma$ is \AB, at the vertex $t(e')$, there must be one incoming edge $e_{11}$ of type $(1,1)$, one outgoing edge $e_{01}$ of type $(0,1)$, and another outgoing edge $e_{10}$ of type $(1,0)$, such that $e_{11}, e_{01}$ are consecutive edges on a path $p''$ of $w$ different from $p'$. See Figure~\ref{fig:trick_lemma} (Right). Now truncate $p''$ at $t(e')$ and append the second half of $p''$ to $p'$ so that $p'$ becomes a complete path ending at $B_2^c$ (still possibly with some type $(1,1)$ edges). Denote the first half of $p''$ by $r$, which contains at least one type $(1,1)$ edge such as $e_{11}$. Find the first type $(1,1)$ edge, say $e''$, along $r$. As before, truncate $r$ at $s(e'')$, throw away the second half of $r$ from $w$ and subtract $(0,1)$ from the type of each edge on the second half, and set $p'$ to be the first half of $r$. Set $q$ to be any path of $v$ connecting $B_1$ to $s(e'')$. Now the number of type $(1,1)$ edges is at most $n-2$ (which means once we have reduced $n$ to $n=1$, then Case $II$ cannot not occur). {\bf goto} {\bf (ENTER)}.

\vspace{0.3cm}
\noindent {\bf (EXIT):} $p'$ must end at $B_2^c$ (in fact, $B_1$), since otherwise the path $q$ combined with all the edges of $p'$ picked up in Case $I$ would form a path from $B_1$ to $B_2$, contradicting the assumption that $S(B_1) = 0$. Now $w$ becomes a valid flow and still consists of $S(B_2)$ edge-disjoint paths, and furthermore, the number of type $(1,1)$ edges is at most $n-1$. The induction follows.

Note that the above procedure will always end up in {\bf (EXIT)} after finitely many steps since the graph is finite and the number of type $(1,1)$ edges always decrease in Case $II$.
\end{proof}

Let $\Sigma$ be an \AB network with a partition $\partial \Sigma = A \sqcup B \sqcup C \sqcup D $. Recall the definition of $-I_3$ in Section~\ref{subsec:background_definitions},
\begin{align}
-I_3(A:B:C):= S(AB) + S(AC)+  S(BC) - S(A) - S(B) - S(C) - S(ABC)\ .
\end{align}
We write $-I_3(A:B:C)$ as $-I_3^{\Sigma}(A:B:C)$ when there is more than one network present.

\begin{theorem}\label{thm:I3}
Let $\Sigma$ be an \AB network with $\partial \Sigma = A \sqcup B \sqcup C \sqcup D $, then $-I_3(A:B:C) \geq 0$.
\end{theorem}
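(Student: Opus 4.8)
The plan is to reduce $\Sigma$ to a convenient normal form by repeatedly passing to residual networks, using Lemma~\ref{lem:Res}, and then to produce the flow we need by two applications of the \trick.

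The key preliminary observation is that $-I_3$ is \emph{invariant} under replacing $\Sigma$ by $\Res(v;\Sigma)$ for \emph{any} flow $v$. Indeed, by Lemma~\ref{lem:Res} applied to each boundary set occurring in $-I_3$, one has $S_{\Res(v;\Sigma)}(X) = S_\Sigma(X) - f(X)$, where $f(X)$ is the flux of $v$ out of $X$; since the flux is additive over boundary vertices and $f(\partial\Sigma) = 0$ by conservation, the $-I_3$ combination of $f$ vanishes identically, and $\Res(v;\Sigma)$ is again \AB. Using this, I would replace $\Sigma$ successively by the residual network with respect to a max flow on $ABC$, then on $AB$, then on $A$ --- a nested chain. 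Each step leaves $-I_3$ unchanged, and since for a boundary region $X$ the condition $S_\Sigma(X)=0$ is equivalent to there being no directed path from $X$ to $X^c$, pumping out a max flow on any $Y\subseteq X$ keeps $S(X)=0$ (the paths of such a flow start in $Y\subseteq X$, hence cannot escape $X$, so the flow has zero flux out of $X$). Running this along the chain $A\subseteq AB\subseteq ABC$, one arrives at a network --- still called $\Sigma$ --- with $S(A)=S(AB)=S(ABC)=0$, so that $-I_3^{\Sigma}(A:B:C) = S(AC)+S(BC)-S(B)-S(C)$; it then remains only to prove $S(AC)+S(BC)\ge S(B)+S(C)$.

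Now I would build the flow. Since $S(A)=0$, the \trick (with partition $\partial\Sigma = A\sqcup B\sqcup CD$, extending the zero flow) yields a flow $v_1$ that is a max flow on $B$ --- a collection of $S(B)$ edge-disjoint paths from $B$ to $B^c$ --- and is reachable from $AB$. Since $S(AB)=0$, a second application of the \trick (partition $\partial\Sigma = AB\sqcup C\sqcup D$, extending $v_1$) yields $v_2$ extending $v_1$ with $v_2-v_1$ a max flow on $C$, edge-disjoint from $v_1$. Thus $v_2$ is an edge-disjoint union of $S(B)$ paths emanating from $B$ and $S(C)$ paths emanating from $C$. Because $S(AB)=0$ forbids any directed path from $B$ to $CD$, all $S(B)$ of the $B$-paths must end in $A$; because $S(ABC)=0$ forbids any directed path from $C$ to $D$, each $C$-path ends in $A$ or in $B$ --- say $t$ of them end in $B$. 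Counting fluxes of the flow $v_2$ by path decomposition: its flux out of $BC$ equals (number of source endpoints in $BC$) minus (number of sink endpoints in $BC$) $= (S(B)+S(C)) - t$, so $S(BC)\ge S(B)+S(C)-t$; and the $t$ paths running from $C$ to $B$, being edge-disjoint, by themselves constitute a flow of value $t$ out of $AC$, so $S(AC)\ge t$. Adding these two inequalities gives $S(AC)+S(BC)\ge S(B)+S(C)$, i.e. $-I_3\ge 0$.

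I expect the main obstacle to be the edge-disjoint coexistence used in the last step: two separately chosen max flows, one on $B$ and one on $C$, need not be edge-disjoint, and it is exactly the \trick --- which lays the $C$-flow down "behind" the region $AB$ without perturbing the already-placed $B$-flow, using $S(AB)=0$ --- that supplies it. A secondary point requiring care is checking that the three residual reductions can be carried out one after another without resurrecting a zeroed entropy, so that the hypothesis $S(B_1)=0$ of the \trick is genuinely available at each use; this is what forces the reductions to be taken along the nested chain $A\subseteq AB\subseteq ABC$ rather than in some other order.
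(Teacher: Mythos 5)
Your proof is correct, and its core is the same as the paper's: reduce to a network in which $S(A)=S(AB)=S(ABC)=0$, so that $-I_3=S(AC)+S(BC)-S(B)-S(C)$, then apply the \trick\ (Lemma~\ref{lem:trick}) with $B_1=AB$, $B_2=C$, $B_3=D$ to lay $S(C)$ edge-disjoint paths down on top of a collection of $S(B)$ paths from $B$ to $A$, observe that the new paths can end only on $A$ or $B$, and conclude by the same endpoint/flux count (your $t$ plays the role of the paper's subflow $\tilde v_B$). The one genuine difference is how you reach the normal form: the paper invokes the nesting property (a flow simultaneously maximizing on $A$, $AB$, and $ABC$, imported from the earlier flow literature) and passes to a single residual network, whereas you perform three successive residual reductions along the nested chain $ABC$, then $AB$, then $A$, checking via Lemma~\ref{lem:Res} and the path-escape argument that each reduction preserves the previously zeroed entropies and that $-I_3$ and the \AB\ property are invariant throughout. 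Your route is self-contained --- it needs nothing beyond Lemma~\ref{lem:Res} and path decompositions --- at the cost of an ordering argument the paper avoids; the paper's route is shorter but leans on the nesting property as an external input. Your first application of the \trick\ (partition $A\sqcup B\sqcup CD$, extending the zero flow) is harmless but unnecessary: since $S(AB)=0$, any max flow on $B$ already consists of $S(B)$ edge-disjoint paths from $B$ to $A$ and is automatically reachable from $AB$, which is all the second application requires.
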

\begin{proof}
Note that for any flow $v$ on $\Sigma$, $S_{\Sigma}(AB;v) = S_{\Sigma}(A;v) + S_{\Sigma}(B;v)$. Combined with Lemma~\ref{lem:Res}, it follows that $-I_3^{\Sigma}(A:B:C) = -I_3^{\Res(v;\Sigma)}(A:B:C)$ for any flow $v$. Hence, it suffices to prove nonnegativity of $-I_3$ for any residual network.  By the nesting property~\cite{Freedman:2016zud, Headrick:2017ucz}, there exists a flow $v$ which is maximal simultaneously on $A$, $AB$, and $ABC$. Hence by Lemma~\ref{lem:Res}, the residual network of $v$ has max flow equal to zero on $A$, $AB$, and $ABC$. Without loss of generality, we may simply assume $S(A) = S(AB) = S(ABC) = 0$ for $\Sigma$, and then $-I_3^{\Sigma} = S(BC) + S(AC) - S(B) - S(C)$.

Let $B_1 = AB$, $B_2 = C$, $B_3 = D$ as in Lemma~\ref{lem:trick}, and let $v$ be a collection of $S(B)$ edge-disjoint paths from $B$ to $A$. This is possible since $S(AB)= 0$ and hence any flow starting from $B$ must end at $A$. Then $v$ is reachable from $B_1$. By Lemma~\ref{lem:trick}, there exists a flow $\tilde{v}$ extending $v$ such that $\tilde{v}-v$ consists of $S(C)$ edge-disjoint paths. Since $S(ABC) = 0$, these paths will end either at $A$ or at $B$, but never at $D$. See Figure~\ref{fig:directed_I3}. Let $\tilde{v}_B$ (resp. $\tilde{v}_A$) be the subflow of $\tilde{v}$ consisting of the paths which end at $B$ (resp. $A$), then $\tilde{v} = \tilde{v}_A +\tilde{v}_B$, and we have
\begin{align*}
S(B) + S(C) &= S_{\Sigma}(B;v) + S_{\Sigma}(C; \tilde{v}-v) \\
            &= S_{\Sigma}(B;v) - S_{\Sigma}(A; \tilde{v}-v) - S_{\Sigma}(B; \tilde{v}-v)\\
            &= S_{\Sigma}(BC;\tilde{v}_A) + S_{\Sigma}(AC;\tilde{v}_B)\\
            &\leq S(BC) + S(AC)\ . \qedhere
\end{align*}
\end{proof}

\begin{figure}
\centering
\includegraphics[scale=1]{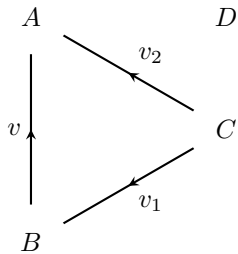}
\caption{A flow configuration resulting from the application of Lemma~\ref{lem:trick}. Here $\tilde{v} = v + v_1 + v_2$, $\tilde{v}_B = v_1$, $\tilde{v}_A = v + v_2$.}\label{fig:directed_I3}
\end{figure}

\begin{coro}\label{cor:general_I3}
Let $\Sigma = (V,E,c)$ be an \AB network with a rational capacity function such that $\partial \Sigma = A \sqcup B \sqcup C \sqcup D $, then $-I_3(A:B:C) \geq 0$. In particular, if $\Sigma$ is an \emph{undirected} network with a rational capacity function, then $-I_3(A:B:C) \geq 0$.
\end{coro}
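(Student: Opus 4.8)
The plan is to reduce the general rational-capacity statement to the unit-capacity case already established in Theorem~\ref{thm:I3}. First I would note that $-I_3(A:B:C)$ is a fixed integer linear combination of the entropies $S(X)$, each of which is a max flow value, so under a uniform rescaling of capacities $c\mapsto Nc$ with $N$ a positive integer, every $S(X)$, and hence $-I_3$, is multiplied by $N$ (by max-flow/min-cut, since each cut value scales by $N$). Since the rational capacities admit a common denominator, choosing $N$ to be such a denominator produces a network $\Sigma'=(V,E,Nc)$ with integer capacities and $-I_3^{\Sigma'}=N\,(-I_3^{\Sigma})$; the \AB inequality $\sum_{t(e)=x}c^e\le\sum_{s(e)=x}c^e$ at each non-boundary vertex $x$ is manifestly invariant under this uniform rescaling, so $\Sigma'$ is again \AB.

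Next I would pass from $\Sigma'$ to a unit-capacity network $\widehat\Sigma$ by replacing each edge $e$, of integer capacity $m_e := Nc^e$, with $m_e$ parallel edges of capacity $1$, keeping the same orientation and endpoints and introducing no new vertices. At every non-boundary vertex the total incoming and total outgoing capacities are unchanged, so $\widehat\Sigma$ is \AB with the constant capacity function $1$. A routine max-flow/min-cut argument then gives $S_{\widehat\Sigma}(X)=S_{\Sigma'}(X)$ for every boundary subset $X$: for any vertex bipartition $(V_1,V_2)$ with $X\subset V_1$ and $\partial\Sigma\setminus X\subset V_2$, the corresponding cut in $\widehat\Sigma$ has value $\sum_{e:\,s(e)\in V_1,\,t(e)\in V_2} m_e$, which is exactly the value of the corresponding cut in $\Sigma'$; hence the minimum cut values, and therefore (by the Ford--Fulkerson theorem) the max flows, coincide. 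Consequently $-I_3^{\widehat\Sigma}(A:B:C)=-I_3^{\Sigma'}(A:B:C)=N\,(-I_3^{\Sigma}(A:B:C))$.

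Applying Theorem~\ref{thm:I3} to $\widehat\Sigma$ yields $-I_3^{\widehat\Sigma}(A:B:C)\ge0$, and dividing by $N>0$ gives $-I_3^{\Sigma}(A:B:C)\ge0$, which proves the first claim. For the undirected case, I would invoke the definitions of Section~\ref{subsec:background_definitions}: an undirected network $\Sigma$ is handled through its directed double $D(\Sigma)$, in which each undirected edge becomes a pair of oppositely oriented edges of the same capacity, so at each non-boundary vertex the total incoming capacity equals the total outgoing capacity and $D(\Sigma)$ is \AB. Since the entropies of $\Sigma$ are by definition those of $D(\Sigma)$, the directed result just proved applies verbatim.

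I do not expect a genuine obstacle here; the only point demanding care is the verification that the parallel-edge expansion preserves all the entropies $S(X)$ (up to the global factor $N$) together with the \AB condition, and this is precisely what the min-cut bookkeeping above provides. Everything else is linearity of $-I_3$ in the entropies and invariance of the \AB inequality under rescaling.
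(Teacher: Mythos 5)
Your argument is correct and is essentially the paper's own proof: clear denominators to get integral capacities (noting everything scales uniformly), split each edge into unit-capacity parallel edges—which preserves the cut values and hence all entropies as well as the \AB condition—apply Theorem~\ref{thm:I3}, and handle the undirected case by passing to the directed double, which is automatically \AB. Your write-up just spells out the bookkeeping (the factor of $N$ and the min-cut comparison) that the paper leaves implicit.
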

\begin{proof}
The extension of nonnegativity of $-I_3$ from a constant capacity function to a rational capacity function is straightforward. One simply chooses an appropriate unit so that the rational function becomes integral, and then one splits every edge into several parallel edges, one for each unit capacity of that edge. The new edges all have capacity $1$. Apparently, the new network with the constant capacity function has the same max fluxes on any boundary as the original network.

If $\Sigma$ is an \emph{undirected} network, then by Section~\ref{subsec:background_definitions}, it can be viewed as a directed network with each edge replaced by a pair of parallel oppositely-oriented edges. Such a network is clearly \AB.
\end{proof}

Finally, we would like to point it out that the condition of being \AB is necessary for the nonnegativity of $-I_3$. Consider for instance the network as shown in Figure~\ref{fig:counter_example} which has three boundary vertices $A, B, C$, and we take $D$ to be empty. Then a straightforward computation shows that $-I_3(A:B:C) = -1$.
\begin{figure}
\centering
\includegraphics[scale=1]{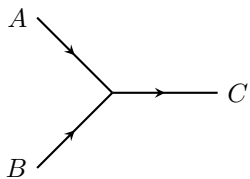}
\caption{A network with $-I_3 < 0$. $S(A) = S(B) = S(AB) = 1$ and all other maximal fluxes are zero.}\label{fig:counter_example}
\end{figure}

\section{Future directions}\label{sec:conclusion}

In this paper, our main goal was to prove the holographic entropy inequality MMI using bit threads. We successfully achieved this by proving Theorem~\ref{thm17}, which is esstentially the continuum generalization of Theorem~\ref{thm2} from graph theory. As far as we are aware, this is the first result concerning multicommodity flows in the setting of Riemannian manifolds. The proof itself may be of interest for many, as it borrows extensively tools from convex optimization. Indeed, using such tools, we were able to provide a novel proof for the old result Theorem~\ref{thm2} in graph theory as well. We hope that such tools can be fruitfully applied in the future to further our understanding of bit threads in holographic systems.

Our work leaves open several directions for further inquiry. The first is to understand the higher holographic entropy-cone inequalities found in~\cite{Bao:2015bfa} in terms of flows or bit threads. The full set of inequalities is known for five parties and conjectured for six; for more than six parties, only a subset of the inequalities is known. However, none of the inequalities beyond MMI follows from the known properties of flows and multiflows, such as nesting and Theorems~\ref{thm17} and~\ref{nmmf}. Therefore, flows must obey some \emph{additional} properties that guarantee those higher inequalities.%
\footnote{For example, one can prove the five-party cyclic inequality from~\cite{Bao:2015bfa} for networks by using similar techniques as presented in this paper by using a \emph{strengthened} version of Theorem~\ref{thm2}, known as the locking theorem (see e.g. \cite{frank1997integer}). Interestingly, the locking theorem does not appear to straightforwardly generalize to Riemannian geometries. (We thank V. Hubeny for pointing this out to us.)\label{locking footnote}}
Among the inequalities proved so far (subadditivity, strong subadditivity, MMI), each one has required a new property.
Clearly this is not very satisfying, and one can hope that there exists a unifying principle governing flows through which the full holographic entropy cone for any number of regions can be understood.

A second set of issues suggested by our work concerns the state decomposition conjecture of Section~\ref{sec:statedecomp}. There are really two problems here. Specifically, it would be useful both to sharpen the conjecture by constraining the possible form of the $1/N$ corrections and to find evidence for or against the conjecture. The simplest non-trivial case to test is the $n=3$ case with the regions $A$ and $C$ arranged so that $I(A:C)=0$ (at leading order in $1/N$). The ansatz~\eqref{threeregion} then simplifies to
\begin{equation}
\ket{\psi}_{ABC} = \ket{\psi_1}_{AB_1}\otimes\ket{\psi_3}_{B_3C} 
\end{equation}
(interpreted suitably, see Section~\ref{sec:statedecomp}).
Wormhole solutions of the kind studied in~\cite{Balasubramanian:2014hda} might be a particularly useful testing ground for these questions, since the corresponding state can be described in terms of a CFT path integral, potentially giving another handle on its entanglement structure.\footnote{We thank D.\ Marolf for useful discussions on this point.}

A third set of questions concerns a possible geometric decomposition of the bulk. These are motivated by our Theorem~\ref{thm4} and Conjecture~\ref{discconjn4} in the network case. Theorem~\ref{thm4} states that a network with three boundary vertices admits a decomposition into three subnetworks, effectively realizing the triangle skeleton diagram of Fig.~\ref{fig:skeleton}. Conjecture~\ref{discconjn4} makes a similar claim for a network with four boundary vertices. It would be very interesting from the point of view of graph theory to prove or disprove this conjecture. It would also be interesting to define an analogous decomposition in the Riemannian setting. Such a decomposition would imply that, for a given decomposition of the boundary, the bulk could be taken apart into building blocks. These would consist of a bridge connecting each pair of regions $A_i$, $A_j$ with capacity $\frac12I(A_i:A_j)$, and in the four-region case a four-way bridge realizing the star graph with capacity $-\frac12I_3$. If such a decomposition of the bulk can be defined and proved to exist, it would mirror the conjectured state decomposition. This would lead to the question of whether these two decompositions are physically related---is the bulk built up out of pieces representing elementary entanglement structures?

Fourth, it would be interesting to explore possible connections between our work and recent conjectures on entanglement of purification in holographic systems~\cite{Takayanagi:2017knl,Nguyen:2017yqw,Bao:2017nhh,Bao:2018gck,Kudler-Flam:2018qjo,Umemoto:2018jpc}. This actually involves two different issues. First, it seems reasonable to suppose that holographic entanglement of purification admits a description in terms of bit threads \cite{Du:2019emy}. Second, one could ask whether the entanglement of purification conjecture has any bearing on our state-decomposition conjecture or vice versa.

Finally, bit threads can be generalized to the covariant setting, where they reproduce the results of the HRT formula~\cite{covariant}. Since the MMI inequality is known to be obeyed by the HRT formula~\cite{Wall:2012uf}, it would be interesting to understand how bit threads enforce MMI in the covariant setting.

We leave all of these explorations to future work.

\section*{Acknowledgements}

We would like to thank David Avis, Ning Bao, Veronika Hubeny, and Don Marolf for useful conversations.
S.X.C. acknowledges the support from the Simons Foundation.
P.H. and M.H. were supported by the Simons Foundation through the ``It from Qubit'' Simons Collaboration as well as, respectively, the Investigator and Fellowship programs.
P.H. acknowledges additional support from CIFAR\@.
P.H. and M.W. acknowledge support by AFOSR through grant FA9550-16-1-0082.
T.H. was supported in part by DOE grant DE-FG02-91ER40654, and would like to thank Andy Strominger for his continued support and guidance.
T.H. and B.S. would like to thank the Okinawa Institute for Science and Technology for their hospitality, where part of this work was completed.
M.H. was also supported by the NSF under Career Award No.~PHY-1053842 and by the U.S. Department of Energy under grant
DE-SC0009987.
M.H. and B.S. would like to thank the MIT Center for Theoretical Physics for hospitality while this research was undertaken.
M.H. and M.W. would also like to thank the Kavli Institute for Theoretical Physics, where this research was undertaken during the program ``Quantum Physics of Information''; KITP is supported in part by the National Science Foundation under Grant No. NSF PHY-1748958.
The work of B.S. was supported in part by the Simons Foundation, and by the U.S. Department of Energy under grant DE-SC-0009987.
M.W. also acknowledges financial support by the NWO through Veni grant no.~680-47-459.
M.W. would also like to thank JILA for hospitality while this research was undertaken.

\end{spacing}

\bibliography{mmi-bib}
\bibliographystyle{spmpsci}

\end{document}